\title{Twin-width III: Max Independent Set, Min Dominating Set, and Coloring}
\titlerunning{Twin-width III: Maximum Independent Set, Min Dominating Set, and Coloring}
\author{\'{E}douard Bonnet}{Univ Lyon, CNRS, ENS de Lyon, Université Claude Bernard Lyon 1, LIP UMR5668, France \and \url{http://perso.ens-lyon.fr/edouard.bonnet/}}{edouard.bonnet@ens-lyon.fr}{https://orcid.org/0000-0002-1653-5822}{}
\author{Colin Geniet}{University of Warsaw}{colin.geniet@ens-paris-saclay.fr}{}{}
\author{Eun Jung Kim}{Universit\'{e} Paris-Dauphine, PSL University, CNRS UMR7243, LAMSADE, Paris, France}{eun-jung.kim@dauphine.fr}{https://orcid.org/0000-0002-6824-0516}{}
\author{St\'{e}phan Thomass\'{e}}{Univ Lyon, CNRS, ENS de Lyon, Universit\'{e} Claude Bernard Lyon 1, LIP UMR5668, France}{stephan.thomasse@ens-lyon.fr}{}{}
\author{R\'{e}mi Watrigant}{Univ Lyon, CNRS, ENS de Lyon, Universit\'{e} Claude Bernard Lyon 1, LIP UMR5668, France}{remi.watrigant@univ-lyon1.fr}{https://orcid.org/0000-0002-6243-5910}{}
\authorrunning{\'E. Bonnet, C. Geniet, E. J. Kim, S. Thomassé, R. Watrigant}
\keywords{Twin-width, Max Independent Set, Min Dominating Set, Coloring, Parameterized Algorithms, Approximation Algorithms, Exact Algorithms}
\newcommand{\convexpath}[2]{
  [   
  create hullcoords/.code={
    \global\edef\namelist{#1}
    \foreach [count=\counter] \nodename in \namelist {
      \global\edef\numberofnodes{\counter}
      \coordinate (hullcoord\counter) at (\nodename);
    }
    \coordinate (hullcoord0) at (hullcoord\numberofnodes);
    \pgfmathtruncatemacro\lastnumber{\numberofnodes+1}
    \coordinate (hullcoord\lastnumber) at (hullcoord1);
  },
  create hullcoords
  ]
  ($(hullcoord1)!#2!-90:(hullcoord0)$)
  \foreach [
  evaluate=\currentnode as \previousnode using \currentnode-1,
  evaluate=\currentnode as \nextnode using \currentnode+1
  ] \currentnode in {1,...,\numberofnodes} {
    let \p1 = ($(hullcoord\currentnode) - (hullcoord\previousnode)$),
    \n1 = {atan2(\y1,\x1) + 90},
    \p2 = ($(hullcoord\nextnode) - (hullcoord\currentnode)$),
    \n2 = {atan2(\y2,\x2) + 90},
    \n{delta} = {Mod(\n2-\n1,360) - 360}
    in 
    {arc [start angle=\n1, delta angle=\n{delta}, radius=#2]}
    -- ($(hullcoord\nextnode)!#2!-90:(hullcoord\currentnode)$) 
  }
}
\renewcommand{\geq}{\geqslant}
\renewcommand{\leq}{\leqslant}
\renewcommand{\le}{\leq}
\renewcommand{\ge}{\geq}
\newcommand{\card}[1]{|{#1}|}
\theoremstyle{definition}
\newenvironment{proofofclaim}{\noindent \textsc{Proof of the Claim:}}{\hfill$\Diamond$\medskip}
\newcommand{\clique}{\textsc{$k$-Clique}\xspace}
\newcommand{\kmis}{\textsc{$k$-Independent Set}\xspace}
\newcommand{\lmis}{\textsc{Max Independent Set}\xspace}
\newcommand{\wmis}{\textsc{Weighted Max Independent Set}\xspace}
\newcommand{\mis}{\textsc{MIS}\xspace}
\newcommand{\dmis}[1]{\textsc{Distance-#1 MIS}\xspace}
\newcommand{\kds}{\textsc{$k$-Dominating Set}\xspace}
\newcommand{\ds}{\textsc{Min Dominating Set}\xspace}
\newcommand{\rds}{\textsc{$(k,r)$-Dominating Set}\xspace}
\newcommand{\subiso}{\textsc{Subgraph Isomorphism}\xspace}
\newcommand{\indsub}{\textsc{Induced Subgraph Isomorphism}\xspace}
\newcommand{\scaset}{\textsc{$r$-Scattered Set}\xspace}
\newcommand{\sapsp}{\textsc{APSP}\xspace}
\newcommand{\apsp}{\textsc{All-Pairs Shortest Paths}\xspace}
\newcommand{\ssssp}{\textsc{SSSP}\xspace}
\newcommand{\sssp}{\textsc{Single-Source Shortest Paths}\xspace}
\newcommand{\algkmis}{\texttt{k-IndSet}\xspace}
\newcommand{\algindsub}{\texttt{IndSub}\xspace}
\newcommand{\algsubiso}{\texttt{SubIso}\xspace}
\newcommand{\algkds}{\texttt{k-DomSet}\xspace}
\newcommand{\algsssp}{\texttt{SSSP}\xspace}
\newcommand{\dec}{$\text{dec}$}
\newcommand{\best}{$\text{best}$}
\newcommand{\diam}{$\text{diam}$}
\newcommand{\tww}{tww}
\newcommand{\bip}{interval biclique partition\xspace}
\newcommand{\sbip}{IBP\xspace}
\newcommand{\bips}{interval biclique partitions\xspace}
\newcommand{\sbips}{IBPs\xspace}
\begin{document}

\maketitle

\begin{abstract}
  We recently introduced the notion of twin-width, a novel graph invariant, and showed that first-order model checking can be solved in time $f(d,k)n$ for $n$-vertex graphs given with a witness that the twin-width is at most~$d$, called $d$-contraction sequence or $d$-sequence, and formulas of size $k$~[Bonnet et al., FOCS '20].
  The inevitable price to pay for such a general result is that $f$ is a tower of exponentials of height roughly $k$. 
  In this paper, we show that algorithms based on twin-width need not be impractical.
  We present $2^{O(k)}n$-time algorithms for \textsc{$k$-Independent Set}, \textsc{$r$-Scattered Set}, \textsc{$k$-Clique}, and \textsc{$k$-Dominating Set} when an $O(1)$-sequence of the graph is given in input.
  We further show how to solve the weighted version of \textsc{$k$-Independent Set}, \textsc{Subgraph Isomorphism}, and \textsc{Induced Subgraph Isomorphism}, in the slightly worse running time $2^{O(k \log k)}n$.
  Up to logarithmic factors in the exponent, all these running times are optimal, unless the Exponential Time Hypothesis fails.
  Like our FO model checking algorithm, these new algorithms are based on a dynamic programming scheme following the sequence of contractions forward.
 
  We then show a second algorithmic use of the contraction sequence, by starting at its end and rewinding it.
  As an example of such a reverse scheme, we present a polynomial-time algorithm that properly colors the vertices of a graph with relatively few colors, thereby establishing that bounded twin-width classes are $\chi$-bounded.
  This significantly extends the $\chi$-boundedness of bounded rank-width classes, and does so with a very concise proof.
  It readily yields a constant approximation for \textsc{Max Independent Set} on $K_t$-free graphs of bounded twin-width, and a $2^{O(\text{OPT})}$-approximation for \textsc{Min Coloring} on bounded twin-width graphs.
  We further observe that a constant approximation for \textsc{Max Independent Set} on bounded twin-width graphs (but arbitrarily large clique number) would actually imply a PTAS.

  The third algorithmic use of twin-width builds on the second one.
  Playing the contraction sequence backward, we show that bounded twin-width graphs can be edge-partitioned into a linear number of bicliques, such that both sides of the bicliques are on consecutive vertices, in a fixed vertex ordering.
  This property is trivially shared with graphs of bounded average degree.
  Given that biclique edge-partition, we show how to solve the unweighted \textsc{Single-Source Shortest Paths} and hence \textsc{All-Pairs Shortest Paths} in sublinear time $O(n \log n)$ and time $O(n^2 \log n)$, respectively.
  In sharp contrast, even \textsc{Diameter} does not admit a truly subquadratic algorithm on bounded twin-width graphs, unless the Strong Exponential Time Hypothesis fails.

  The fourth algorithmic use of twin-width builds on the so-called \emph{versatile tree of contractions} [Bonnet et al., SODA '21], a branching and more robust witness of low twin-width.
  We present constant-approximation algorithms for \textsc{Min Dominating Set} and related problems, on bounded twin-width graphs, by showing that the integrality gap is constant.
  This is done by going down the versatile tree and stopping accordingly to a problem-dependent criterion.
  At the reached node, a greedy approach yields the desired approximation.   
\end{abstract}
\maketitle

\section{Introduction}\label{sec:intro}

As the title suggests, this is the third paper of a series~\cite{twin-width1,twin-width2} devoted to a new graph invariant called \emph{twin-width}.
All the results presented in this paper are self-contained as the relevant background is given in~\cref{sec:prelim}.
In the same section, the reader can find the definitions of \emph{contraction sequences} and \emph{twin-width}.
For now, we are content with some intuition on these notions.
This will be enough to sketch the ideas and techniques leading to our results, while sparing this introduction from too much formalism.

The twin-width of a graph is a non-negative integer measuring its distance to being a cograph.
Among the several characterizations of cographs, a possible definition goes as follows.
A graph is a \emph{cograph} if one can find therein two twins,\footnote{i.e., two vertices with the same neighborhood beside them} identify them, and iterate this process until there is only one vertex left.
Anticipating over the definitions of~\cref{sec:prelim}, this actually corresponds to a 0-sequence, witnessing that cographs have twin-width 0.
Conversely it is also true that graphs with twin-width 0 are cographs.
We generalize this identification process by allowing a controlled error on the contracted pairs of vertices.
An error graph or \emph{red graph} keeps the faulty adjacencies appearing between a contracted pair and the vertices that are neighbor of only one vertex of the pair.
A~$d$-sequence is an indentification or contraction sequence such that the maximum degree of the error graph never exceeds~$d$.
The existence of such a sequence entails that the initial graph has twin-width at most~$d$.

As it turns out, many graph classes have bounded twin-width: planar graphs and more generally proper minor-closed classes, bounded rank-width or clique-width graphs, proper hereditary subclasses of permutation graphs, unit interval graphs, and some particular class of cubic expanders, to name only a few.\footnote{A more exhaustive list is given in~\cref{thm:bd-tww}.}
Considering the wide variety of these classes, it might seem that our cograph generalization has gone too far to allow for a unified algorithmic treatment of bounded twin-width graphs.
The first paper of the series \cite{twin-width1} and the current one show that this is not the case.
Algorithms, whose running times are provably unattainable in general graphs, are actually possible in graphs of bounded twin-width.
We will now detail that point.

After defining any graph parameter $\kappa$, a natural question is whether some computationally hard problems can be solved more efficiently on graphs where $\kappa$ is bounded.
When this turns out to be the case for several problems, it may sometimes lead to a powerful meta-theorem.
A~standard way of capturing a large set of problems within the same framework is through the use of logic formulas over graphs, or more generally over relational structures.
In the language of parameterized algorithms, one may ask for the existence of a Fixed-Parameter Tractable (FPT) algorithm parameterized by $\kappa$ and the size of the graph formula $\varphi$ to be tested: More precisely, an algorithm deciding in time $f(|\varphi|, \kappa(G))n^{O(1)}$, or better $f(|\varphi|, \kappa(G))n$, whether an $n$-vertex graph $G$ satisfies $\varphi$, where $f$ is some computable function. 
Certainly the most famous result of that kind is the celebrated Courcelle's theorem, where the parameter $\kappa$ is tree-width, and the formula $\varphi$ ranges over Monadic Second Order logic (MSO$_2$) formulas~\cite{Courcelle90}.
On a slightly less general logic (namely MSO$_1$, where quantification over edge sets is disallowed), the result holds for the smaller parameter clique-width~\cite{Courcelle00}. 
It implies, for instance, that deciding whether a graph on $n$ vertices contains a subset of~$k$~pairwise non-adjacent vertices (i.e., solving \kmis) can be done in linear time on graphs of constant clique-width, while in general graphs it cannot be solved in polynomial time unless P$=$NP, or in time $f(k)n^{O(1)}$ unless FPT$=$W[1].
Such a result is unlikely for twin-width, as \kmis remains NP-hard in planar graphs which have constant twin-width. 
Nevertheless, when parameterized by the solution size $k$, an FPT algorithm is known in planar graphs, and more generally in any proper minor-closed graph class. 
Actually, on the latter class, every problem expressible by a first-order (FO) formula $\varphi$ can be solved in FPT time parameterized by~$|\varphi|$~\cite{Flum01}.
In the first paper of our series~\cite{twin-width1}, we extended this result and obtained the following meta-theorem for twin-width.

\begin{theorem}{\emph{\cite{twin-width1}}}\label{thm:firstorder}
Given an $n$-vertex graph $G$, a $d$-sequence of $G$, and a first-order formula~$\varphi$, one can decide $G \models \varphi$ in time $f(|\varphi|, d) n$ for some computable function~$f$.
\end{theorem}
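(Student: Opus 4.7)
The plan is to perform dynamic programming along the contraction sequence $G = G_n, G_{n-1}, \dots, G_1$ in the forward direction, carrying along at each trigraph $G_i$ a compact table of first-order type information sufficient to decide $G \models \varphi$ once the sequence is exhausted. For a fixed quantifier rank $q = |\varphi|$, the number of first-order types of quantifier rank at most $q$ over trigraphs (viewed as structures with two binary predicates, one for red and one for black edges) is bounded by a function $t(q)$ which is a tower of exponentials of height roughly $q$. The table should store, for each trigraph $G_i$ and each tuple $\bar{v}$ of vertices of $G_i$ of length at most $q$, the quantifier-rank-$q$ type of $\bar{v}$ in $G_i$, together with enough auxiliary data to propagate this type forward when a contraction is performed.

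I would adopt an Ehrenfeucht-Fra\"iss\'e-style analysis: the $q$-type of $\bar{v}$ is determined inductively by the atomic diagram of $\bar{v}$ together with, for each quantifier rank $r < q$, the set of $r$-types of all one-vertex extensions $(\bar{v}, x)$ as $x$ ranges over $V(G_i)$. Storing these sets explicitly per tuple would be too costly, so the crucial idea is to exploit the bounded red degree $d$: vertices of $G_i$ lying outside the "red sphere of influence" of $\bar{v}$, meaning those at red-distance more than some bounded threshold, contribute to the type only through their black adjacencies to $\bar{v}$, and two such vertices in the same black-adjacency class relative to $\bar{v}$ yield identical extensions. This should collapse the effective size of each table entry down to a function $f(|\varphi|, d)$, independent of $n$.

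The main obstacle I foresee is proving that this compact type information updates correctly across a single contraction $u, v \to w$. Two things have to happen simultaneously: first, tuples of $G_{i-1}$ containing $w$ must inherit their $q$-types from tuples of $G_i$ involving $u$ and $v$, which requires a careful lemma relating $q$-types on the two trigraphs across a contraction (and handles, in particular, the newly created red edges incident to $w$); second, the red sphere around each relevant tuple must be maintainable in bounded time, which is precisely where the red-degree bound $d$ pays off, since at each step the sphere is enlarged by only a bounded number of vertices. A careful induction on quantifier rank, processing the sequence forward in $n-1$ steps, would then deliver the $q$-theory of $G$ once $G_1$ is reached, at which point one reads off whether $G \models \varphi$ in constant time. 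Each update costing only $f(|\varphi|, d)$, this would yield the claimed $f(|\varphi|, d)\,n$ total running time.
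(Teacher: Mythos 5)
The overall shape you propose --- a forward dynamic program along the contraction sequence, organized around Ehrenfeucht--Fra\"{\i}ss\'e-style type information, with the bounded red degree used to keep local information finite --- is indeed the spirit of the algorithm in~\cite{twin-width1}, which the present paper cites rather than re-proves. However, two of your design choices would make the scheme fail as stated.

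First, you index the table by tuples $\bar v$ of length at most $q$ over $V(G_i)$, which already gives $\Theta(n^q)$ entries per trigraph; no amount of compressing what is stored \emph{in} each entry repairs a superlinear number of entries, so the claimed $f(|\varphi|,d)\,n$ running time is lost at the outset. The actual construction in~\cite{twin-width1} attaches to each \emph{vertex} $u$ of $G_i$ (and its bounded red neighborhood) a single bounded-size object --- a tree-like ``shuffle''/morphism structure summarizing all ways of choosing a bounded tuple inside the relevant local part --- yielding $O(n)$ entries each of size $f(|\varphi|,d)$.

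Second, and more fundamentally, you track the quantifier-rank-$q$ type of $\bar v$ \emph{in the trigraph} $G_i$. This is the wrong object: as $i$ decreases, $G_i$ loses information about $G$, and $G_1 = K_1$ has a trivial $q$-theory that says nothing about $G$. Your final sentence asserts that reaching $G_1$ ``delivers the $q$-theory of $G$,'' but under your invariant it would only deliver the $q$-theory of $K_1$. What one must maintain instead is first-order information about the original graph $G$ \emph{restricted to the preimages} $u(G) \subseteq V(G)$ of the vertices $u$ of $G_i$, together with the local red-adjacency picture between those preimages; then, when two vertices are contracted, the information about the two preimages is merged, and the bounded red degree guarantees that all the genuinely ambiguous interactions (red edges) stay local, while far-away parts are homogeneous to each preimage and contribute only through their fully-black or fully-non-adjacent relation. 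That is the inductive invariant that makes the final step read off $G \models \varphi$ at $G_1$. Your ``red sphere of influence'' idea is the right ingredient for this, but it has to be applied to parts of the partition $\mathcal{P}_i$ and their preimages in $G$, not to tuples of vertices of $G_i$ itself. As written, the invariant you are carrying forward does not converge to anything about $G$.

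Once those two points are corrected --- bounded-size per-vertex (or per-part) summaries rather than a per-tuple table, and types of $G$ restricted to preimages rather than types of the shrinking trigraphs --- the argument goes through along the lines you sketch, and is essentially the one in~\cite{twin-width1}.
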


The main drawback of this kind of algorithms is the obtained running time: The function $f$ is a tower of exponentials whose height depends on the size of the formula.
This is an unavoidable price to pay to solve at once all graph problems expressible in first-order logic.
Indeed, it is known that testing first-order formulas on trees requires a running time whose dependence in the size of the formula is a non-elementary function, unless P $=$ NP~\cite{Frick04}.
Furthermore the running time of our FO model checking algorithm does not get better on ``seemingly simpler'' formulas, such as for instance, with few quantifier alternations.

\paragraph*{Our results.}

We show that twin-width and its associated contraction sequence can also give rise to practical algorithms for some individual classic graph problems.
In particular, we consider the following NP-complete problems, given a graph $G$ and an integer $k$, decide if:
\begin{compactitem}
	\item \kmis: there are $k$ pairwise non-adjacent vertices.
	\item \clique: there are $k$ pairwise adjacent vertices.
	\item \scaset: there are $k$ vertices pairwise at distance at least $r$.
	\item \kds: there is a set $S$ of $k$ vertices such that for every vertex $v$ of $G$, either $v \in S$ or $v$ has a neighbor in $S$.
	\item \rds: there is a set $S$ of $k$ vertices such that every vertex of $G$ is at distance at most $r$ of some vertex in $S$.
\end{compactitem}

\medskip

These problems, parameterized by $k$, are W[1]-hard (the last two are even W[2]-complete), thus unlikely to admit an FPT algorithm, i.e., one with running time $f(k)n^{O(1)}$, on general graphs. 
We obtain single-exponential parameterized algorithms for all these problems when a contraction sequence witnessing ``twin-width at most $d$'' is given.
When considering the unparameterized optimization variant, we denote these five problems by \lmis (and \mis for short), \textsc{Max Clique}, \dmis{$(r-1)$}, \ds, and \textsc{Min $r$-Dominating Set}, respectively. 

\begin{theorem}\label{thm:single-exp}
  Given an $n$-vertex graph $G$ and a $d$-sequence $G=G_n, \ldots, G_1=K_1$, the above-mentioned five problems 
  can be solved in time $2^{O_d(k)}n$.
\end{theorem}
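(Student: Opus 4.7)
The plan is to run a dynamic-programming procedure forward along the sequence $G_n, G_{n-1}, \ldots, G_1$. At time $i$ each vertex $u \in V(G_i)$ is identified with its bag $B_u \subseteq V(G)$, the set of original vertices contracted into $u$; a black edge between two bags means full bipartite adjacency in $G$, a red edge witnesses unresolved adjacency, and by hypothesis every bag has at most $d$ red neighbors. I would describe the algorithm in detail for \kmis and then explain how to adapt it to the four other problems.

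For \kmis the DP maintains, at each $G_i$, a table of ``signatures''. A signature records, for every bag $u$ that is \emph{active} (i.e., contains at least one candidate solution vertex), the number $k_u := |S \cap B_u|$, together with, for every red neighbor $v$ of $u$, a bounded-size label encoding just enough of $S \cap B_u$ to decide, when $u$ and $v$ later merge (directly or through a chain of merges), whether an edge of $G$ has been unintentionally crossed by $S$. Since the red degree is at most $d$ and $\sum_u k_u = k$, a naive analysis already gives $2^{O_d(k \log k)}$ total signatures; reaching the $2^{O_d(k)}$ bound claimed requires a more compact encoding, presumably exploiting the structure of the red-connected components of $G_i$ to amortize labels across several bags. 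A single contraction step $u, u' \mapsto u''$ pairs up compatible signatures, rejects those whose union would cross a black edge, and updates the labels visible to the red neighbors of $u''$; each such step takes $2^{O_d(1)}$ time, hence $2^{O_d(k)} n$ overall.

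The same scheme adapts to the four other problems with problem-specific signatures: for \clique one dualizes black/red and adjacency/non-adjacency; for \scaset one additionally records the distance profile (truncated at $r$) from each candidate solution vertex to the red boundary of its bag; for \kds and \rds one replaces the counter $k_u$ by a pair recording both the number of chosen dominators inside $B_u$ and the set of red-boundary vertices that still await ($r$-)domination.

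The principal technical obstacle is devising signatures that are simultaneously (i) small enough to reach $2^{O_d(k)}$ rather than $2^{O_d(k \log k)}$, (ii) locally updatable during one contraction, and (iii) sound and complete with respect to the original objective. Point (iii) is particularly subtle for the dominating-set variants, since a vertex may become dominated only through a merge happening much later in the sequence; the signature must therefore carry ``pending domination obligations'' and discharge them at the correct contraction step. Achieving all three properties simultaneously, and in particular the sharp exponent, is where I expect the bulk of the technical work to lie.
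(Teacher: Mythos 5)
Your proposal has a genuine gap, and it sits exactly where you flag ``the bulk of the technical work'': the design of the DP states. You propose per-bag signatures carrying, for each red neighbor $v$ of an active bag $u$, a bounded-size label encoding just enough of $S\cap B_u$ to decide whether an edge of $G$ is crossed when the bags later merge. No such bounded label can exist in general: a red edge $uv$ means precisely that the bipartite adjacency between $B_u$ and $B_v$ is arbitrary, so deciding whether $S\cap B_u$ and $S\cap B_v$ span an edge of $G$ requires knowing the actual vertices, not a summary. The paper avoids ever having to answer an adjacency query across a red edge. Its DP states are the \emph{connected subsets $T$ of the red graph} of each $G_i$ of size at most $k$ --- at most $(d^{2k-2}+1)n$ of them in total by \cref{lem:connected-subgraphs}, which is where the single-exponential bound comes from, not from compressing labels --- and for each realizable $T$ it stores one explicit optimum witness $S\subseteq\bigcup_{u\in T}u(G)$ meeting every bag of $T$. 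When a contraction merges several red components into one, the adjacencies between distinct components are homogeneous (all black or all absent), so compatibility of their witnesses is decided at the level of the trigraph and optimal sizes compose additively. Your remark that the naive analysis gives $2^{O_d(k\log k)}$ and that reaching $2^{O_d(k)}$ ``presumably'' exploits red-connected components is exactly the step you would need to supply; as written the accounting also does not close ($n$ steps of cost $2^{O_d(1)}$ give $2^{O_d(1)}n$, whereas the true cost per contraction is $2^{O_d(k)}$, for enumerating the connected red sets through the newly contracted vertex).

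For the remaining problems your sketches diverge further from what is needed. \clique is indeed handled by complementation (a $d$-sequence of $G$ is one of $\overline G$), but \scaset and \rds are reduced to \kmis and \kds on the power graph $G^{\le r}$, which has bounded twin-width because first-order interpretations preserve it; no distance profiles are threaded through the DP states. For \kds the paper's profiles $(T,D,M)$ do record which bags contain dominators and which are already fully dominated, close in spirit to your ``pending obligations'', but making this sound forces the ground set $T$ to contain the radius-$2$ red balls around all dominator bags, hence to have size up to $(d^2+1)k$ rather than $k$, and requires a separate argument for safely deleting a bag from a profile when the ground set would grow too large. None of these mechanisms appear in your proposal.
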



We then consider some W[1]-complete generalizations of \kmis or of \clique.
Namely:

\begin{compactitem}
	\item \wmis: given a graph $G$ with a weight function on vertices $w : V(G) \rightarrow \mathbb{R}$ and an integer $k$, decide whether there exists a set $S$ of size exactly $k$ of pairwise non-adjacent vertices such that $\sum_{v \in S} w(v)$ is maximum.
	\item \indsub: given a graph $H$ on $k$ vertices and a graph $G$, decide whether there exists a set $S \subseteq V(G)$ such that $G[S]$, the subgraph of $G$ induced by $S$, is isomorphic to $H$.
	\item \subiso: given a graph a graph $H$ on $k$ vertices and a graph $G$, decide whether there exists a set $S \subseteq V(G)$ such that $H$ is isomorphic to a subgraph of $G[S]$.
\end{compactitem}

\medskip

Unlike the other two problems, \subiso is \emph{not} a generalization of \kmis.
Though it does generalize \clique.
Once the formal definition of a contraction sequence is given, it will be clear that a $d$-sequence for $G$ readily yields a $d$-sequence for its complement, $\overline G$. 
Thus in the context of bounded twin-width graphs, an algorithm solving \subiso can be used to solve \kmis.
For these three problems, we now get slightly superexponential parameterized algorithms.

\begin{theorem}\label{thm:slightly-superexp}
  Given an $n$-vertex graph $G$ and a $d$-sequence $G=G_n, \ldots, G_1=K_1$, the above-mentioned three problems
    can be solved in time $2^{O_d(k \log k)}n$.
\end{theorem}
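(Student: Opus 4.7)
Proof plan. I propose a forward dynamic programming along the contraction sequence $G = G_n, \ldots, G_1$, generalising the scheme underlying Theorem~\ref{thm:single-exp}. Recall that each vertex of $G_i$ is a \emph{bag} of original vertices of $G$; black edges mean complete bipartite adjacency, missing edges mean complete bipartite non-adjacency, and red edges correspond to mixed (``unknown'') adjacency, with red degree at most $d$ everywhere in the sequence.

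Signatures. A solution $S$ of size $k$ in $G$ (a $k$-vertex independent set, or the image of an embedding of $H$) induces at step $i$ a \emph{signature} $\sigma:[k]\to V(G_i)$ telling into which bag of $G_i$ each solution vertex falls (for \indsub and \subiso we identify $[k]$ with $V(H)$). A signature is \emph{compatible} with $G_i$ when the adjacencies it forces are not already contradicted by the black/non-edges of $G_i$: for \wmis, no two distinct images are joined by a black edge; for \subiso (resp.\ \indsub) the images of every edge of $H$ avoid non-edges of $G_i$ (resp.\ the images of non-edges of $H$ avoid black edges of $G_i$). The DP table assigns to each compatible signature the maximum weight (resp.\ a Boolean) of an actual solution realising it. The answer is read off at $G_1 = K_1$, where the unique signature is constant.

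Transitions and counting. When $G_i \to G_{i-1}$ contracts $u,v$ into $z$, every signature of $G_{i-1}$ is obtained from one or several signatures of $G_i$ by replacing coordinates equal to $u$ or $v$ by $z$, keeping only those compatible with the possibly stricter adjacencies of $G_{i-1}$ (some red edges may turn black and eliminate signatures). Signatures not mentioning $u$, $v$, or a red neighbour of $\{u,v\}$ are unchanged, and since the red degree is bounded by $d$, the number of bags whose signatures are touched is $O(d)$; there are thus at most $(O(d))^k \cdot k! = 2^{O_d(k\log k)}$ signatures to update per contraction, giving total time $2^{O_d(k\log k)} n$. The combinatorial factor $k!$ (hence the $\log k$ in the exponent) arises because, unlike in Theorem~\ref{thm:single-exp}, we must distinguish which coordinate of $[k]$ (or which vertex of $H$) is sent to which bag, in order to track weights or the embedding's injectivity.

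Main obstacle. The delicate point is handling coordinates that collapse to the same bag: in \subiso/\indsub the $k$ images must ultimately be distinct vertices of $G$, although a bag produced by many contractions can temporarily carry several coordinates; in \wmis, several coordinates sharing a bag must in fact be pairwise non-adjacent in $G$. I will deal with this uniformly by augmenting each signature with a \emph{partition} of its coordinates into those that merge and those that do not, and by keeping, for each bag~$z$ and each multiplicity $j \le k$, an auxiliary ``internal'' DP value recording the best realisation inside $z$. This auxiliary value is itself built inductively by the same contraction rule from the base case of singleton bags. Altogether, showing that the signature count at each step stays within $k^{O(k)} d^{O(k)}$ despite these internal patterns, and that the transition can be implemented in amortised time proportional to the number of updated signatures, is the technically demanding part; the red-degree bound is used exactly as in Theorem~\ref{thm:single-exp} to localise the effect of a contraction.
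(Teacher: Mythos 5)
Your proposal takes a genuinely different route from the paper, but it has a gap that I do not think can be closed as stated: the table indexed by \emph{global} signatures $\sigma:[k]\to V(G_i)$ is too large. A compatible signature may send its $k$ coordinates into bags scattered arbitrarily across $V(G_i)$, so there are $\Theta(n^k)$ entries. Already at initialisation in $G_n$, a compatible signature for \wmis is precisely an independent set of size $k$ with a labelling, which is what we are trying to find; on an edgeless graph (twin-width $0$) nothing is pruned and the table has $n^k/k!$ entries. Your per-contraction count is also off: a signature is affected when \emph{any one} of its $k$ coordinates lands on $u$, $v$ or one of the $O(d)$ red neighbours, while the other $k-1$ coordinates are unconstrained, so $\Omega(n^{k-1})$ signatures are touched, not $(O(d))^k\cdot k!$. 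The latter bound would require all $k$ images to lie inside a small red-connected region, which your definition does not impose — and cannot impose without breaking correctness, since a genuine solution typically hits many far-apart bags of $G_i$.

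The paper's proof (\cref{thm:sub-iso}) sidesteps this by replacing global signatures with \emph{local} data. The DP is indexed by pairs $(T,\eta)$, where $T\subseteq V(G_i)$ is a \emph{red-connected} set of size at most $k$ and $\eta$ is a ``division'' of $T$: an assignment of disjoint non-empty subsets of $[k]=V(H)$ to the bags of $T$, with $\eta(T)$ allowed to be a \emph{strict} subset of $[k]$. The point is that pieces of the solution living in distinct red components of $G_i$ see only black/non-edges between them, so whether they fit together depends only on $T$, $\eta$ and the black edges of $G_i$ — not on the actual vertices chosen — and therefore such pieces need not be tracked jointly until a contraction makes them red-connected, at which moment they are merged exactly as in the \kmis algorithm. \cref{cor:connected-subgraphs} bounds the number of useful $T$'s by $O(d^{2k}n)$, and the number of divisions of each $T$ by $2^{O(k\log k)}$, which is where the table size $2^{O(k\log k)}d^{2k}n$ comes from. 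Your last paragraph gestures at this (``partitions of coordinates that merge'', ``internal DP per bag''), but the restriction of the index set to red-connected subsets and the deferred-merge mechanism are the entire technical content of the argument; without them the counting does not close.
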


The algorithms behind \cref{thm:single-exp,thm:slightly-superexp} follow the same general plan.
Let us consider the $n$ successive red graphs $R_n, \ldots, R_1$ (error graphs) obtained after each vertex contraction.\footnote{A reader who would want precise definitions at this point is welcome to read first the couple of paragraphs of \cref{subsec:tww-def}.}
$R_n$ is the edgeless $n$-vertex graph (since there are initially no errors) and $R_1$ is the 1-vertex graph.
We maintain optimum partial solutions populating connected subgraphs of bounded size in each $R_i$.
Initially in $R_n$, the connected subgraphs are only made of single vertices (there are no edges).
So the optimum partial solutions are trivial to compute.
The partial solutions for $R_i$ are built from the partial solutions of $R_{i+1}$ in the following way.
Every partial solution \emph{not} involving the newly contracted vertex is simply kept.
Every partial solution involving the newly contracted vertex is computed by merging a bounded number of previous partial solutions on pairwise disconnected sets.
The key is that, by design, there is no error between the latter partial solutions.
Thus the presence or absence of edges can be decided regardless of the forgotten choices of precise vertices within the solution.
Eventually a (partial) solution is computed in $R_1$, which constitutes an actual solution in the entire initial graph $G$. 
In a nutshell, the algorithms may be summarized as dynamic programming over connected sets of the red graphs.

For \kmis there is not much more to it than the previous sketch.
For \textsc{(Induced) Subgraph Isomorphism} the algorithms become more technical.
Also conceptually, partial solutions are no longer necessarily feasible.
For \kds some new challenges appear.
The partial solutions and their actual specification are not straightforward to define, as it is for \kmis.

One may wonder if subexponential parameterized algorithms are possible for any of the eight problems considered so far.
We will observe that even \kmis cannot be solved in time $2^{o(k / \log k)}n^{O(1)}$ on graphs given with an $O(1)$-sequence, unless the Exponential Time Hypothesis fails.
With a similar argument, the same lower bound applies to \kds.
Thus, up to logarithmic factors in the exponent, the running times of \cref{thm:single-exp,thm:slightly-superexp} are optimal.
Actually we will see that even algorithms running in time $2^{o(n / \log n)}$ are unlikely.


\medskip

All the previous algorithms exploit the contraction sequence forward.
They follow the identification process from the initial graph $G$ to the 1-vertex graph.
What if we would start at the end, and maintain solutions as the vertices are iteratively split until the initial graph~$G$ is formed?
We exemplify the idea of using the contraction sequence backward with an essentially greedy coloring procedure that is not optimal but still uses relatively few colors.

Let us be more specific.
A proper $k$-coloring of a graph $G$ is a mapping $c : V(G) \rightarrow \{1, \dots, k\}$ such that $c(u) \neq c(v)$ whenever $uv \in E(G)$.
The chromatic number, denoted by $\chi(G)$, is the smallest integer $k$ such that $G$ admits a proper $k$-coloring.
It can be seen that $\chi(G) \geqslant \omega(G)$, where $\omega(G)$ denotes the size of a largest clique in $G$, whereas many constructions of triangle-free (that is, with $\omega(G) \leqslant 2$) graphs $G$ with arbitrarily large $\chi(G)$ are known.
A class of graph $\mathcal C$ is said \emph{$\chi$-bounded} if there is a function $f$ such that for any graph $G \in \mathcal{C}$, we have $\chi(G) \leqslant f(\omega(G))$.
Our coloring algorithm $d+2$-color any triangle-free graph of twin-width at most~$d$, and more generally $(d+2)^{\omega(G)-1}$-color any graph $G$ given with a $d$-sequence.
In particular, it shows the following.

\begin{theorem}\label{thm:chibounded-informal}
  Every graph class with bounded twin-width is $\chi$-bounded.
\end{theorem}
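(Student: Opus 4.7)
The plan is to prove the quantitative strengthening that a graph $G$ given together with a $d$-contraction sequence admits a proper coloring with at most $(d+2)^{\omega(G)-1}$ colors (and $d+2$ colors when $G$ is triangle-free), which immediately yields $\chi$-boundedness via $f(\omega):=(d+2)^{\omega-1}$. The coloring is built by playing the sequence in reverse: start from the trivial one-color coloring of $G_1=K_1$ and, at each uncontraction step $G_i\to G_{i+1}$, where a vertex $v\in V(G_i)$ is split back into two vertices $u_1,u_2\in V(G_{i+1})$, extend the current coloring by assigning colors to $u_1$ and $u_2$ only, keeping all other colors fixed. Writing $B$ and $R$ for the black and red neighborhoods of $v$ in $G_i$, the only candidates for new black or red neighbors of $u_1,u_2$ in $G_{i+1}$ lie in $B\cup R\cup\{u_1,u_2\}$, and $|R|\leq d$ by the $d$-sequence hypothesis.

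For the triangle-free case I would maintain the stronger invariant that the current coloring is proper not only on black edges but also on red edges of $G_i$. Under this invariant, the natural assignment $c(u_1)=c(u_2)=c(v)$ automatically handles all constraints against $B$ and $R$, since both sets consist entirely of vertices whose color is different from $c(v)$. The only possible violation is then between $u_1$ and $u_2$ themselves, when $u_1u_2$ is a black or red edge of $G_{i+1}$, in which case one of the two must be repainted; the forbidden colors for the repaint are the colors of the at most $d$ affected $R$-neighbors plus $c(v)$, which is at most $d+1$, and a palette of $d+2$ colors always contains a valid choice. Triangle-freeness enters via the fact that any black clique of $G_i$ corresponds to a clique of $G$ (pick one vertex in each $S_w$), so the black subgraph of every $G_i$ is itself triangle-free; this controls the interaction between $B$, $R$, and the rest, so that the strong invariant can be propagated without any cascade of recolorings outside $\{u_1,u_2\}$.

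For the general case I would induct on $\omega(G)$. The inductive step partitions $V(G)$ into $d+2$ classes $V_1,\dots,V_{d+2}$ with $\omega(G[V_j])\leq\omega(G)-1$ for every $j$: such a partition is produced by the same backward scheme, now avoiding monochromatic $\omega$-cliques rather than monochromatic edges, which is meaningful because black cliques in $G_i$ still have size at most $\omega(G)$. Since contraction sequences pass to induced subgraphs, each $G[V_j]$ inherits a $d$-sequence and is recursively colored with $(d+2)^{\omega-2}$ colors drawn from a palette disjoint from the other parts, yielding $(d+2)^{\omega-1}$ colors in total. The main obstacle is showing that the strong invariant, at each uncontraction, genuinely fits inside the $d+2$ palette slack without requiring recolorings beyond $u_1$ and $u_2$: this amounts to a careful case analysis of how black and red edges of $G_{i+1}$ combine into edges of $G_i$ under the contraction rule, and lifting this analysis to the $\omega$-clique hypergraph is the most delicate point of the proof.
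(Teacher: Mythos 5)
Your overall scheme — play the $d$-sequence in reverse, maintain a coloring that is proper on the union of black and red edges, give both halves of a split vertex the parent's color when possible and greedily repaint at most one of them otherwise — is exactly the paper's scheme, and the target bound $(d+2)^{\omega-1}$ matches the paper's quantitative statement. So the approach is right, but two steps are not yet proofs.

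In the triangle-free case you count the forbidden colors for the repaint of $u_2$ as ``the at most $d$ affected $R$-neighbors plus $c(v)$,'' silently dropping all of $B$. That is only sound because $B=\emptyset$ whenever a repaint is needed, and this is precisely the step where the hypothesis on $G$ has to be invoked. The needed fact is: if $z$ has a black neighbor $y$ in $G_i$ and $u_1u_2$ is an edge (black or red) of $G_{i+1}$, then some edge of $G$ joins $u_1(G)$ to $u_2(G)$, and together with any vertex of $y(G)$ (to which $z(G)=u_1(G)\cup u_2(G)$ is fully adjacent) this forms a triangle in $G$. Your appeal to ``the black subgraph of $G_i$ is triangle-free'' does not give this — the problematic edge $u_1u_2$ may be red, and the triangle you need to exhibit lives in $G$, not in $G_i$. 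Without the dichotomy ``$z$ has a black neighbor $\Rightarrow$ no repaint needed; otherwise $z$ has $\le d$ neighbors total,'' your forbidden-color count is unjustified.

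For the general case you correctly identify the recursive structure (each color class should have clique number strictly below $\omega(G)$, and a $d$-sequence restricts to induced subgraphs), and this is indeed equivalent to the paper's tuple-coloring where the first coordinate is the $d{+}2$-coloring and subsequent coordinates come from inductively coloring each $G[x(G)]$ once $x$ first acquires a black neighbor — the key observation being that $G[x(G)]$ is $K_{\omega(G)}$-free at that moment. But you explicitly leave the lifting of the invariant to the $\omega$-clique setting as ``the most delicate point,'' so this half of the argument is a plan rather than a proof. In particular you do not define the invariant that replaces ``proper on black and red edges'' (something like: each color class, once restricted to representatives of the current parts, is $\omega(G)$-clique-free in the black graph), nor do you verify it under a split, so the claimed partition into $d+2$ parts of smaller clique number is asserted rather than established.
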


Algorithmically this has some direct consequences for approximating the chromatic number, as well as, in the subcase of $K_t$-free graphs, the independence number.

\medskip

The same idea of considering the contraction sequence backward is then used to show that every graph given with an $O(1)$-sequence admits an edge partition by $O(n)$ bicliques, each side of which is on consecutive vertices, for a fixed vertex ordering.
We use this edge partition to tackle the edge-unweighted version of some classic polynomial-time solvable problems:
\begin{compactitem}
  \item \sssp: given a graph $G$ and a source $s$, find a shortest-path tree rooted at $s$, spanning the connected component of $s$.
  \item \apsp: given a graph $G$, find the distances in $G$ between every pair of vertices.
  \item \textsc{Diameter}: given a graph $G$, report the largest distance in $G$ between two vertices.
\end{compactitem}

\medskip

We show how breadth-first search (BFS) can be mimicked, when replacing ``traversing an edge'' by ``traversing a biclique all at once''.
A subtlety of the algorithm, beside the necessary data structures to get \sssp sublinear in the total number of edges, lies in the fact that bicliques, contrary to single edges, can be traversed twice (once in both directions) before being discarded. 

\begin{theorem}\label{thm:sssp-informal}
  If the input graph comes with an $O(1)$-sequence, \sssp can be solved in $O(n \log n)$ time, thus \apsp and \textsc{Diameter} can be solved in $O(n^2 \log n)$ time.
  In contrast, \textsc{Diameter} cannot be solved in $O(n^{2-\varepsilon})$ for any $\varepsilon > 0$, even in that scenario, unless the Strong Exponential Time Hypothesis fails.
\end{theorem}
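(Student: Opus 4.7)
The plan is to build on the structural result announced in the abstract as the ``third algorithmic use of twin-width'': a graph given with an $O(1)$-sequence admits an edge partition into $O(n)$ bicliques $A_i \times B_i$, where each $A_i$ and each $B_i$ is an interval in one common vertex ordering $\sigma$. I would first establish this by rewinding the contraction sequence. At each reverse step, a vertex $x$ splits into two children $y,z$; the black neighbors of $x$ just before the split become complete to $\{y,z\}$ after the split, producing one biclique whose two sides are descendant sets in the contraction tree, while the $O(1)$ red neighbors contribute only $O(1)$ additional tiny bicliques. Choosing $\sigma$ as a DFS order of the contraction tree makes the descendant set of every internal node an interval of $\sigma$, so both sides of each produced biclique are intervals. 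Summing over the $n-1$ reverse steps yields the desired $O(n)$ bicliques together with per-vertex lists of the incident bicliques.

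Given this structure, the shortest-path algorithm simulates BFS by \emph{traversing a whole biclique at once} instead of an edge. I maintain a balanced search tree $\mathcal U$ of the still-unvisited vertices keyed by $\sigma$, alongside the usual BFS queue. When a vertex $v$ is popped at distance $\ell$, I iterate over the bicliques $A_i \times B_i$ containing $v$, say on the $A_i$-side, and perform a range extraction on $\mathcal U$ over the interval $B_i$, marking each retrieved vertex with distance $\ell+1$ and appending it to the queue. Since BFS pops vertices in nondecreasing distance order, the first vertex of $A_i$ that is popped already supplies every vertex of $B_i$ with its correct distance; afterwards $B_i \cap \mathcal U = \emptyset$, so later pops from $A_i$ cannot retrigger the biclique. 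A symmetric argument applies to traversals from the $B_i$-side. Each biclique is therefore ``used'' at most twice, and since every vertex leaves $\mathcal U$ exactly once, the total cost of range extractions is $O(n \log n)$, giving the announced bound for \sssp. Running it from each vertex yields $O(n^2 \log n)$ for \apsp, and hence for \textsc{Diameter}.

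The main obstacle on the upper-bound side is the amortized accounting of the ``listing'' step: iterating through the bicliques incident to the popped vertex must itself be charged to $O(n)$ operations over the whole run. I would handle this by precomputing, during the rewinding, a per-vertex list of incident bicliques whose total length is $O(n)$, which should follow from the fact that each reverse split produces bicliques whose small side is exactly the pair $\{y,z\}$ (or a descendant set introduced precisely at that step), ensuring each vertex joins the ``small'' side of a bounded number of bicliques along its ancestral chain.

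For the conditional lower bound, I would adapt the Roditty--Vassilevska-Williams reduction from $k$-\textsc{Orthogonal Vectors} under \textsc{SETH} that already rules out $O(n^{2-\varepsilon})$ algorithms for \textsc{Diameter} on sparse graphs. Its construction is a bipartite graph made of a few gadgets repeated along two coordinates, whose diameter is $2$ or $3$ depending on the OV answer. The task is to certify that this construction admits an $O(1)$-sequence; this should follow from its product-like, bounded-degree layout, possibly by exhibiting an explicit contraction sequence that sweeps coordinates one by one. Making sure that such a sweep stays below a constant red degree while preserving the correctness of the reduction is the chief care-demanding step of the lower bound.
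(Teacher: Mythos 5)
Your overall architecture for the upper bound matches the paper's: rewind the contraction sequence to extract an edge partition into $O(n)$ bicliques whose sides are intervals in a DFS order of the contraction tree (the paper calls this an \emph{interval biclique partition}), then run a BFS that traverses a whole biclique at a time, maintaining the unvisited vertices in a balanced search tree for range extraction.

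However, there is a genuine gap in your amortization. You propose to ``precompute a per-vertex list of incident bicliques whose total length is $O(n)$,'' arguing that each vertex joins the small side of only boundedly many bicliques along its ancestral chain. This is false: a vertex is typically contained in many \emph{large} biclique sides as well, and the total size of all sides can be $\Theta(n^2)$. Already the clique $K_n$ with the IBP $\{([1],[2,n]),([2],[3,n]),\ldots,([n-1],[n])\}$ has a vertex (namely $n$) lying in $n-1$ sides, and the total size of all sides is $\Theta(n^2)$. Iterating a precomputed incidence list when a vertex is popped would therefore cost $\Theta(n^2)$ over the whole run, defeating the $O(n\log n)$ target. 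The paper avoids this by keeping the biclique sides themselves in a second balanced search tree keyed by the interval ordering; when a vertex $u$ is popped, it queries for all \emph{surviving} sides containing $u$ and \emph{deletes} each such side from the tree. Since each side can be reported only once before being deleted, the total cost of all side-reporting queries is $O(|\mathcal{B}|\log n)$, while each biclique can still be traversed once in each direction (because sides, not bicliques, are removed). Your BFS-correctness argument is fine, but the running-time argument needs this dynamic structure with deletions, not a static precomputed list.

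On the lower bound you propose to certify that the Roditty--Vassilevska Williams Diameter construction has bounded twin-width by an explicit coordinate-sweeping contraction, and you yourself flag this as the uncertain part. The paper sidesteps this entirely with a simpler trick: start from a hard bounded-degree Diameter instance on $n'$ vertices, subdivide every edge $\Theta(\log n')$ times, and attach pending paths to normalize distances. This creates only a polylogarithmic blow-up in size (so the SETH exponent is essentially preserved) and multiplies the diameter uniformly (so the $3/2-\varepsilon'$ inapproximability gap is preserved), while the $\Theta(\log n')$-subdivision is already known from the Twin-width~II paper to have bounded twin-width and to admit an $O(1)$-sequence computable in linear time. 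Replacing your speculative direct analysis by subdivision is the missing idea on the lower-bound side.
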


Our algorithm inherently relies on unweighted edges.
Nonetheless vertex-weights can be supported with the same running time.

\ds is known to be as approximable as the \textsc{Set Cover} problem. 
Thus, by classic papers by Johnson~\cite{Johnson74} and by Lov\'asz~\cite{Lovasz75}, it admits a $\ln n$-approximation and the integrality gap (i.e., the ratio between the optimum of the original problem and the optimum of the LP relaxation) of its standard LP formulation is also $\ln n$.
In sharp contrast, unless P$=$NP, \ds cannot be approximated in polynomial-time within factor $(1-o(1))\ln n$ on $n$-vertex general graphs~\cite{Dinur14}.

We show that, on bounded twin-width classes, the integrality gap of \ds is constant.
This uses the \emph{versatile trees of contractions} developed in the second paper of the series~\cite{twin-width2}.
These are more robust witnesses of low twin-width which, instead of providing a single contraction in a given trigraph, gives linearly many disjoint ones.
Placing ourselves at a right node of the versatile tree, we show that a greedy strategy in the corresponding trigraph yields a constant approximation in the original graph.

\begin{theorem}\label{thm:approx-ds-etal}
  If the input graph comes with an $O(1)$-sequence, \ds, \dmis{2}, and more generally \textsc{Min $r$-Dominating Set}, \dmis{$2r$} for every positive $r$, admit $O(1)$-approximation algorithms. 
\end{theorem}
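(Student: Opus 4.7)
The plan is to descend the versatile tree of contractions of \cite{twin-width2}, stop at a carefully chosen node, and then invoke a local greedy rule. Throughout, denote by $d$ the twin-width bound given by the input $O(1)$-sequence. At any node of the versatile tree we have a trigraph $T$ whose vertices are ``bags'' partitioning $V(G)$; each black $T$-edge between bags $B,B'$ encodes a complete bipartite adjacency in $G$, so a single representative $v \in B$ dominates (in $G$) every vertex of $B'$ via that edge, and every bag has at most $d$ red neighbors in $T$. The versatile property guarantees that at each level one has linearly many disjoint admissible contractions to choose from, providing the flexibility needed to control stopping.

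For \ds, I would guess OPT within a factor of $2$ by the standard doubling trick, call the guess $k$, and descend the tree choosing at each step one of the disjoint contractions of red degree at most~$d$, until the first node $T^*$ at which the number of bags drops to $c_d \cdot k$ for a suitable constant $c_d$ (depending only on $d$). The versatile property ensures such a descent exists. At $T^*$, for each bag $B$ I would greedily pick a small subset $R_B \subseteq B$ that dominates in $G$ the vertex set $B$ together with the union of all red neighbors of $B$ in $T^*$: since $B$ has red-degree at most $d$, a constant (in $d$) number of $G$-vertices from $B$ suffices. The output is $D := \bigcup_B R_B$.

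Correctness: any $v \in V(G)$ lies in some bag $B$, and an optimal dominator $u \in B'$ either has $B = B'$, in which case $v$ is covered by $R_B$; or $BB'$ is a black edge of $T^*$, in which case any representative of $B$ covers $v$ through the complete bipartite adjacency (so $R_B$ does); or $BB'$ is red, in which case $v \in B$ is a red neighbor of $B'$, hence covered by $R_{B'}$. The ratio is bounded by $|D| \leq |R_B|_{\max} \cdot |T^*| = O_d(k) = O_d(\mathrm{OPT})$. For the integrality gap, the same charging scheme applied bag by bag shows that a fractional optimum cannot drop below a constant multiple of $|D|$. For \textsc{Min $r$-Dominating Set} and \dmis{$2r$}, the same scheme is applied after replacing black edges by their $r$-th power within a red-connected component, which blows up the relevant constants by a factor depending only on $d$ and $r$ but preserves the boundedness of the ``red degree'' of this generalized adjacency; for the scattered-set versions one exchanges roles (pack instead of cover) in the same trigraph.

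The main obstacle is designing the stopping criterion so that it simultaneously keeps the bag count of $T^*$ proportional to OPT and keeps the per-bag greedy cost bounded: too shallow a descent leaves bags whose internal $G$-structure is too rich for a constant-size $R_B$ to dominate, too deep a descent produces too many bags to charge to OPT. The versatile tree is precisely what makes such an intermediate node always reachable, since at every step it offers many candidate contractions and one may choose the branch that balances both quantities. Handling the $r$-th power cleanly, so that ``red-connected'' regions at depth $r$ remain of bounded size, is the main technical point for the distance variants.
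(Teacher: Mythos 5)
Your skeleton (descend the versatile tree, stop at a well-chosen node, pick representatives per part) matches the paper's, but the step that makes everything work is missing, and the step you substitute for it does not hold. You stop when the number of bags reaches $c_d\cdot k$ and then claim that, because each bag $B$ has red degree at most $d$, a constant number of vertices of $B$ suffices to dominate $B$ together with its red neighbors. This is false in general: the red degree bounds the number of \emph{other parts} that are neither fully adjacent nor fully non-adjacent to $B$; it says nothing about the domination number of $G$ restricted to $B$ and its red neighborhood, which can be $\Theta(|B|)$ (e.g.\ $B$ inducing an independent set with only sporadic edges to the rest). Your correctness case analysis relies on exactly this unproved claim in two of its three branches, and you yourself flag it as ``the main obstacle'' without resolving it; the versatile tree alone does not guarantee that a node balancing bag count against per-bag greedy cost exists.

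The paper's actual mechanism is different and is the key idea you are missing: the stopping rule is driven by an optimal \emph{fractional} dominating set $w^*$ of total weight $\gamma^*$. A contraction is forbidden as soon as one of the two parts involved has accumulated $w^*$-weight at least $\tfrac{1}{2(t'+1)}$. When the descent gets stuck, (i) at least $|\mathcal P|/s$ of the linearly many disjoint candidate contractions are forbidden, so at least that many parts carry weight $\geq \tfrac{1}{2(t'+1)}$, giving $|\mathcal P|\leq 2s(t'+1)\gamma^*$; and (ii) every non-singleton part has weight $<\tfrac{1}{t'+1}$. Then, for a vertex $y$ in a part $P$ with no black neighbor, its closed neighborhood is spread over $P$ and at most $t'$ red-neighbor parts, so the LP constraint $\sum_{x\in N[y]}w^*(x)\geq 1$ forces one of these $\leq t'+1$ parts to have weight $\geq \tfrac{1}{t'+1}$ and hence to be a \emph{singleton}; that singleton is entirely contained in the chosen set $D$ and dominates $y$. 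This is what lets one arbitrary representative per part suffice, with no greedy subroutine inside bags. The distance-$2$ independent set bound is obtained symmetrically from the dual LP, and the distance-$r$ variants by passing to $G^{\leq r}$, whose twin-width stays bounded. Without the LP-weight stopping rule and the singleton argument, your proof does not go through.
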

These results are particular cases of the fact that when the twin-width of a matrix $A$ is bounded, there is a linear gap between the packing number and the minimum hitting set of the hypergraph with incidence matrix $A$.
Bounded twin-width matrices might more generally provide linear programs with bounded duality gap. 
It is noteworthy that \lmis (which corresponds to \dmis{1}) is \emph{not} covered by the previous theorem.
We further give some evidence that \mis may have a very different approximability status that \ds on bounded twin-width graphs.

\paragraph*{Related work.}

It is intrinsically difficult to compare our work to the existing literature since bounded twin-width graphs cover a wide spectrum of graph classes (more precisely, see \cref{thm:bd-tww} in \cref{sec:prelim}) and is rather transversal to well-established graph classes (see in the same subsection which graphs \emph{are} and which graphs are \emph{not} of bounded twin-width).
We sample some data points showing that our algorithms fare well even when compared to the state-of-the-art on a particular class of bounded twin-width (think, a single item on the list of \cref{thm:bd-tww}).
In that respect, the most flattering comparison point for our algorithms is perhaps with \subiso and \indsub.
On the contrary, \kmis admits parameterized subexponential algorithms on several sparse classes~\cite{Demaine05}, an easy single-exponential algorithm on bounded-degeneracy graphs by bounded search tree, and polynomial-time algorithms on perfect graphs \cite{Grotschel81} and other classes~\cite{GrzesikKPP19}, with which we cannot hope to uniformly compete.

\indsub, and particularly \subiso, have a long history of parameterized algorithms on sparse classes.
Let us recall some steps of that history.
Eppstein showed how to solve \textsc{(Induced) Subgraph Isomorphism} in time $2^{O(k \log k)}n$ on planar graphs~\cite{Eppstein99}, and then on apex\footnote{An apex graph is one that can be made planar by removing a single vertex.}-minor free graphs~\cite{Eppstein00}.
The latter algorithm would later be shown to work on every proper minor-closed class of graphs.
In modern terms, Eppstein's algorithm is based on \emph{low treewidth colorings}, and more precisely on the fact that planar graphs, but more generally $H$-minor free graphs, can be $k+1$-colored so that the union of any $k$ color class has treewidth $O(k)$.
Introducing a new kind of dynamic programming, dubbed \emph{embedded}, Dorn \cite{Dorn10} improved the running time of solving \indsub on planar graphs to $2^{O(k)} n$.
More recently, Pilipczuk and Siebertz presented a polynomial-space $2^{O(k \log k)}n$-time algorithm for \indsub on $H$-minor free graphs~\cite{PilipczukS19}.
This mainly uses the treedepth counterpart of Eppstein's approach.

Given an $O(1)$-sequence, our algorithm for \textsc{(Induced) Subgraph Isomorphism} also runs in time $2^{O(k \log k)}n$ (while it may face dense graphs) for the far-reaching generalization of bounded twin-width graphs (again we refer the reader to~\cref{thm:bd-tww} for other examples of bounded twin-width classes).
We also show with an elementary one-and-a-half-page proof that bounded twin-width classes are $\chi$-bounded.
This can be put in perspective with the $\chi$-boundedness of graphs of bounded clique-width~\cite{Dvorak12}, which is not an easy result.

On general graphs, the current fastest algorithm for the vertex-weighted variant of \apsp (\sapsp) is due to Yuster and runs in time $O(n^{2.842})$~\cite{Yuster09}, while no truly subcubic (i.e., running in time $O(n^{3-\varepsilon})$) algorithm is known without the use of fast matrix multiplication.
Since \sssp (\ssssp) can easily be solved in time $O(n \log n)$ in sparse graphs, i.e., with $O(n)$ edges, the algorithm of \cref{thm:sssp-informal} is only relevant on bounded twin-width classes that are dense.
Among the dense classes of~\cref{thm:bd-tww}, one can find for example bounded clique-width graphs.
Recently Kratsch and Nelles showed how to solve vertex-weighted \sapsp on graphs given with a clique-width expression of width $\text{cw}$ in time $O(\text{cw}^2 n^2)$~\cite{Kratsch20}.

\paragraph*{Organization of the paper.}
In \cref{sec:prelim} we introduce the relevant graph-theoretic background, then formally define contraction sequences and twin-width, and finally summarize which classes are known to have bounded twin-width and explain how $d$-sequences are given to our forthcoming algorithms.
\cref{sec:kmis} contains a $2^{O(k)}n$-time algorithm for \kmis (and \scaset) and a $2^{O(k \log k)}n$-time algorithm for \textsc{(Induced) Subgraph Isomorphism}. 
In~\cref{sec:kds}, we present a $2^{O(k)}n$-time algorithm for \kds.
In~\cref{sec:chibounded}, we show that bounded twin-width classes are $\chi$-bounded and satisfy the strong Erd\H{o}s-Hajnal property.
In~\cref{sec:ibp-sp}, we prove that bounded twin-width graphs can be edge-partitioned into linearly many bicliques whose sides are both on consecutive vertices, for a fixed ordering of the vertex set.
We then use that property to derive algorithms solving \sssp and \apsp in time $O(n \log n)$ and $O(n^2 \log n)$, respectively.
We also observe that \textsc{Diameter} is unlikely to be solvable in truly subquadratic time, in graphs of bounded twin-width.
In~\cref{sec:approx-alg}, we give $O(1)$-approximation algorithms for \ds and related problems, provided a $d$-sequence.
We complement this result by some evidence that the approximability of \mis on bounded twin-width graphs may have a very different status.
Finally in~\cref{sec:conclusion}, we suggest some future work on approximation algorithms for bounded twin-width graphs and exact exponential algorithms for general graphs.

\section{Preliminaries}\label{sec:prelim}

We denote by $[i,j]$ the set of integers $\{i,i+1,\ldots, j-1, j\}$, and by $[i]$ the set of integers $[1,i]$.
If $\mathcal X$ is a set of sets, we denote by $\cup \mathcal X$ their union.
The notation $O_d(\cdot)$ gives an asymptotic behavior when $d$ is seen as a constant.
The notation $O^*(\cdot)$ suppresses polynomial factors.

Unless stated otherwise, all graphs are assumed undirected and simple, that is, they do not have parallel edges or self-loops.
We denote by $V(G)$ and $E(G)$, the set of vertices and edges, respectively, of a graph $G$. 
For $S \subseteq V(G)$, we denote the \emph{open neighborhood} (or simply \emph{neighborhood}) of $S$ by $N_G(S)$, i.e., the set of neighbors of $S$ deprived of $S$, and the \emph{closed neighborhood} of $S$ by $N_G[S]$, i.e., the set $N_G(S) \cup S$.
We simplify $N_G(\{v\})$ into $N_G(v)$, and $N_G[\{v\}]$ into $N_G[v]$.
We denote by $G[S]$ the subgraph of $G$ induced by $S$, and $G - S := G[V(G) \setminus S]$.
A \emph{connected subset} (or \emph{connected set}) $S \subseteq V(G)$ is one such that $G[S]$ is connected.
For two disjoint sets $A, B \subseteq V(G)$, $E(A,B)$ denotes the set of edges in $E(G)$ with one endpoint in $A$ and the other one in $B$.
We also denote by $G[A,B]$ the bipartite graph $(A \cup B,E(A,B))$.
Two distinct vertices $u, v$ such that $N(u) = N(v)$ are called \emph{false twins}, and \emph{true twins} if $N[u] = N[v]$.
Two vertices are \emph{twins} if they are false twins or true twins.
For two vertices $u, v \in V(G)$, the \emph{distance $d_G(u,v)$} is the number of edges in a shortest path from $u$ to $v$, and $\infty$ if $u$ and $v$ are in two distinct connected components of $G$.
Then the \emph{radius} of a graph $G$ is defined as $\min_{u \in V(G)} \max_{v \in V(G)} d_G(u,v)$ and the \emph{diameter $\diam(G)$} as $\max_{u \in V(G)} \max_{v \in V(G)} d_G(u,v)$. 
In all the notations with a graph subscript, we may omit it if the graph is clear from the context.

A graph is \emph{$H$-free} if it does not contain $H$ as an induced subgraph.
However we make an exception for $H = K_{t,t}$.
A~$K_{t,t}$-free graph is a graph with no biclique $K_{t,t}$ \emph{as a subgraph}.
An \emph{edge contraction}\footnote{Not to be confused with our (vertex) contractions, which can be on non-adjacent vertices.} of two adjacent vertices $u, v$ consists of merging $u$ and $v$ into a single vertex adjacent to $N(\{u,v\})$ (and deleting $u$ and $v$).
A graph $H$ is a \emph{minor} of a graph $G$ if $H$ can be obtained from $G$ by a sequence of vertex and edge deletions, and edge contractions.
A graph $G$ is said \emph{$H$-minor free} if $G$ does not contain $H$ as a minor.
A class\footnote{That is, a set of graphs closed under isomorphism} $\mathcal C$ of graphs \emph{has property $\Pi$} if every graph of $\mathcal C$ has property $\Pi$.
A class is \emph{hereditary} if it is closed under taking induced subgraphs.

\subsection{Trigraphs, contraction sequences, and twin-width of a graph}\label{subsec:tww-def}

A \emph{trigraph $G$} has vertex set $V(G)$, (black) edge set $E(G)$, and red edge set $R(G)$ (the error edges), with $E(G)$ and $R(G)$ being disjoint.
The \emph{set of neighbors $N_G(v)$} of a vertex $v$ in a trigraph $G$ consists of all the vertices adjacent to $v$ by a black or red edge.
A $d$-trigraph is a trigraph $G$ such that the \emph{red graph} $(V(G),R(G))$ has degree at most~$d$.
In that case, we also say that the trigraph has \emph{red degree} at most~$d$.
A (vertex) \emph{contraction} or \emph{identification} in a trigraph~$G$ consists of merging two (non-necessarily adjacent) vertices $u$ and $v$ into a single vertex $z$, and updating the edges of $G$ in the following way.
Every vertex of the symmetric difference $N_G(u) \triangle N_G(v)$ is linked to $z$ by a red edge.
Every vertex $x$ of the intersection $N_G(u) \cap N_G(v)$ is linked to $z$ by a black edge if both $ux \in E(G)$ and $vx \in E(G)$, and by a red edge otherwise.
The rest of the edges (not incident to $u$ or $v$) remain unchanged.
We insist that the vertices $u$ and $v$ (together with the edges incident to these vertices) are removed from the trigraph. 
See \cref{fig:contraction} for an illustration.
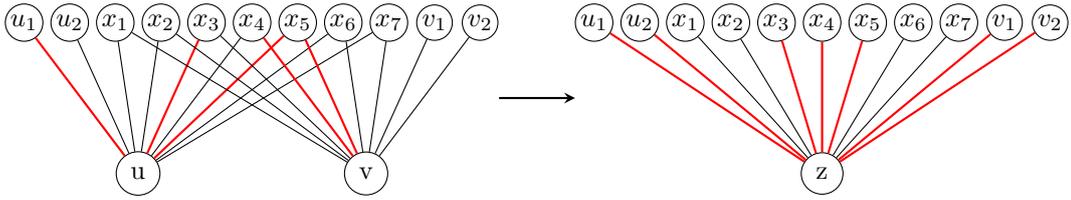
\begin{figure}
\begin{tikzpicture}
\def\v{2}
\def\t{6}
\def\s{0.6}

\draw[thick, -stealth] (3.25,\v /2) -- (4.25,\v/2) ;

\foreach \i/\j in {-5/u_1,-4/u_2,-3/x_1,-2/x_2,-1/x_3,0/x_4,1/x_5,2/x_6,3/x_7,4/v_1,5/v_2}{
  \node[draw,circle,inner sep=0.03cm] (n\i) at (\s * \i,\v) {$\j$} ; 
}

\node[draw,circle] (u) at (-1.5,0) {u} ;
\node[draw,circle] (v) at (1.5,0) {v} ;

\foreach \i in {-4,-3,-2,0,2,3}{
  \draw (u) -- (n\i) ;
}
\foreach \i in {-5,-1,1}{
  \draw[thick, red] (u) -- (n\i) ;
}
\foreach \i in {-3,...,-1,2,3,...,5}{
  \draw (v) -- (n\i) ;
}
\foreach \i in {0,1}{
  \draw[thick, red] (v) -- (n\i) ;
}

\begin{scope}[xshift=7.5cm]
\node[draw,circle] (uv) at (0,0) {z} ;
\foreach \i/\j in {-5/u_1,-4/u_2,-3/x_1,-2/x_2,-1/x_3,0/x_4,1/x_5,2/x_6,3/x_7,4/v_1,5/v_2}{
  \node[draw,circle,inner sep=0.03cm] (m\i) at (\s * \i,\v) {$\j$} ; 
}

\foreach \i in {-3,-2,2,3}{
  \draw (uv) -- (m\i) ;
}
\foreach \i in {-5,-4,-1,0,1,4,5}{
  \draw[thick, red] (uv) -- (m\i) ;
}
\end{scope}
\end{tikzpicture}
\caption{Contraction of vertices $u$ and $v$, and how the edges of the trigraph are updated.}
\label{fig:contraction}
\end{figure}

A \emph{$d$-sequence} (or \emph{contraction sequence}) is a sequence of \mbox{$d$-trigraphs} $G_n, G_{n-1}, \ldots, G_1$, where $G_n = G$, $G_1=K_1$ is the graph on a single vertex, and $G_{i-1}$ is obtained from $G_i$ by performing a single contraction of two (non-necessarily adjacent) vertices.
We observe that $G_i$ has precisely $i$ vertices, for every $i \in [n]$.
The twin-width of~$G$, denoted by $\tww(G)$, is the minimum integer~$d$ such that $G$ admits a~$d$-sequence.

For $u \in V(G_i)$, we denote by $u(G)$ the subset of $V(G)$ that was contracted to the single vertex $u$ in $G_n, G_{n-1}, \ldots, G_i$.
Twin-width and $d$-sequences can be equivalently seen as a partition refinement process on $V(G)$.
We start with the finest partition $\mathcal P_n = \{\{v\} : v \in V(G)\}$, and end with the coarsest partition $\mathcal P_1 = \{V(G)\}$.
There is a \emph{partition sequence} $\mathcal P_n, \mathcal P_{n-1}, \ldots, \mathcal P_2, \mathcal P_1$ mimicking the contraction sequence, where the contraction of $u, v \in V(G_i)$ corresponds to the merge of parts $u(G_i), v(G_i) \in \mathcal P_i$ to form the part $u(G_i) \cup v(G_i) = z(G_{i-1}) \in \mathcal P_{i-1}$, while all the other parts are unchanged from $P_i$ to $P_{i-1}$.
The red degree (bounded by~$d$) of a part $P \in \mathcal P_i$ now corresponds to the number of other parts $P' \in \mathcal P_i$ which are not fully adjacent nor fully non-adjacent to $P$ in $G$.
We may denote by $G_{\mathcal P}$ the trigraph corresponding to partition $\mathcal P$ over $V(G)$.
Thus $G_i = G_{\mathcal P_i}$.

\subsection{Classes with bounded twin-width and how the sequences are given}

The current paper is devoted to presenting efficient algorithms when the input has bounded twin-width, and the contraction sequence is given.
It is therefore important to know how realistic this scenario is.
Fortunately, in the first two papers of the series~\cite{twin-width1,twin-width2} we showed that many central (di)graph classes, be it sparse or dense, have bounded twin-width.
We summarize them here.
\begin{theorem}[\cite{twin-width1,twin-width2}]\label{thm:bd-tww}
  The following classes have bounded twin-width.
  \begin{compactitem}
  \item Bounded clique-width/rank-width, and more generally, boolean-width graphs,
  \item every hereditary proper subclass of permutation graphs,
  \item posets of bounded antichain size (seen as digraphs),
  \item unit interval graphs,\footnote{In this paper, we even show a linear-time algorithm finding a 2-sequence.}
  \item $K_t$-minor free graphs,
  \item map graphs,\footnote{To find the contraction sequence, we need to be given a map embedding.}
  \item subgraphs of $d$-dimensional grids,
  \item $K_t$-free unit $d$-dimensional ball graphs,
  \item $\Omega(\log n)$-subdivisions of all the $n$-vertex graphs,
  \item cubic expanders defined by iterative random 2-lifts\footnote{The actual definition of a 2-lift can be found in~\cite{twin-width2} but will not be needed here.} from $K_4$,\footnote{More generally, any graph built by successive $s$-lifts applied to $K_t$.}
  \item strong products of two bounded twin-width classes one of which has also bounded degree,
  \item any subgraph closure of a $K_{t,t}$-free bounded twin-width class, and
  \item any first-order interpretation\footnote{Actually a more general result is shown in the first paper of the series~\cite{twin-width1}.} of a bounded twin-width class.
  \end{compactitem}  
\end{theorem}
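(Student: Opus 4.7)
The plan is to invoke, for each bullet point, the corresponding result from the first two papers of the series~\cite{twin-width1,twin-width2}; it is useful to organize the list by the proof technique that underlies each item.

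For the classes defined by a bounded decomposition width (bounded clique-width, rank-width, boolean-width graphs, and unit interval graphs), the approach is constructive: process a witness decomposition bottom-up and turn each internal node into a short sequence of contractions. The rank or boolean-width bound caps the number of distinct neighborhood types seen across each cut, which in turn bounds the red degree; for unit interval graphs a single sweep along the interval ordering gives the 2-sequence announced in footnote~8.

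For the sparse or geometric classes ($K_t$-minor free graphs, map graphs, subgraphs of $d$-dimensional grids, $K_t$-free unit $d$-dimensional ball graphs, $\Omega(\log n)$-subdivisions of arbitrary graphs, proper hereditary subclasses of permutation graphs, and posets of bounded antichain size), the strategy is to exhibit an appropriate linear or tree-like vertex ordering along which the contractions can be performed while keeping every trigraph's red degree bounded. Permutation-graph subclasses and bounded-antichain posets are handled through the matrix-pattern characterization of bounded twin-width: being a proper hereditary subclass forbids arbitrarily large ``mixed'' submatrices, which is equivalent to bounded twin-width. Minor-free and map graphs are handled via product-structure-style decompositions; long subdivisions and the cubic expanders built by iterative $2$-lifts from $K_4$ use explicit recursive constructions tailored to their definition.

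For the closure items (strong products with one factor of bounded degree, subgraph closures of $K_{t,t}$-free bounded twin-width classes, and first-order interpretations of bounded twin-width classes), the strategy is compositional: starting from a good sequence on the base class(es), one shows that the operation inflates the red degree by only a bounded factor, with $K_{t,t}$-freeness or bounded degree preventing a blow-up. The main obstacle, uniform across the list, is designing a partition-refinement process whose red degree stays globally bounded rather than merely small at each individual step; this is especially delicate for the dense classes such as bounded clique-width and for the expander case, where no obvious local structure is available, and is resolved by the dedicated techniques of the two earlier papers, which the present statement merely collects.
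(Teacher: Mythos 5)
This statement is given as a citation theorem: the paper itself does not prove it, but refers the reader to the first two papers of the series~\cite{twin-width1,twin-width2} for the individual items. The one exception is the unit interval graph item, which the present paper does prove (\cref{lem:unitinterval}): it exhibits a 2-sequence by sweeping along the universal unit interval graph $I_{k,nk}$, contracting vertices from left to right so that the red graph is always a path --- this agrees with your description of ``a single sweep along the interval ordering.'' Since the remaining twelve items are not argued in this paper, your proposal cannot be compared against an internal proof; what you have written is a survey of the cited proofs, and as such it is broadly in the right spirit but imprecise on some attributions. In particular, $K_t$-minor free graphs (and, via the map embedding, map graphs) are handled in~\cite{twin-width1} through the ordered-matrix characterization of twin-width and a Marcus--Tardos/Guillemot--Marx style ``mixed minor'' argument on a suitable vertex ordering, not via product-structure decompositions; and the bounded clique/rank/boolean-width item is proved by extracting a contraction sequence from the decomposition tree, with the resulting twin-width bound exponential in the width rather than coming from ``a short sequence of contractions'' at each node. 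These inaccuracies do not affect the truth of the statement, which the paper takes as a black box; but note that, as written, your text is a high-level description of strategies rather than a proof of any of the items.
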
  

Furthermore all our proofs are constructive and give rise to an $O(n^2)$-time algorithm to find an $O(1)$-sequence for an $n$-vertex graph of the class.
For some sparse classes, or dense classes with a sparse representation (like unit interval graphs), the sequence can even be found in quasi-linear time or even linear time.
Noticeably, we do \emph{not} know a polynomial-time algorithm that, given a ``general'' graph with bounded twin-width, outputs an $O(1)$-sequence.
Thus these algorithms are mostly ad hoc and specifically use properties of each listed class.
On the other hand, classes with unbounded twin-width include permutation graphs, cubic graphs, unit disk graphs, and $K_t$-free unit segment graphs.

It is striking that such a wide variety of seemingly unrelated graph classes allows for a unified algorithmic treatment.
One may think that this has to come with a prohibitive running time.
In fact our algorithms for \kmis and \kds run in the essentially optimal $2^{O(k)}n$-time (once the contraction sequence is computed), while our algorithms for \indsub and \subiso match the best known running time of $2^{O(k \log k)}n$ on $K_t$-minor free graphs.

It may seem surprising that, given the contraction sequence, our algorithms are linear (for fixed $k$) in the number of vertices, while the input graph $G$ may have $\Theta(n^2)$ edges.
Also the sequence itself consists of $n$ graphs on up to $n$ vertices, and the total number of \emph{vertices} in $G_n, \ldots, G_1$ is $\Theta(n^2)$.
The short answer is that we do not need to read the edges of $G$, nor all the vertices of all the trigraphs $G_i$.
Instead we only look, for every $i \in [n]$, at balls of radius\footnote{For \kds, the algorithm is more involved and this radius is function of $k$ and $d$.} $O(k)$ centered at the newly contracted vertex in the red graph of $G_i$.
Each such vertex set has size $d^{O(k)}$, so we may query red \emph{and} black edges within it.
The total number of operations remains bounded by $g(d,k) n$, for some function $g$.

One may still wonder if our algorithms can work with a compact encoding of the \mbox{$d$-sequence}, such as the mere list of contracted vertices.
The algorithms of~\cref{thm:bd-tww} computing the $d$-sequences all produce the union tree of how the vertices of $G$ are eventually merged into a single vertex.
Given this tree, we can solve the disjoint set problem (union-find) in optimal $O(n)$-time~\cite{Gabow85} (without inverse Ackermann function).
Thus we can, starting from $G$, perform the next contraction on the list, when the next trigraph of the sequence is needed.
The number of edge updates per contraction is a constant (more precisely $O(d)$).
One shall not forget, though, that we need in general $\omega(n)$-time to compute the sequence in the first place.  

\section{Practical algorithms for \kmis and its generalizations}\label{sec:kmis}

In this section, we present essentially optimal fixed-parameter algorithms for \kmis, \indsub, \subiso, on graphs of bounded twin-width.
The crux for the running time analysis is a simple bound on the number of connected subsets of size at most $k$ in a bounded-degree graph.
The key to show this folklore lemma is that a connected subgraph of size at most $k$ can be spanned by a walk of length at most $2k-3$.

\begin{lemma}[folklore]\label{lem:connected-subgraphs}
  The number of vertex subsets of size at most $k$ inducing a connected subgraph in an $n$-vertex graph of maximum degree $d$ is at most $(d^{2k-2}+1) n$. 
\end{lemma}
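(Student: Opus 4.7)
The plan is to build a surjection from short walks in $G$ to the connected subsets of $V(G)$ of size at most $k$, and then count walks using the maximum-degree bound.

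First, I would establish the folklore walk fact: every connected graph on $s \geq 2$ vertices admits a walk of length at most $2s-3$ that visits every vertex. For this, take any spanning tree $T$ of the graph, pick a longest path $P = a \cdots b$ in $T$, and perform a depth-first traversal of $T$ that starts at $a$ and terminates at $b$. Such a traversal uses every edge of $T$ exactly twice except the edges of $P$, which are used once, giving total length $2(s-1) - |E(P)| \leq 2s-3$, since $|E(P)| \geq 1$.

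Second, to each connected subset $S \subseteq V(G)$ of size $s \in [2,k]$, I would associate one such walk inside $G[S]$; it has length at most $2s-3 \leq 2k-3$ and starts at some vertex of $G$. The assignment ``walk $\mapsto$ set of vertices visited'' is then a surjection onto the connected subsets of size in $[2,k]$. It therefore suffices to upper-bound the total number of walks of length at most $2k-3$ starting at any vertex of $G$, and to add $n$ for the singletons (each of which corresponds uniquely to a length-$0$ walk).

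Third, since $G$ has maximum degree $d$, there are at most $d^\ell$ walks of length $\ell$ starting at any fixed vertex, hence at most $n \sum_{\ell=1}^{2k-3} d^\ell$ non-singleton walks in total. A routine geometric-series computation shows this sum is at most $d^{2k-2}$ whenever $d \geq 2$, which combined with the $n$ singletons yields the announced $(d^{2k-2}+1)n$ bound. The degenerate cases $d \in \{0,1\}$ can be dispatched directly, as a graph of maximum degree at most $1$ has no connected subset of size greater than $2$, so the count is trivially at most $2n$. I anticipate no real obstacle; the only minor bookkeeping is verifying the elementary inequality $\sum_{\ell=1}^{2k-3} d^\ell \leq d^{2k-2}$ for $d \geq 2$.
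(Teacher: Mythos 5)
Your proposal is correct and takes essentially the same approach as the paper: both arguments span each connected subgraph of size $s \leq k$ by a walk of length at most $2s-3$ obtained from a spanning tree, then bound the number of such walks by $n\sum_{\ell \leq 2k-3} d^{\ell} \leq d^{2k-2}n$ for $d \geq 2$, handling $d \in \{0,1\}$ directly. Your longest-path argument for why at least one tree edge is traversed only once is a slightly more detailed rendering of the same observation the paper makes in passing.
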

\begin{proof}
  If $d=0$ or $d=1$, the total number of connected subgraphs is $n$ or at most $3n/2$, respectively.
  Thus the claim holds in these cases, and we now assume that $d \geqslant 2$.
  Every connected subgraph $H$ has a spanning tree, say, $T_H$ rooted at $v_H$.
  The circumnavigation of $T_H$ from $v_H$ follows every edge of $T_H$ at most twice.
  Moreover if we only span $T_H$ without going back to $v_H$ in the end, at least one edge of $T_H$ is taken only once.
  Hence every connected subgraph of size at most $k$ can be described by a starting vertex ($n$ choices) followed by a walk on $2k-3$ other vertices (at most $d$ choices for each).
  Therefore the number of connected vertex subsets of size at most $k$ is bounded by $n \Sigma_{0 \leqslant i \leqslant 2k-3} d^i \leqslant n d^{2k-2}$.
\end{proof}

We get the following as a direct corollary of the previous proof.
\begin{corollary}\label{cor:connected-subgraphs}
  The number of connected vertex sets of size at most $k$, intersecting a set~$X$, in a graph of maximum degree $d$ is at most $(d^{2k-2}+1) |X|$.
  Furthermore they can be enumerated in time $O(d^{2k-2}|X|)$.
\end{corollary}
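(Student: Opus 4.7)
The plan is to replay the proof of Lemma~\ref{lem:connected-subgraphs} almost verbatim, tightening only the choice of root of the spanning tree. Let $S$ be a connected vertex set of size at most $k$ that meets $X$; pick any $v_S \in S \cap X$ and a spanning tree $T_S$ of $G[S]$ rooted at $v_S$. The circumnavigation argument from Lemma~\ref{lem:connected-subgraphs} then encodes $S$ by a pair $(v_S, W)$, where $W$ is a walk in $G$ of length at most $2k-3$ starting at $v_S$. Since there are at most $|X|$ choices for $v_S$ and at most $\sum_{0 \leq i \leq 2k-3} d^i \leq d^{2k-2}+1$ choices for $W$, the counting bound $(d^{2k-2}+1)|X|$ follows.

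For the enumeration part, I would run, from each $x \in X$ independently, a depth-first exploration of all walks of length at most $2k-3$ starting at $x$, maintaining the underlying vertex set of the walk as it is extended (for instance as a stack with a small auxiliary membership array reset lazily). Each extension costs $O(1)$ given a standard adjacency representation, and there are at most $\sum_{0 \leq i \leq 2k-3} d^i = O(d^{2k-2})$ walks produced per starting vertex, so the total cost is $O(d^{2k-2}|X|)$. The same subset $S$ may be listed several times, once per valid encoding $(v_S, W)$, but that multiplicity is absorbed by the same counting argument; if one really needs each subset output exactly once, a single pass through a hash table keyed on the canonical (sorted) vertex list suffices within the stated budget.

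I do not anticipate any real obstacle: the statement is essentially a rewording of Lemma~\ref{lem:connected-subgraphs} with $n$ replaced by $|X|$, and the only observation needed on top of the previous proof is that $S \cap X \neq \emptyset$ lets us pick the root of $T_S$ inside $X$, which is exactly the hypothesis of the corollary. The mildest care point is the enumeration bookkeeping, namely extending and backtracking the walks so that the current set can be read off in constant amortised time; this is routine but deserves to be spelt out if a fully detailed proof is written.
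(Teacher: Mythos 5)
Your proof is correct and follows exactly the route the paper intends: the corollary is stated as ``a direct corollary of the previous proof,'' and the only adaptation needed is precisely the one you make, namely rooting the spanning tree at a vertex of $S\cap X$ so that the start of the encoding walk ranges over $X$ rather than over all of $V(G)$. The enumeration-by-DFS observation is likewise the implicitly intended argument.
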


We now show how to solve \kmis by dynamic programming on the connected subsets of size at most $k$ in the red graphs of a $d$-sequence given with the input graph. 

\begin{theorem}\label{thm:k-mis}
  Given an $n$-vertex graph $G$, a positive integer $k$, and a $d$-sequence $G=G_n, \ldots, G_1=K_1$, \kmis can be solved in time $O(k^2 d^{2k}n)=2^{O_d(k)}n$.
\end{theorem}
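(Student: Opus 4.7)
The plan is a dynamic programming algorithm walking through the sequence $G_n, G_{n-1}, \ldots, G_1$ forward. At stage $i$, for every red-connected set $C \subseteq V(G_i)$ of size at most $k$, we maintain $f_i(C)$, the maximum size of an independent set $I$ of $G$ whose trace on the partition $\mathcal{P}_i$ is exactly $C$; that is, $I \cap u(G) \neq \emptyset$ iff $u \in C$. The structural observation driving the algorithm is that if two distinct parts $u(G_i), v(G_i)$ both meet an independent set of $G$, then the trigraph edge $uv$ cannot be black, since a black edge encodes full bipartite adjacency in $G$. Consequently, between any two red-connected components of the trace $C$ of an independent set, only non-edges of $G_i$ are allowed, and the global maximum decomposes as a sum of $f_i$-values over a family of red-connected sets that pairwise see no black edge.

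Initialization is immediate: in $G_n$ there are no red edges, so only singletons are red-connected and we set $f_n(\{v\}) = 1$. For the inductive step, let $u, v \in V(G_i)$ be contracted into $z \in V(G_{i-1})$. Every red-connected $C \subseteq V(G_{i-1})$ with $z \notin C$ lies in $V(G_i) \setminus \{u, v\}$, on which $R_{i-1}$ and $R_i$ coincide as graphs; so $C$ is red-connected in $R_i$ as well and $f_{i-1}(C) = f_i(C)$ is inherited. For a red-connected $C \ni z$ of size at most $k$, any independent set with trace $C$ in $\mathcal{P}_{i-1}$ has trace $C^* = (C \setminus \{z\}) \cup S$ in $\mathcal{P}_i$ for some nonempty $S \subseteq \{u, v\}$. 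For each of the three choices of $S$, we split $C^*$ into its red-connected components $D_1, \ldots, D_s$ of $R_i[C^*]$, verify that no pair of components is joined by a black edge of $G_i$ (otherwise infeasible), and propose $\sum_{j} f_i(D_j)$; we set $f_{i-1}(C)$ to be the maximum feasible proposal.

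For running time, \cref{cor:connected-subgraphs} applied to $R_{i-1}$ (of maximum degree $d$) with $X = \{z\}$ bounds the number of red-connected sets $C \ni z$ of size at most $k$ by $d^{2k-2}+1$ and enumerates them in $O(d^{2k-2})$ time. For each such $C$, the recurrence considers three sets $C^*$ of size at most $k+1$ and performs $O(k^2)$ black-edge checks in $G_i$; all relevant vertices lie in the red-ball of radius at most $k+1$ around $u$ or $v$ in $R_i$, where adjacencies are directly accessible as discussed in \cref{sec:prelim}. This costs $O(k^2 d^{2k-2})$ per step and $O(k^2 d^{2k} n)$ overall. At the end, the unique vertex $w$ of $G_1$ satisfies $w(G) = V(G)$, so the answer is yes iff $f_1(\{w\}) \ge k$; since any feasible $C$ trivially satisfies $f_i(C) \ge |C|$, one may also terminate as soon as some feasible red-connected set of size exactly $k$ is discovered at any stage.

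The main obstacle is the correctness of the update rule, which splits into completeness and soundness. For completeness, any independent set $I$ with trace $C \ni z$ in $\mathcal{P}_{i-1}$ has a unique refined trace $C^*$ in $\mathcal{P}_i$, and the red-components $D_j$ of $R_i[C^*]$ have no $G$-edges among them (every inter-component trigraph edge is a non-edge and is inherited from $G$); thus $I$ decomposes additively across the $D_j$ and $|I| \le \sum_j f_i(D_j)$. For soundness, if the feasibility check succeeds for some choice of $S$, then optimal witnesses for each $f_i(D_j)$ can be concatenated into a single independent set of $G$, since all edges between distinct components are non-edges. A minor edge case is that when $|C| = k$ and $|S| = 2$ one has $|C^*| = k+1$, and may need a stored value $f_i(D_j)$ with $|D_j| = k+1$; but in that scenario any feasible $D_j$ already witnesses an independent set of size at least $k+1$, and the early-termination rule fires before this pathological lookup is required.
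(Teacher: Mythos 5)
Your proposal is correct and follows essentially the same forward dynamic-programming scheme as the paper's own proof: the same partial-solution table indexed by red-connected subsets of size at most~$k$ (your $f_i(C)$ is the paper's optimum $(T,S) \in \mathcal S_i$), the same three-way case split over which of the two pre-images $u,v$ of $z$ the trace uses, the same decomposition into red-connected components and black-edge compatibility check, and the same early-termination device to sidestep size-$(k+1)$ lookups. Your explicit treatment of the $|C^*|=k+1$ edge case is, if anything, spelled out a bit more carefully than in the paper, where that corner case is absorbed into the ``$|S|\geqslant k$'' escape clause of the invariant.
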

\begin{proof}
  Our algorithm maintains a set of \emph{optimum partial solutions} in the current trigraph, starting from $G$, and progressively going along the $d$-sequence.
  Let us start with a definition of the partial solutions and of their optimality.
  
  A~\emph{partial solution} in the trigraph $G_i$ is a pair $(T,S)$ where $T \subseteq V(G_i)$ is a vertex set inducing a connected subgraph in the red graph $(V(G_i),R(G_i))$, and $S \subseteq V(G)$ is an independent set of $G$ such that $S \subseteq \bigcup_{u \in T} u(G)$ and for every $u \in T$, $S \cap u(G) \neq \emptyset$.
  A~partial solution $(T,S)$ is said \emph{optimum} if there is no partial solution $(T,S')$ such that $|S| < |S'|$.
  A~set $T \subseteq V(G_i)$ is said \emph{realizable} (in $G_i$) if there is an $S \subseteq V(G)$ such that $(T,S)$ is a partial solution in $G_i$.
  Notice that \emph{not} every connected subset in the red graph is realizable.
  For instance, it is easy to engineer a situation where there is no independent set intersecting the three vertices of a 3-vertex red path.
  Initially, in $G$, the only connected subgraphs of the red graph are singletons (since there is no red edge).
  So there are exactly $n$ (optimum) partial solutions in $G=G_n$: Each vertex $v$ of $G$ induces a partial solution $(\{v\},\{v\})$.
  We denote by $\mathcal S_n$ this set of $n$ optimum partial solutions.
  It boils down to determining if there is a partial solution $(\_,S)$ in $G_1$ (or actually in any $G_i$) with $|S| \geqslant k$.
  For $i$ going from $n-1$ down to~1, we will build a set of optimum partial solutions $\mathcal S_i$ in $G_i$ from the set $\mathcal S_{i+1}$, keeping the invariant that for every realizable set $T \subseteq V(G_i)$, there is a unique optimum partial solution $(T,S)$ stored in $\mathcal S_i$ (and no other partial solution in $\mathcal S_i$).

  We shall then describe how we update the set of optimum partial solutions after a single contraction.
  Two partial solutions $(T,\_)$ and $(T',\_)$ in $G_i$ are said \emph{disjoint} if $T \cap T' = \emptyset$, and \emph{separate}, if they are disjoint and there is no red edge $uu' \in R(G_i)$ with $u \in T$ and $u' \in T'$.
  Two separate partial solutions $(T,\_)$ and $(T',\_)$ are said \emph{compatible} if there is no edge $uu' \in E(G_i) \cup R(G_i)$ with $u \in T$ and $u' \in T'$.
  The \emph{union} of two compatible partial solutions $(T_1,S_1)$ and $(T_2,S_2)$ as $(T_1,S_1) \cup (T_2,S_2) := (T_1 \cup T_2,S_1 \cup S_2)$.
  By definition, such a union is \emph{not} a partial solution since $T$ induces two connected components in its current red graph. 
  Nevertheless we will build the new (connected) partial solutions of $G_i$ by making unions of up to $d+2$ pairwise compatible partial solutions in $G_{i+1}$.
  These unions will be connected in $G_i$, hence will correspond to partial solutions as well.

  Let us be more specific.
  Say $u, v \in V(G_{i+1})$ are contracted into $z \in V(G_i)$ to form $G_i$.
  We say that a partial solution $(T,\_)$ in $G_i$ \emph{intersects} a set $X \subseteq V(G_i)$ if $T \cap X \neq \emptyset$. 
  We initialize $\mathcal S_i$ with all the partial solutions of $\mathcal S_{i+1}$ not intersecting $\{u,v\}$.
  We now add one partial solution in $\mathcal S_i$ per realizable set $T \ni z$ in $G_i$, of size at most $k$.
  For every $T \subseteq V(G_i)$ such that $z \in T$ and $T$ induces a connected subgraph on at most $k$ vertices in the red graph $(V(G_i),R(G_i))$, we observe three possibilities for a potential partial solution $(T,S)$.
  Either $S$ intersects $u(G)$ and $v(G)$, or it intersects only $u(G)$, or it intersects only $v(G)$.
  (It is not possible that $S \cap (u(G) \cup v(G)) = \emptyset$ since $T$ contains $z$.)
  Therefore we take the best (meaning with the largest $S$, breaking ties arbitrarily) of the potential partial solutions $\bigcup \dec(T \setminus \{z\} \cup \{u, v\}), \bigcup \dec(T \setminus \{z\} \cup \{u\}),\bigcup \dec(T \setminus \{z\} \cup \{v\})$, where $\dec(X)$ is the set with one partial solution per connected component of $X$ in its red graph (here $(V(G_{i+1}),R(G_{i+1})$).
  See \cref{fig:k-is-update} for an illustration of this decomposition.
  In the very possible event that at least one such connected component of $X$ is not realizable, $\dec(X) =$ None.
  The union $\bigcup \dec(X)$ of all the partial solutions of $\dec(X)$ is None if $\dec(X)=$ None or if there is at least one black edge between two connected components.
  Otherwise $\bigcup \dec(X)$ is a pair $(T,S)$ as defined in the previous paragraph, since the partial solutions of $\dec(X)$ are pairwise compatible.
  Since $T$ is chosen connected in $(V(G_i),R(G_i))$, $(T,S)$ is indeed a partial solution in $G_i$.
  If $\bigcup \dec(T \setminus \{z\} \cup \{u, v\}), \bigcup \dec(T \setminus \{z\} \cup \{u\}),\bigcup \dec(T \setminus \{z\} \cup \{v\})$ all three evaluate to None, then $\best\{\bigcup \dec(T \setminus \{z\} \cup \{u, v\}), \bigcup \dec(T \setminus \{z\} \cup \{u\}),\bigcup \dec(T \setminus \{z\} \cup \{v\})\}$ also returns None.
  This would mean that $T$ is not realizable.
  If instead $T$ is realizable, we get a partial solution $(T,S)$ that we put in $\mathcal S_i$.
  If $|S| \geqslant k$, we already have a large enough independent set; the algorithm outputs it and terminates.

  If we finally build $\mathcal S_1$, and no independent set of size at least $k$ was found, we output $S$, the unique set such that $(\_,S) \in \mathcal S_1$.
  $\mathcal S_1$ is indeed a singleton since there is only one realizable set in $G_1$.
  That finishes the description of the algorithm \algkmis, see Algorithm~\ref{alg:kmis}.

  \medskip
  
\begin{algorithm}
  \DontPrintSemicolon
  \SetKwInOut{Input}{Input}
  \SetKwInOut{Output}{Output}
  \Input{~~A graph $G$, a positive integer $k$, and a $d$-sequence $G=G_n, \ldots, G_1=K_1$.}
  \Output{~~An independent set of $G$ of size at least $\min(k,\alpha(G))$.}
  $\mathcal S_n \leftarrow \bigcup_{v \in V(G)} \{(\{v\},\{v\})\}$\;
 \For{$i = n-1 \rightarrow 1$}{
   $u, v \leftarrow $ contracted pair in $G_{i+1} \to G_i$\;
   $z \leftarrow $ contraction of $u$ and $v$ in $G_i$\;
   $\mathcal S_i \leftarrow$ partial solutions of $\mathcal S_{i+1}$ \emph{not} intersecting $\{u,v\}$\;
  \For{every vertex subset $T$ connected in $(V(G_i),R(G_i))$, with $z \in T$ and $|T| \leqslant k$}{
    $(T,S) \leftarrow~\best\{\bigcup \dec(T \setminus \{z\} \cup \{u, v\}), \bigcup \dec(T \setminus \{z\} \cup \{u\}),\bigcup \dec(T \setminus \{z\} \cup \{v\})\}$\;
    \If{$|S| \geqslant k$}{
      \Return{$S$}\;
    }
    \If{$(T,S) \neq$ None}{
      $\mathcal S_i \leftarrow \mathcal S_i \cup \{(T,S)\}$\;
    }
  } 
 }
 $\{(S,\_)\} \leftarrow \mathcal S_1$\; 
 \Return{$S$}\;
 \caption{\algkmis}
 \label{alg:kmis}
\end{algorithm}

\begin{figure}[h!]
  \centering
  \begin{tikzpicture}
    \def\s{0.9}
    \def\r{0.16}
    \node at (\s * -1, \s * 6) {$G_i$} ;
     \foreach \i/\j [count = \k from 3] in {4/1,4/2,4/3,4/4,5/0,5/2,5/3,5/4/5/5,6/2,6/4, 0/1,0/2.5,0/4,-1/1,-1/2,-1/3,-1/4, 1/0,2/0,3/0,0/-1,1/-1,2/-1,3/-1,4/-1, 1/5,2/5,3/5,0/6,1.5/6,3/6,4.5/6,0/7,1.5/7,3/7,4.5/7}{
      \node[draw,circle] (a\k) at (\s * \i,\s * \j) {} ;
     }
     \node[draw,circle] (az) at (\s * 2, \s * 2.5) {} ;
     \node at (\s * 1.6, \s * 2.6) {$z$} ;
    \foreach \i/\j in {z/13,3/7,3/4,3/8, 4/9,10/12,12/9,31/36,30/34,21/25,21/26,14/18,14/17,20/13,16/23,7/22,7/27,6/34,31/15}{
      \draw (a\i) -- (a\j) ;
    }
    \foreach \i/\j in {z/3,z/4,z/5,z/6,z/20,z/21,z/22,z/14,z/15, z/28,z/29,z/30, 4/8,8/11,8/9,9/5, 6/10, 28/31,31/35,28/32,31/32,32/36,36/35,30/33,33/37,37/38,38/34,34/33, 20/23,24/23,20/24,24/25,20/25,22/26,22/27,15/19,19/18,18/17,17/16,16/13,13/17,10/34,31/19}{
      \draw[red,thick] (a\i) -- (a\j) ;
    }
    \foreach \i/\c in {z,28,31,19,18,17,29,30,32,33,37,38,6,10,5,9,3,20,21,23,24}{
      \fill[opacity=0.5] (a\i) circle (\r * \s);
    }

    \begin{scope}[xshift=-7.5cm]
      \node at (\s * -1, \s * 6) {$G_{i+1}$} ;
    \foreach \i/\j [count = \k] in {2/2,2/3,4/1,4/2,4/3,4/4,5/0,5/2,5/3,5/4/5/5,6/2,6/4, 0/1,0/2.5,0/4,-1/1,-1/2,-1/3,-1/4, 1/0,2/0,3/0,0/-1,1/-1,2/-1,3/-1,4/-1, 1/5,2/5,3/5,0/6,1.5/6,3/6,4.5/6,0/7,1.5/7,3/7,4.5/7}{
      \node[draw,circle] (a\k) at (\s * \i,\s * \j) {} ;
    }
    \node at (\s * 2, \s * 2) {$u$} ;
    \node at (\s * 2, \s * 2.68) {$v$} ;
    \node[draw,rounded corners,dashed,fit=(a1) (a2)] {} ;
    \foreach \i/\j in {1/3,1/4,1/5,1/6,1/13,1/30,2/13,2/14,2/15, 3/7,3/4,3/8, 4/9,10/12,12/9,31/36,30/34,21/25,21/26,14/18,14/17,20/13,16/23,7/22,7/27,6/34,31/15}{
      \draw (a\i) -- (a\j) ;
    }
    \foreach \i/\j in {1/20,1/21,1/22, 2/28,2/29,2/30, 4/8,8/11,8/9,9/5, 6/10, 28/31,31/35,28/32,31/32,32/36,36/35,30/33,33/37,37/38,38/34,34/33, 20/23,24/23,20/24,24/25,20/25,22/26,22/27,15/19,19/18,18/17,17/16,16/13,13/17,10/34,31/19}{
      \draw[red,thick] (a\i) -- (a\j) ;
    }

    \foreach \i/\c in {2/blue,28/blue,31/blue,19/blue,18/blue,17/blue,29/blue,30/blue,32/blue,33/blue,37/blue,38/blue, 6/orange,10/orange, 5/green!40!black,9/green!40!black, 3/cyan, 21/purple, 20/yellow!80!black, 23/yellow!80!black, 24/yellow!80!black}{
      \fill[\c,opacity=0.8] (a\i) circle (\r * \s);
    }
    \end{scope}
  \end{tikzpicture}
  \caption{Right: In gray, a connected vertex set $T$ in the red graph of $G_i$ in the vicinity of the just contracted vertex $z \in T$. Left: The decomposition $\dec(T \setminus \{z\} \cup  \{v\})$ in the previous trigraph $G_{i+1}$, where each color represents a connected component. If every color class is a realizable set in $G_{i+1}$, then $T$ is realizable in~$G_i$, with (optimum) partial solution $\bigcup \dec(T \setminus \{z\} \cup  \{v\})$. Note that, due to black edges between~$u$ and some vertices of $T$, the partial solutions in $\dec(T \setminus \{z\} \cup  \{u,v\})$ and in $\dec(T \setminus \{z\} \cup  \{u\})$ cannot be pairwise compatible.}
  \label{fig:k-is-update}
\end{figure}
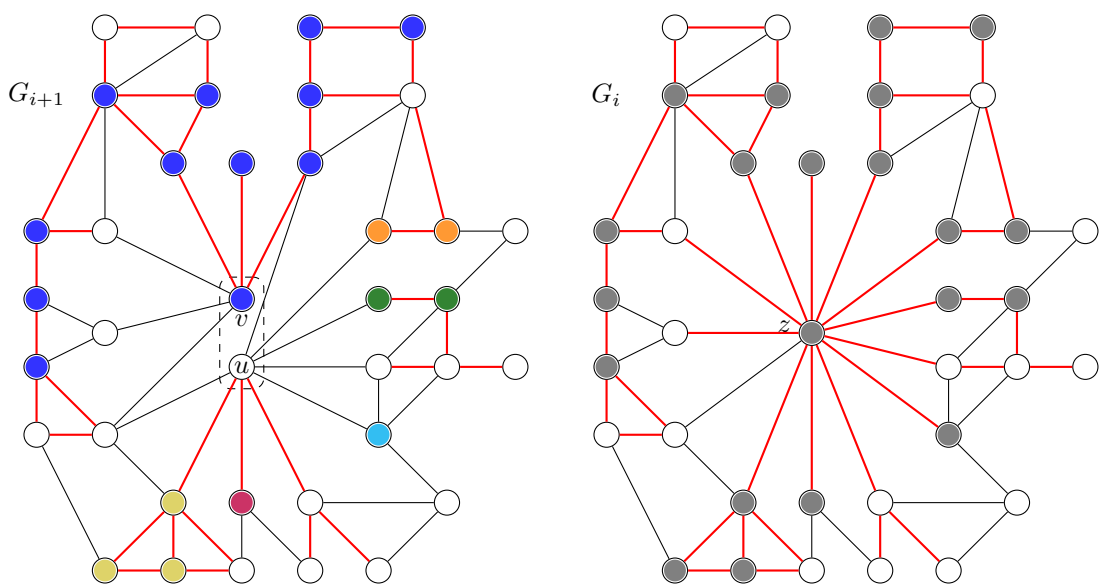

\medskip

\textbf{Correctness.}
By a transparent induction, any set returned by \algkmis is an independent set.
Indeed the initial partial solutions (in $\mathcal S_n$) are singletons.
Every new partial solution is formed by taking a union of independent sets such that there is no black or red edge between any pair of independent sets.
Hence the union is overall an independent set.

We now claim that if there is an independent set of size at least $k$ in $G$, then \algkmis indeed outputs a solution of size at least $k$.
Again we show by induction the following invariant: For every realizable set $T \subseteq V(G_i)$ (in $G_i$) of size at most $k$, $\mathcal S_i$ (eventually) contains a solution $(T,S)$ such that $|S| = \alpha(G[\bigcup_{u \in T} u(G)])$ or $|S| \geqslant k$.
The former condition, ``$|S| = \alpha(G[\bigcup_{u \in T} u(G)])$'', is initially true for the singletons of $\mathcal S_n$.
If the latter condition, ``$|S| \geqslant k$'', ever happens, \algkmis outputs it and we are done.
Thus for the induction hypothesis of $\mathcal S_{i+1}$, we suppose that the former condition always holds.

Say, $u, v \in V(G_{i+1})$ are contracted into $z \in V(G_i)$.
Let $T$ be a realizable set in $G_i$.
If $z \notin T$, then $T$ is also a realizable set in $G_{i+1}$.
By the induction hypothesis, there is a partial solution $(T,S^*)$ in $\mathcal S_{i+1}$ such that $|S^*| = \alpha(G[\bigcup_{u \in T} u(G)])$.
This partial solution was simply transmitted from $\mathcal S_{i+1}$ to $\mathcal S_i$, hence $(T,S^*) \in \mathcal S_i$.

Let us now assume that $z \in T$.
We fix $S'$, a maximum independent set in $G[\bigcup_{u \in T} u(G)]$.
The algorithm \algkmis defines the partial solution $(T,S) \in \mathcal S_i$ by taking the best of the at most three unions $\bigcup \dec(T \setminus \{z\} \cup \{u,v\})$, $\bigcup \dec(T \setminus \{z\} \cup \{u\})$, and $\bigcup \dec(T \setminus \{z\} \cup \{v\})$ (note that at most two of those may not be defined). 
Build the set $\emptyset \neq I \subseteq \{u,v\}$ by putting $u$ (resp.~$v$) in $I$ if $S' \cap u(G) \neq \emptyset$ (resp.~$S' \cap v(G) \neq \emptyset$).
We consider $\dec(T \setminus \{z\} \cup I)$, the partial solutions in $\mathcal S_{i+1}$ associated to each connected component of $T \setminus \{z\} \cup I$ in $(V(G_{i+1}),E(G_{i+1}) \cup R(G_{i+1}))$ (by the existence of $S'$, each such connected component is indeed realizable).
By the induction hypothesis, every partial solution of $\dec(T \setminus \{z\} \cup I)$ is optimum.
Thus the union $\bigcup \dec(T \setminus \{z\} \cup I)$ has the same size as $S'$.
This implies that the partial solution $(T,S)$ put in $\mathcal S_i$ is also optimum.

Finally if \algkmis terminates without reporting an independent set of size at least $k$, our invariant on $\mathcal S_1$ indicates that $\alpha(G) < k$.
In that case the unique (optimum) partial solution $(V(K_1),S) \in \mathcal S_1$ verifies $|S| = \alpha(G)$.

\medskip

\textbf{Running time.}
The claimed running time for \algkmis essentially relies on~\cref{cor:connected-subgraphs}.
By this corollary, the sets $T$ of the inner for loop (line 6) can be enumerated in time $O(d^{2k})$.
The connected components of line 7 can be computed in time $O(\min(d,k)k)$, say, by breadth-first search in the red graph of $G_i$.
Then checking the absence of black edges between potential partial solutions takes time $O(k^2)$.
Thus the overall running time is $O(k^2 d^{2k} n)$.
Interestingly, once the trigraphs of a $d$-sequence of $G$ have been computed, \kmis can be solved in sublinear time in the size of $G$, when $k^2 d^{2k}n = o(|E(G)|)$.
Another observation is that when the twin-width $d$ is polylogarithmic in $n$, i.e., in $\Theta(\log^c n)$, \algkmis is still fixed-parameter tractable in $k$.
Indeed $\log^{O(k)} n = k^{O(k)} n$ as noticed by Sloper and Telle~\cite{Sloper08}, which implies that \algkmis runs in time $2^{O(k \log k)} n^2$ in that regime. 

\medskip

\textbf{Optimizations.}
We suggest some improvements or variations of \algkmis to generally improve over the worst-case running time of the inner for loop.
A lot of sets $T$ will trivially be \emph{not} realizable because they induce a black edge.
When enumerating the walks starting at $z$ of length at most $2k-3$, one can abort every branch $z v_1 \ldots v_h$ inducing at least one black edge.
It can even be done in a way that the enumeration takes time $O(t)$ where $t$ is the number of sets $T \ni z$ of size at most $k$, such that $T$ is connected in the red graph, and an independent set in the black graph.

Even if a set $T$ satisfies those properties, we have no guarantee that $T$ is realizable.
In very dense instances, it is imaginable that the realizable sets are very rare.
In that case, we will lose a lot of time generating sets $T$ to observe immediately after that there is no associated partial solution $(T,S)$.
An alternative to \algkmis is to build the new partial solutions of $\mathcal S_i$ directly as unions of pairwise compatible partial solutions of $\mathcal S_{i+1}$, without anticipating the nature of the possibly realizable set $T \subseteq V(G)$.

Let us be more precise.
Let $R_z$ be the set of red neighbors of $z$ in $G_i$.
For every set of at most $\max(2,d+1)$ partial solutions $(T_1,S_1), \ldots, (T_h,S_h) \in \mathcal S_{i+1}$ intersecting $R_z$, at least one of which intersects $\{u,v\}$, if the partial solutions are pairwise compatible, we update the realizable set $\bigcup_{i \in [h]} T_i$ with the partial solution $\bigcup_{i \in [h]} (T_i,S_i)$ if $\bigcup_{i \in [h]} S_i$ is larger than the current best solution.
Following the first improvement, we can only generate the sets that are pairwise compatible.
As we know, there are at most three ways to reach a given set $T \subseteq V(G_i)$ as a union of pairwise compatible partial solutions in $\mathcal S_{i+1}$.
The running time of this variation of \algkmis is $O^*(\Sigma_{i \in [n]} |\mathcal S^{\text{new}}_i|)$, where $S^{\text{new}}_i := \mathcal S_i \setminus \mathcal S_{i-1}$ (and $S^{\text{new}}_n := \mathcal S_n$) represents the new partial solutions computed at step $i$.
In practice, this can be significantly better than $O(k^2 d^{2k}n)$.
Such a dynamic programming, only generating ``positive'' subinstances, dubbed \emph{positive-instance driven} by Tamaki, led to a breakthrough and current state-of-the-art practical algorithm for computing optimally the treewidth of a graph~\cite{Tamaki19}.

\medskip

\textbf{Weights.}
Without too many changes, \algkmis may support weights, that is, find an independent set of size exactly $\min(k,\alpha(G))$ with largest total weight.
Instead of keeping one solution $S$ per realizable set $T$, we keep up to $k$ solutions, one per pair $(T,j)$ with $j \in [|T|,k]$.
A~partial solution $(T,j,S)$ is defined as before except $S$ is required to have size exactly $j$.
To compute the new partial solutions, we add a third nested for loop after line 6: We iterate over all the ways of distributing $j \leqslant k$ units between the red connected components induced by $T' \in \{T \setminus \{z\} \cup \{u,v\}, T \setminus \{z\} \cup \{u\}, T \setminus \{z\} \cup \{v\}\}$ so that each connected component gets a positive integer (at least equal to its size).
We then add to $\mathcal S_i$ one partial solution $(T,j,S)$ (if at least one exists) maximizing the weight of $S$ for fixed $T$ and $j$.
We also skip lines 8 and 9 of \algkmis.

This comes with a slight increase in the running time.
Namely, there is an extra $2^{O(k \log k)}$ factor accounting for the ordered partition of integer $j \leqslant k$ into positive integers.
Thus the overall running time with weights is $2^{O(k \log k)}d^{2k}n$.
\end{proof}

As twin-width and $d$-sequences are preserved when complementing the graph, we also solve \clique in the same running time.
One may wonder if the dependency in $k$ of our $2^{O_d(k)}n$-time algorithm can be improved.
It turns out that this running time is essentially optimal.
Due to the Sparsification Lemma~\cite{sparsification} and folklore reductions, \mis restricted to subcubic $n$-vertex graphs cannot be solved in $2^{o(n)}$, under the Exponential Time Hypothesis\footnote{The assumption that there is a constant $\delta>0$, such that \textsc{3-SAT} cannot be solved in time $2^{\delta n}$.} (ETH)~\cite{Impagliazzo01}.
Thus, by the classic self-reduction consisting of performing an even subdivision of each edge \cite{Poljak74}, \mis cannot be solved in time $2^{o(n/\log n)}$ on $2 \lceil \log n \rceil$-subdivisions of $n$-vertex subcubic graphs, unless the ETH fails.
In~\cite{twin-width2}, we show how to find $O(1)$-sequences in polynomial time for $2 \lceil \log n \rceil$-subdivisions of $n$-vertex graphs.
Therefore this lower bound holds even if we are given the $d$-sequence.
In particular, no algorithm solves \kmis in time $2^{o_d(k / \log k)}n^{O(1)}$, unless the ETH fails.

If $\mathcal T$ is a $d$-sequence $G=G_n, \ldots, G_1=K_1$, we denote by $\mathcal C_{\mathcal T}$ denote the set of connected vertex subsets in a red graph of some trigraph $G_i \in \mathcal T$.
Let us also denote by $\mathcal C_{\mathcal T, k}$ the set of connected vertex subsets of size at most $k$ in a red graph of some trigraph $G_i \in \mathcal T$.
In both cases, the exact same vertex subset appearing connected in several trigraphs of $\mathcal T$ counts only once.
We know that $|\mathcal C_{\mathcal T, k}| \leqslant d^{2k}n$ but, as we already observed, $|\mathcal C_{\mathcal T, k}|$ can in principle be much smaller.
As a consequence of our proof of \cref{thm:k-mis}, we obtain the following.

\begin{theorem}\label{thm:ctk-ct}
Given as input an $n$-vertex graph $G$ and a $d$-sequence $G=G_n, \ldots, G_1=K_1$, \kmis can be solved in time $O^*(|\mathcal C_{\mathcal T, k}|)$ and \lmis can be solved in time $O^*(|\mathcal C_{\mathcal T}|)$.
\end{theorem}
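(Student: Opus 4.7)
The plan is to upgrade \algkmis to its \emph{positive-instance driven} variant sketched in the ``Optimizations'' paragraph of the proof of \cref{thm:k-mis}. Instead of scanning at each step $i$ all connected subsets $T \ni z$ of the red graph of $G_i$ of size at most $k$ (of which the folklore bound $d^{2k-2}n$ can be a wild overestimate), we directly build each new partial solution of $\mathcal S_i$ as a union of pairwise compatible partial solutions from $\mathcal S_{i+1}$, and charge the work at step $i$ to the partial solutions created there.

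The first observation is that every non-singleton set $T \in \mathcal C_{\mathcal T, k}$ is processed at \emph{exactly one} step of the algorithm. Indeed, $T$ must contain at least one contracted vertex (else $T$ would need to be connected in the edgeless red graph of $G_n$, forcing $|T| = 1$), and contracted vertices are created at pairwise distinct steps. The algorithm processes $T$ at the unique step $i$ at which the youngest contracted vertex of $T$ is born, namely the step with $z_i \in T$ and $T \subseteq V(G_i)$; singletons of original vertices are taken care of by the initialization $\mathcal S_n$. Writing $\mathcal N_i$ for the partial solutions added to $\mathcal S_i$ at step $i$ (those whose realizable set contains $z_i$), we therefore have $\sum_i |\mathcal N_i| \leq |\mathcal C_{\mathcal T, k}|$.

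The core step is to show that the work at step $i$ is at most $\operatorname{poly}(k, d) \cdot |\mathcal N_i|$. Any new realizable $T \ni z$ connected in the red graph of $G_i$ is obtained in at most three ways by substituting $z$ by $I \in \{\{u,v\}, \{u\}, \{v\}\}$ and decomposing $(T \setminus \{z\}) \cup I$ into red connected components in $G_{i+1}$, each of which is a realizable set already stored in $\mathcal S_{i+1}$. We enumerate the new partial solutions by an extension procedure: we start from a $P_0 \in \mathcal S_{i+1}$ whose realizable set meets $\{u, v\}$ and greedily extend with pairwise compatible partial solutions of $\mathcal S_{i+1}$ whose realizable sets touch the current red boundary in $G_i$. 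A canonical order on the components being merged (say, lexicographic on their smallest vertex under a fixed ordering) ensures each new $T$ is produced at most a constant number of times, never more than once per choice of $I$. Storing $\mathcal S_{i+1}$ in a hash table keyed by realizable sets, each extension step, red-component computation, and pairwise compatibility check costs $\operatorname{poly}(k, d)$. Summing over the at most $n$ steps gives total work $O^*(|\mathcal C_{\mathcal T, k}|)$.

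For \lmis the same algorithm is run without the size cap $k$, and the identical accounting yields $O^*(|\mathcal C_{\mathcal T}|)$; the ``keep the best'' step of \algkmis is simply applied globally, since now every realizable connected set in every red graph must be enumerated. I expect the canonical-order argument preventing duplicate enumeration to be the main obstacle: without it, the naive enumeration over tuples of up to $d+2$ compatible partial solutions could over-count each $T$ by a factor exponential in $d$, degrading the output-sensitive bound from $O^*(|\mathcal C_{\mathcal T, k}|)$ back toward the weaker $O(k^2 d^{2k} n)$ of \cref{thm:k-mis}.
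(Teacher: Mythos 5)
Your proof is essentially the paper's intended argument, which is sketched (rather tersely) in the ``Optimizations'' paragraph of the proof of \cref{thm:k-mis}, and you are right that the key counting observation is that each non-singleton $T\in\mathcal C_{\mathcal T,k}$ is born, and hence processed by the inner loop of \algkmis, at exactly one step. One remark, though: the theorem as stated (with $|\mathcal C_{\mathcal T,k}|$ rather than the set of \emph{realizable} connected sets $|\mathcal R_{\mathcal T,k}|$) does not require the full positive-instance-driven machinery you develop, and hence does not require the delicate canonical-order argument you flag as the main potential obstacle. It suffices to replace the naive walk-based enumeration of line~6 of \algkmis by a polynomial-delay enumeration of the connected subsets $T\ni z$ of size at most $k$ in $(V(G_i),R(G_i))$ (a standard ``include/forbid the next frontier vertex'' branching, each leaf of which yields a distinct set); the per-set work (computing the at most three decompositions, checking for black edges, and table look-ups in $\mathcal S_{i+1}$) is $\mathrm{poly}(k,d)$, and summing over $i$ gives $O^*(|\mathcal C_{\mathcal T,k}|)$ directly, and likewise $O^*(|\mathcal C_{\mathcal T}|)$ without the size cap. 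Your more ambitious positive-instance-driven scheme is what the paper gestures at for the strictly stronger $O^*(|\mathcal R_{\mathcal T,k}|)$ bound mentioned just after the theorem, and there your concerns are real but resolvable: to charge all work to produced realizable sets you need, in addition to the canonical choice of which component to add next, the fact that every prefix of the extension process is itself a realizable connected set in $G_i$ (each added component touches $N_R(z)\cup\{u,v\}$, so red-connectivity through $z$ and independence are preserved at every intermediate stage), so there are no dead ends to account for separately.
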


We actually showed the stronger result that \kmis and \lmis can be solved in time $O^*(|\mathcal R_{\mathcal T, k}|)$ and $O^*(|\mathcal R_{\mathcal T}|)$, respectively, where $R_{\mathcal T, k} \subseteq \mathcal C_{\mathcal T, k}$ and $R_{\mathcal T} \subseteq \mathcal C_{\mathcal T}$ only consist of the realizable sets.
In~\cite{twin-width1}, we show how to find in polynomial time $f(\text{rw})$-sequences for $n$-vertex graphs with rank-width (even boolean-width) at most~$\text{rw}$. 
Importantly the sequences comprise only $g(\text{rw})n$ connected vertex subsets.
Hence \cref{thm:ctk-ct} in particular generalizes the $O(n)$-time algorithm for \mis in graphs of bounded rank-width/clique-width, given the rank- or clique-decomposition.
Indeed the polynomial algorithm computing the $f(\text{rw})$-sequence takes time $O(n)$, provided the rank-width decomposition.
Of course \cref{thm:ctk-ct} is more general than that.
In light of the next corollary, it also yields a polynomial-time algorithm when a 2-sequence can be efficiently computed.

\begin{corollary}\label{cor:tww2}
  Given as input an $n$-vertex graph $G$ and a 2-sequence $G=G_n, \ldots, G_1=K_1$, \lmis can be solved in polynomial time.
\end{corollary}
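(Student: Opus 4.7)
\textbf{Proof proposal for Corollary~\ref{cor:tww2}.} The plan is to invoke Theorem~\ref{thm:ctk-ct}, which solves \lmis in time $O^*(|\mathcal C_{\mathcal T}|)$, and then to show that $|\mathcal C_{\mathcal T}|$ is polynomially bounded when the input sequence has red degree at most $2$. So the only thing to verify is a combinatorial upper bound on the number of connected vertex subsets appearing in the red graphs of a $2$-sequence.

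First I would use the structural consequence of the red degree constraint: in each trigraph $G_i$, the red graph $(V(G_i),R(G_i))$ has maximum degree at most $2$, hence it is a disjoint union of paths, cycles, and isolated vertices. The connected subgraphs of such a graph are easy to enumerate: in a path on $m$ vertices, every connected subset is a contiguous subpath, giving $\binom{m+1}{2}=O(m^2)$ subsets; in a cycle on $m$ vertices, every connected proper subset is a contiguous arc (determined by its two endpoints), plus the whole cycle, giving $O(m^2)$ subsets as well. Summing over the components $C_1,\ldots,C_t$ of the red graph, whose sizes $m_1,\ldots,m_t$ satisfy $\sum_j m_j \leq n$, we get at most $O\bigl(\sum_j m_j^2\bigr) = O(n^2)$ connected subsets per trigraph.

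Next I would sum over the $n$ trigraphs $G_n,\ldots,G_1$ of the $2$-sequence $\mathcal T$. Even counting a set that appears connected in several trigraphs as many times as it appears, we obtain $|\mathcal C_{\mathcal T}| \leq O(n^3)$. Plugging this into Theorem~\ref{thm:ctk-ct} yields an algorithm for \lmis running in time $O^*(n^3)$, which is polynomial, completing the proof.

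The only mild subtlety will be to make sure nothing is lost in the passage from Theorem~\ref{thm:ctk-ct}: that result handles the unparameterized \lmis (no upper bound on $k$) in time $O^*(|\mathcal C_{\mathcal T}|)$, so we are not restricted to enumerating \emph{small} connected subsets. I do not anticipate a genuine obstacle here; the argument reduces entirely to the elementary observation that a graph of maximum degree $2$ has only quadratically many connected vertex subsets.
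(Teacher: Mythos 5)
Your proof is correct and matches the paper's argument exactly: both observe that the red graphs of a 2-sequence are disjoint unions of paths and cycles, hence have $O(n^2)$ connected vertex subsets each, giving $|\mathcal C_{\mathcal T}| = O(n^3)$ and concluding via Theorem~\ref{thm:ctk-ct}. No gap.
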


\begin{proof}
  The red graphs of the trigraphs of the 2-sequence $\mathcal T = G_n, \ldots, G_1$ are disjoint unions of paths and cycles (their degree is at most~2).
  Thus each $(V(G_i),R(G_i))$ has at most $n^2$ connected vertex subsets. 
  Hence $|\mathcal C_{\mathcal T}| = O(n^3)$.
  We conclude by~\cref{thm:ctk-ct}.
\end{proof}

As we will now see, \cref{cor:tww2} captures unit interval graphs, which have unbounded rank-width.
\begin{lemma}\label{lem:unitinterval}
Unit interval graphs have twin-width 2.
\end{lemma}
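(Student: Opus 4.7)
My plan is to give an explicit $2$-sequence for any unit interval graph $G$. First, fix a proper interval ordering $v_1 < v_2 < \cdots < v_n$ of $V(G)$, computable in linear time by lexicographic BFS. This ordering has the defining property that each closed neighborhood $N[v_i]$ is a contiguous range $\{v_{L_i}, \ldots, v_{R_i}\}$ of indices with $L_i \leq i \leq R_i$, and moreover both $L_i$ and $R_i$ are non-decreasing in $i$. The monotonicity of $L$ and $R$ is the combinatorial engine of the construction.

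As a simplification, I would iteratively contract pairs of true twins, i.e., consecutive $v_i, v_{i+1}$ with $L_i = L_{i+1}$ and $R_i = R_{i+1}$: such pairs satisfy $N[v_i]=N[v_{i+1}]$, so their contraction introduces no red edge and preserves the proper-interval structure. After this preprocessing I may assume that $i \mapsto (L_i, R_i)$ is strictly lex-increasing, which will help control how fast neighborhoods can shift.

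For the core construction, I plan to process vertices from left to right using a sweep with an auxiliary ``release'' rule. The target invariant is that the trigraph at each stage consists of a single fully-contracted left blob $B$ (representing a prefix of the ordering) plus singletons to its right, and that $B$ has at most two red neighbors, which are the two leftmost remaining singletons. By monotonicity, the set of potential red neighbors of $B$ is exactly the contiguous window of indices $\{v_j : \max(|B|,R_1) < j \leq R_{|B|}\}$, whose length $R_{|B|} - \max(|B|,R_1)$ is exactly the red degree in a naive ``absorb the next singleton'' sweep. On graphs like $P_n^k$, this naive sweep gives red degree $\Theta(k)$. The remedy is to interleave the sweep with pair contractions of the two leftmost red-active singletons whenever the window would otherwise exceed length two, thereby keeping the window bounded.

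The main obstacle will be proving feasibility of the auxiliary pair-contraction step: one must show that the two leftmost red-active singletons always have a symmetric difference (outside $B$) of at most two vertices, so that contracting them is itself a valid $2$-contraction that does not blow up the red degree of $B$ or of any third party. I expect this to follow from a short case analysis that combines the monotonicity of $L_i, R_i$ with the twin-free property established in preprocessing, forcing consecutive closed neighborhoods to differ in a highly constrained way. Putting the two operations (``grow $B$'' and ``merge two active singletons'') together in the right schedule will then yield a contraction sequence of red degree at most $2$, witnessing $\tww(G) \leq 2$.
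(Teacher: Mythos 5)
There is a genuine gap, and it sits exactly at the step you deferred to a ``short case analysis''. The claim that the two leftmost red-active singletons $v_j,v_{j+1}$ have symmetric difference at most two outside $B$ is false: their symmetric difference on the right is $\{v_{R_j+1},\dots,v_{R_{j+1}}\}$, and $R_{j+1}-R_j$ is unbounded even after your twin-removal preprocessing, because one can place many vertices $w$ with $L_w=j+1$ (hence all lying in $(R_j,R_{j+1}]$) that are pairwise non-twins, being distinguished by their own right endpoints via vertices still further to the right. Contracting $v_j$ with $v_{j+1}$ then creates unboundedly many red edges.

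Worse, even on $P_n^k$, where consecutive vertices do have symmetric difference exactly two, the schedule does not work: the pairs you create accumulate as \emph{distinct} red neighbours of $B$. Concretely for $k=6$, after the forced merges you reach $B=\{v_1,\dots,v_5\}$ with red neighbours $\{v_8,v_9\}$ and $\{v_{10},v_{11}\}$; growing to $B=\{v_1,\dots,v_6\}$ adds $v_{12}$ as a third red neighbour, and no local move repairs this (merging the two pairs yields red edges to $v_{15},v_{16},v_{17}$; absorbing a pair into $B$ yields red degree about $k$; forming the triple $\{v_8,v_9,v_{10}\}$ makes it red to $B$, $v_{15}$ and $v_{16}$). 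The obstruction is structural: once a prefix part of $P_n^k$ has size $m>k$, its red neighbourhood is exactly the $k$ vertices $v_{m+1},\dots,v_{m+k}$, which must be covered by at most two parts of size about $k/2$; each of those in turn has about $k/2$ red neighbours to its right that must fit into one further part, and so on, so the entire suffix must be pre-grouped into blocks of size $\Theta(k)$ before the sweep can pass length $k$. This is precisely why the paper abandons the single-blob sweep: it embeds $G$ as an induced subgraph of the universal unit interval graph $I_{k,nk}$ (which is $P_{nk}^{k}$; twin-width is monotone under induced subgraphs) and contracts all $n$ blocks $\{k(i-1)+1,\dots,ki\}$ \emph{in parallel}, from the inside of each block outward, so that the red graph is at every moment a disjoint union of paths between consecutive blocks, ending in a red path that is then contracted. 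Your approach would have to be reorganised along these lines (a global periodic grouping rather than a frontier-local one) to go through.
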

\begin{proof}
  Consider the unit interval graph $I_{k,nk}$ on vertex set $[nk]$ where, for every $j \in [nk]$, the interval of length exactly $k$ and with left endpoint $j$ is present.
  The family $I_{k,nk}$ is universal in the sense that every unit interval graph is an induced subgraph of some $I_{k,nk}$.
  For every $i \in [n]$, contract $ki-1$ and $ki$.
  Then for every $i \in [n]$ in increasing order, contract $ki-2$ with $\{ki-1,ki\}$, etc.
  At every stage, the only red edges are between two consecutive contracted groups, forming a path.
  We eventually end up with only a red path, which has twin-width 2.
\end{proof}


We now extend~\cref{thm:k-mis} in two directions.
We show that \textsc{(Induced) Subgraph Isomorphism} and \scaset can be solved in time $2^{O(k \log k)}n$ on graphs given with an $O(1)$-contraction sequence.

\begin{theorem}\label{thm:sub-iso}
  Given a graph $G$, a $d$-sequence $G=G_n, G_{n-1},$ $\ldots, G_1=K_1$, and a pattern graph $H$ on $k$ vertices, \subiso and \indsub can be solved in time $2^{O(k \log k)}d^{2k}n=2^{O_d(k \log k)}n$.
\end{theorem}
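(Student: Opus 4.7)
The plan is to extend the dynamic programming of \algkmis (proof of \cref{thm:k-mis}) by tracking, at each step, not only a connected red-set $T \subseteq V(G_i)$ of size at most $k$, but also how the vertices of the pattern $H$ are distributed among the bags of $T$. A partial solution in $G_i$ will be a triple $(T, \alpha, \phi)$ where $T$ is red-connected in $G_i$ with $|T| \le k$, $\alpha \colon V(H) \to T \cup \{\bot\}$ is an assignment hitting every $u \in T$, and $\phi \colon \alpha^{-1}(T) \to V(G)$ is an injection placing $\alpha^{-1}(u)$ inside $u(G)$ for every $u \in T$, such that for every pair $w, w'$ of assigned pattern vertices with $\phi(w), \phi(w')$ lying in the same bag or in two bags joined by a black edge in $G_i$, the adjacency $\phi(w)\phi(w') \in E(G)$ matches $ww' \in E(H)$ (for \subiso, only edges of $H$ need to be reflected; for \indsub, both edges and non-edges). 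The DP stores, for each \emph{signature} $(T, \alpha)$, whether it is realizable and, if so, a single witness $\phi$.

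Initialization and transfer go as in \algkmis: $\mathcal S_n$ consists of the at most $k n$ triples $(\{v\}, v \mapsto w, v \mapsto v)$ over $v \in V(G)$, $w \in V(H)$. Upon contracting $u, v$ into $z$, any partial solution with $T \cap \{u, v\} = \emptyset$ is passed unchanged to $\mathcal S_i$. For each red-connected $T \ni z$ with $|T| \le k$ and each covering assignment $\alpha$, we enumerate the at most $2^{|\alpha^{-1}(z)|} \le 2^k$ ways to distribute the pattern vertices of $\alpha^{-1}(z)$ between $u$ and $v$. Each such split yields in $G_{i+1}$ a set $T' \in \{T \setminus \{z\} \cup \{u, v\}, T \setminus \{z\} \cup \{u\}, T \setminus \{z\} \cup \{v\}\}$ with an induced assignment $\alpha'$ that still covers $T'$; we then form $\dec(T', \alpha')$, the list of signatures induced on each red-connected component of $T'$, look them up in $\mathcal S_{i+1}$, and glue the retrieved witnesses provided (i) each component is realizable, (ii) the witness images are pairwise disjoint, and (iii) the $H$-adjacencies between pattern vertices landing in distinct components match the black-edge pattern of $G_{i+1}$ between these components (for \indsub, also the non-edges).

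Correctness follows by induction on $i$, with invariant: every realizable signature in $G_i$ has a stored witness in $\mathcal S_i$. The structural point mirrors the one in \algkmis: if $C_1, C_2$ are two red-connected components of $T'$, they are \emph{separate} in $G_{i+1}$, so for any $u_1 \in C_1$ and $u_2 \in C_2$ the pair of bags $u_1(G), u_2(G)$ is either entirely joined or entirely non-adjacent in $G$. Hence all inter-component adjacencies in $G$ are determined by the trigraph alone, and the gluing check genuinely certifies a valid embedding of $H[\alpha^{-1}(T)]$; intra-bag and intra-component adjacencies have already been enforced inductively, all the way back to singletons in $\mathcal S_n$. In $G_1$, the unique signature covering all of $V(H)$ is realizable iff $H$ embeds into $G$ in the desired sense.

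Running time: by \cref{cor:connected-subgraphs}, the red-connected sets $T \ni z$ of size at most $k$ number at most $d^{2k-2}$; each admits at most $(k + 1)^k = 2^{O(k \log k)}$ assignments $\alpha$; each $(T, \alpha)$ triggers at most $2^k$ splits; and each combination/verification costs $k^{O(1)}$. Summed over $n$ contractions this is $2^{O(k \log k)} d^{2k} n$, as claimed. The main obstacle, beyond the book-keeping of $\alpha$, is justifying that the gluing step can verify all cross-component adjacencies (edges, and for \indsub also non-edges) using only the black edges of $G_{i+1}$—this is where the bounded red degree and the definition of trigraph contraction are essential, and the argument must be made uniformly for both \subiso and \indsub.
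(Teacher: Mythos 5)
Your proposal matches the paper's approach essentially exactly: the paper likewise augments the \algkmis dynamic program with a ``division'' $\eta$ of each red-connected set $T$ (your $\alpha$ is simply the inverse description of the paper's $\eta \colon T \to 2^{[k]}\setminus\{\emptyset\}$), stores one compliant witness per realizable pair $(T,\eta)$, enumerates the $O(2^{k\log k})$ divisions at each merge, and exploits the same key fact that cross-component adjacencies are purely black/non-edge and hence checkable at the trigraph level; the running-time bookkeeping ($n \cdot d^{2k} \cdot 2^{O(k\log k)} \cdot k^{O(1)}$) and the one-word switch between induced/non-induced also coincide. (Minor cosmetic slip: the initial triples should read $(\{v\},\ w\mapsto v,\ w\mapsto v)$ rather than $v\mapsto w$, $v\mapsto v$.)
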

\begin{proof}
  The algorithms are almost identical and are obtained by making some additions and modifications to \algkmis.
  We will first describe the algorithm \algindsub for \indsub.
  The algorithm \algsubiso solving \subiso will be obtained by changing a single word in the pseudo-code (see Algorithm~\ref{alg:indsub}).
  
  We identify $V(H)$ to the set of integers $[k]$.
  A \emph{division} of $T \subseteq V(G_i)$\footnote{In this definition, we do \emph{not} require that $T$ is connected in the red graph.} is a mapping $\eta$ from $T$ to $2^{[k]} \setminus \{\emptyset\}$ such that $\eta(u) \cap \eta(v) = \emptyset$ for every $u \neq v \in T$.
  We define $\eta(T)$ as $\bigcup_{u \in T} \eta(u)$.
  Given a realizable set $T \subseteq V(G_i)$ and a division $\eta$ of $T$, a set $S \subseteq V(G)$ is said $(T,\eta)$-compliant (or simply compliant, if $T$ and $\eta$ are clear from the context) if there is an induced subgraph isomorphism $\lambda$ from $H[\eta(T)]$ to $G[S]$, such that $S \cap u(G) = \lambda(\eta(u))$ for every $u \in T$.
  Now partial solutions in $G_i$ are triples $(T,\eta,S)$ where $T \subseteq V(G_i)$ is still a vertex set of size at most $k$ inducing a connected subgraph in $(V(G_i),R(G_i))$, $\eta$ is a division of $T$, and $S \subseteq V(G)$ is $(T,\eta)$-compliant.
  In particular $S \subseteq \bigcup_{u \in T} u(G)$ and $S \cap u(G) \neq \emptyset$, as it was the case for \kmis.

  It is simpler to first present the new algorithms with a classic (static) dynamic programming.
  As before this can be turned into its ``positive-instance driven'' version.
  We maintain a table $\mathcal T$, where for every realizable set $T \subseteq V(G_i)$ and every division $\eta$ of $T$, $\mathcal T[T,\eta]$ is intended to contain a $(T,\eta)$-compliant set $S \subseteq V(G)$ if it exists, and ``None'' otherwise.
  It can be observed that for every vertex $v \in V(G)$, the singleton $\{v\}$ is $(\{v\},\eta)$-compliant for every division $\eta$ of $\{v\}$.
  Notice that a division of $\{v\}$ assigns a single vertex $j \in V(H) = [k]$ to $v$.
  We therefore initialize $\mathcal T$ by putting $\{v\}$ in each cell $\mathcal T[\{v\},\eta: v \mapsto \{j\}]$, for every $v \in V(G)$ and $j \in [k]$.
  By default, if a cell of $\mathcal T$ is not filled, it contains the value ``None''.

  As in the algorithm of \cref{thm:k-mis}, we can compute the partial solutions in $G_i$ from the partial solutions in $G_{i+1}$.
  Say that to go from $G_{i+1}$ to $G_i$, we contract $u, v \in V(G_{i+1})$ into $z \in V(G_i)$.
  Note that every cell $\mathcal T[T,\_]$ such that $T \subseteq V(G_i) \setminus \{z\}$ was previously filled.
  Indeed a set $T \subseteq V(G_i) \setminus \{z\}$ connected in $(V(G_i),R(G_i))$ is also connected in $(V(G_{i+1}),R(G_{i+1}))$ (and included in $V(G_{i+1}) \setminus \{u,v\}$).
  We shall fill the cells $\mathcal T[T,\_]$ such that $z \in T \subseteq V(G_i)$.
  Again we build these partial solutions as union of partial solutions in $G_{i+1}$.
  The fact $z \in T$ entails that such a union may cover $u$, or $v$, or both.
  For every $I \in \{\{u\},\{v\},\{u,v\}\}$, we decompose $T' := T \setminus \{z\} \cup I$ into its connected component $T_1, \ldots, T_h$ in the red graph $(V(G_{i+1}),R(G_{i+1}))$.
  Any division $\eta$ of $T'$ naturally breaks into $h$ divisions $\eta_1, \ldots, \eta_h$ where $\eta_p$ is a division of $T_p$ for every $p \in [h]$.
  We denote by $\dec(T',\eta)$ the $h$ pairs $(T_1,\eta_1), \ldots, (T_h,\eta_h)$.
  
  For every such pair $(T',\eta)$, we fill $\mathcal T[T',\eta]$ with an actual solution if the following holds.
  First, every entry $\mathcal T[T_p,\eta_p]$, for $p \in [h]$, should contain an actual solution $S_p$ (which is not ``None'').
  Secondly, for every $p \neq p' \in [h]$ the edges and non-edges in $H$ between $\eta_p(T_p)$ and $\eta_{p'}(T_{p'})$ should match the edges and non-edges in $G$ between $S_p$ and $S_{p'}$.
  More precisely, there should be a bijection $\lambda$ from $\eta_p(T_p) \cup \eta_{p'}(T_{p'})$ to $S_p \cup S_{p'}$ such that $\lambda(\eta(x)) = (S_p \cup S_{p'}) \cap x(G)$ for every $x \in T_p \cup T_{p'}$ where $\eta(x) := \eta_p(x)$ if $x \in T_p$ and $\eta(x) := \eta_{p'}(x)$ if $x \in T_{p'}$, and $ab \in E_H(\eta_p(T_p), \eta_{p'}(T_{p'}))$ if and only if $\lambda(a)\lambda(b) \in E_G(S_p,S_{p'})$.
  Such a bijection $\lambda$ is called an \emph{$(\eta_p,\eta_{p'})$-isomorphism}.
  We also say that $H[\eta_p(T_p), \eta_{p'}(T_{p'})]$ is \emph{$(\eta_p,\eta_{p'})$-isomorphic} to $G[S_p, S_{p'}]$.
  Since $T_p$ and $T_{p'}$ induce two connected components in the red graph of $G_{i+1}$, there are only black edges and non-edges between pairs $x \in T_p, x' \in T_{p'}$.
  Thus the notion of $(\eta_p,\eta_{p'})$-isomorphism crucially does \emph{not} depend on $S_p$ and $S_{p'}$: If $ab \in E_H(\eta_p(T_p), \eta_{p'}(T_{p'}))$ (resp.~$ab \notin E_H(\eta_p(T_p), \eta_{p'}(T_{p'}))$), we check that there is a black edge (resp.~a non-edge) between $x \in T_p$ and $y \in T_{p'}$ where $x$ and $y$ are the only vertices in $T_p \cup T_{p'}$ such that $a \in \eta_p(x)$ and $b \in \eta_{p'}(y)$.
  If both conditions of this paragraph are fulfilled, we put $\bigcup_{p \in [h]} S_i$ in cell $\mathcal T[T',\eta]$ (otherwise the content of this cell remains unchanged).

  If we ever fill a cell $\mathcal T[T',\eta]$ where $\eta(T') = [k]$ with an actual solution $S$, \algindsub reports~$S$ as an overall solution of the \indsub-instance.
  If after all the partial solutions in $G_1$ are computed (i.e., after we exit the outermost for loop in Algorithm~\ref{alg:indsub}), no such solution was reported, \algindsub outputs that no solution exists.
  This terminates the description of \algindsub.
  For \algsubiso, we just replace the occurrences of ``induced subgraph'' by ``subgraph''.
  In the definition of the partial solutions, the mapping $\lambda$ is now a (non-induced) subgraph isomorphism from $H[\eta(T)]$ to $G[S]$.
  In the update of the partial solutions, we also relax the $(\eta_p,\eta_{p'})$-isomorphism to be a mere \emph{$(\eta_p,\eta_{p'})$-subisomorphism} preserving the edges of $H$, but not necessarily its non-edges.
  See Algorithm~\ref{alg:indsub} for the pseudo-code of both algorithms.

  \medskip
  
  \begin{algorithm}
    \DontPrintSemicolon
  \SetKwInOut{Input}{Input}
  \SetKwInOut{Output}{Output}
  \Input{~~A graph $G$, a $d$-sequence $G=G_n, \ldots, G_1=K_1$, and a graph $H$ on $[k]$.}
  \Output{~~A set $S$ such that $G[S]$ and $H$ are isomorphic, if it exists.}
  \For{$v \in V(G)$}{
    \For{$j = 1 \rightarrow k$}{
      $\mathcal T[\{v\},\eta: v \mapsto \{j\}] \leftarrow \{v\}$\;
    }
  }
 \For{$i = n-1 \rightarrow 1$}{
   $u, v \leftarrow $ contracted pair in $G_{i+1} \to G_i$\;
   $z \leftarrow $ contraction of $u$ and $v$ in $G_i$\;
   \For{every vertex subset $T$ connected in $(V(G_i),R(G_i))$, with $z \in T$ and $|T| \leqslant k$}{
     \For{$I \in \{\{u,v\},\{u\},\{v\}\}$}{
       \For{every division $\eta$ of $T \setminus \{z\} \cup I$}{
         $(T_1,\eta_1), \ldots, (T_h,\eta_h) \leftarrow ~\dec(T \setminus \{z\} \cup I,\eta)$\;
         \If{$\bigcup_{p \in [h]}\mathcal T[T_p,\eta_p] \neq$ None and $H[\eta_p(T_p),\eta_{p'}(T_{p'})]$ is $(\eta_p,\eta_{p'})$-isomorphic to $G[\mathcal T[T_p,\eta_p],\mathcal T[T_{p'},\eta_{p'}]]$, $\forall p \neq p' \in [h]$}{
           $\eta' \leftarrow x \in T \setminus \{z\} \mapsto \eta(x)$, $z \mapsto \eta(u) \cup \eta(v)$\;
           $\mathcal T[T,\eta'] \leftarrow \bigcup_{p \in [h]}\mathcal T[T_p,\eta_p]$\;
           \If{$\eta'(T) = [k]$}{
             \Return{$\mathcal T[T,\eta']$}\;
           }
         }
       }  
     }
   }
 }
 \Return{None}\;
 \caption{\algindsub, \algsubiso by changing \emph{isomorphic} to \emph{subisomorphic} (line 11)}
 \label{alg:indsub}
  \end{algorithm}

  \medskip

\textbf{Correctness.}
The soundness and completeness of \algindsub and \algsubiso follow as in the proof of~\cref{thm:k-mis}.
Therefore we only state the invariant maintained to show the completeness: After iteration $i$ (note that the first iteration is actually iteration $n-1$, and that the initialization is iteration $n$) of the outermost for loop, for every set $T \subseteq V(G_i)$ of size at most $|V(H)|=k$ connected in the red graph $(V(G_i),R(G_i))$, and every division $\eta$ of $T$, if there is a $(T,\eta)$-compliant set $S$, then $\mathcal T[T,\eta]$ contains such a set $S$.
In particular if we skip the possible exit of lines 14 and 15, after the last iteration (iteration 1), $\mathcal T[V(K_1),\eta: x \in V(K_1) \mapsto [k]]$ contains an actual set $S$ (and not ``None'') if and only if the \textsc{(Induced) Subgraph Isomorphism}-instance admits a solution.
The only ``new'' element (compared to \kmis) to prove the invariant is the potential presence of black edges between red connected components.
Nevertheless this was already evoked in the description of \algindsub and is dealt with straightforwardly.

\medskip

\textbf{Running time.}
There are four nested for loops in Algorithm~\ref{alg:indsub}.
The first one (outermost) brings a multiplicative $n$ factor to the overall running time, the second, an $d^{2k}$ factor (by~\cref{cor:connected-subgraphs}), the third one, a factor $3$.
The fourth and innermost for loop ranges over all the divisions of a fixed set $T'$ of size at most $k$.
($T'$ could in principle be of size $k+1$, but such sets can be automatically discarded since they do not admit any division.)
Every such division can be seen as a bijective mapping from $T'$ to the parts of a partition of a subset of $V(H)=[k]$.
There are at most $2^k B_k = 2^{O(k \log k)}$ partitions of a subset of $[k]$, where $B_k$ is the $k$-th Bell number.
Then there are at most $k^k=2^{k \log k}$ bijections from $T'$ to these parts.
Thus there are at most $2^{O(k \log k)}$ divisions, and the last for loop incurs a $2^{O(k \log k)}$ factor.

Decomposing $(T',\eta)$ and checking for a potential compliant solution can be done in time $k^{O(1)}$.
Thus the overall running time of \algindsub and \algsubiso is $2^{O(k \log k)}d^{2k}n = 2^{O_d(k \log k)}n$.
Again it can be observed that even when $d$ is polylogarithmic in $n$, this running time is FPT in $k$~\cite{Sloper08}.

As in \cref{thm:k-mis}, a better practical algorithm (with similar worst-case running time) consists of building the partial solutions in $G_i$ by unions of at most $\min(2,d+1)$ partial solutions in $G_{i+1}$ that are pairwise disconnected in the red graph and neighboring the vertices $u$ and $v$. 
\end{proof}

The \scaset problem on an input graph $G$ is equivalent to \kmis on $G^{\leqslant r}$.
The following theorem is a consequence that FO interpretations preserve bounded twin-width~\cite{twin-width1}.
As $G^{\leqslant r}$ can be obtained by FO interpretation $\phi$ of size $O(r)$ on $G$, $\tww(G^{\leqslant r}) \leqslant f(\tww(G),r)$.
Treating $d = \tww(G)$ and $r$ as constants, it is noteworthy that the complexity of \scaset remains the essentially optimal $2^{O(k)}n$.

\begin{theorem}\label{thm:k-scattered}
  Given a graph $G$, a $d$-sequence $G=G_n, G_{n-1},$ $\ldots, G_1=K_1$, \scaset can be solved in time $2^{O_{d,r}(k)} n$.
\end{theorem}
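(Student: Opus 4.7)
The plan is to reduce \scaset on $G$ to \kmis on a suitable power of $G$ and then invoke \cref{thm:k-mis}. As the statement indicates, a set $S \subseteq V(G)$ is an $r$-scattered set in $G$ iff $S$ is an independent set in the power graph $G^{\leq r}$, where two distinct vertices are made adjacent whenever their distance in $G$ is at most the appropriate threshold. So the task reduces to solving \kmis on $G^{\leq r}$, for which \cref{thm:k-mis} already provides a $2^{O_{d'}(k)}n$-time algorithm---\emph{provided} we can supply a contraction sequence of $G^{\leq r}$ of width $d'$ bounded in $d$ and $r$.

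The key ingredient to produce this sequence is the main result of the first paper of the series~\cite{twin-width1}: first-order interpretations preserve bounded twin-width. The graph $G^{\leq r}$ is the image of $G$ under the FO interpretation $\phi_r$ that redefines the edge relation by asserting the existence of intermediate vertices forming a walk of length at most the threshold; this formula has size $O(r)$. Applying the cited theorem yields $\tww(G^{\leq r}) \leq f(d,r)$ for some computable function $f$. I would next invoke the \emph{algorithmic} version of this preservation result, which, given the input $d$-sequence of $G$, produces an $f(d,r)$-sequence of $G^{\leq r}$ in time $O_{d,r}(n)$.

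Finally, feeding $G^{\leq r}$ together with its new contraction sequence to \algkmis decides \kmis in time $2^{O_{f(d,r)}(k)}n = 2^{O_{d,r}(k)}n$, which is exactly the announced bound. The main point requiring care is that the transformation from the $d$-sequence of $G$ to an $f(d,r)$-sequence of $G^{\leq r}$ is truly linear in $n$ (with the usual constants depending on $d$ and $r$); otherwise the preprocessing would dominate the final running time. Since this is precisely what is established in~\cite{twin-width1}, no additional obstacle arises, and we never need to reason about $G^{\leq r}$ explicitly beyond feeding it to the algorithm of \cref{thm:k-mis}.
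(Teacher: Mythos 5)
Your proposal is correct and follows exactly the paper's argument: reduce \scaset to \kmis on the power graph $G^{\leq r}$, use the fact that FO interpretations preserve bounded twin-width (so $\tww(G^{\leq r}) \leq f(d,r)$ and a corresponding sequence can be produced from the given $d$-sequence), and then run the algorithm of \cref{thm:k-mis}. The only point you elaborate beyond the paper is the (correct and necessary) observation that the new contraction sequence must be computable in time linear in $n$ for the stated bound to hold.
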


\section{A practical algorithm for \kds}\label{sec:kds}

We solve \kds with a more involved instantiation of the scheme of the previous section.
We face some new conceptual difficulties compared to the algorithm for \kmis.
For one thing, the partial solutions that we maintain are not feasible solutions in the whole graph.
Also we now consider balls of radius $f(d)k$ in the red graphs, and not merely of radius $k$.
In general, the arguments are more subtle to handle partially and fully dominated vertex sets, as well as the solution trace.
This entails a worse dependency in~$d$, but the same essentially optimal $2^{O(k)}n$ when $d$ is treated as a constant. 

\begin{theorem}\label{thm:kds}
  Given an $n$-vertex graph $G$, a positive integer $k$, and a $d$-sequence $G=G_n, \ldots, G_1=K_1$, \kds can be solved in time $O(2^{2(d^2+1)(2+\log d)k} n)=2^{O_d(k)} n$.
\end{theorem}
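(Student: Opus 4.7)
The plan is to extend the forward-DP scheme of \algkmis to track domination status along the contraction sequence. A partial solution in $G_i$ will be a tuple $(T, S, \sigma)$ where $T \subseteq V(G_i)$ is a red-connected subset of size at most $C_d k$ (for some $C_d = O(d^2)$) centered at the most recently contracted vertex, $S \subseteq \bigcup_{u \in T} u(G)$ is a partial dominating set with $|S| \leq k$, and $\sigma$ is a compact signature recording which original vertices of $\bigcup_{u\in T}u(G)$ are still undominated and how each $u \in T$ sits with respect to potential external dominators. As in \algkmis the goal is to combine partial solutions across contractions so as to build, at $G_1$, a full dominating set of size at most $k$ (or to certify that none exists).

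The crucial structural observation is that in any $d$-trigraph every $u \in V(G_i)$ has at most $d$ red neighbors, while all its black neighbors behave identically toward the whole set $u(G)$. Thus the vertices of $u(G)$ split into a bounded number of equivalence classes (\emph{types}) according to their incidences to $u$'s red neighbors, and both the ``selected'' status (does $S$ contain a vertex of that type?) and the ``dominated'' status of $u(G)$ collapse to a short bit-vector of length depending only on $d$. This caps the number of distinct signatures per ball at $2^{O_d(|T|)} = 2^{O_d(k)}$, and combined with \cref{lem:connected-subgraphs} (which bounds the number of red-connected subsets of size $O_d(k)$ hit by each freshly contracted vertex by $d^{O_d(k)}$) yields the stated running time $O(2^{2(d^2+1)(2+\log d)k}n)$.

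The DP transition mirrors \algkmis. When $u,v \in V(G_{i+1})$ are contracted into $z \in V(G_i)$, for each red-connected ball $T \ni z$ in $G_i$ and each choice $I \in \{\{u\},\{v\},\{u,v\}\}$, we try to assemble a candidate new partial solution $(T,S,\sigma)$ from partial solutions already stored on the red-connected components of $T' := (T \setminus \{z\}) \cup I$ in $G_{i+1}$. The new signature $\sigma$ is obtained by (i) merging the type classes of $u(G)$ and $v(G)$ into type classes of $z(G)$ according to how the red neighbors of $z$ relate to $u$ and $v$, (ii) marking as dominated any type whose corresponding original vertices acquire a selected neighbor through a black edge of $G_i$ to another component of $T'$, and (iii) checking that no contradiction arises between ``dominated'' markings carried by different components. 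The best (smallest $|S|$) representative per $(T,\sigma)$ is kept; if ever a signature marks every vertex of $G$ as dominated, the corresponding $S$ is returned.

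The main obstacle is the radius calibration: $C_d$ must be chosen large enough that no still-relevant information ever leaves the ball window, meaning that every vertex of $u(G)$ that will eventually need to be dominated from outside $T$ still has, at time $i$, a red path of length at most $C_d k$ to a ``candidate dominator'' that we continue to track. Bounding this red-reach by $(d^2+1)k$ is the combinatorial heart of the argument: whenever a vertex's domination is still undecided, the red component carrying its potential dominator has accumulated at most $O(d^2)$ red neighbors per chosen vertex of $S$, which combined with $|S|\le k$ yields the $(d^2+1)k$ bound. Once this radius is fixed, correctness follows by induction along the contraction sequence exactly as in the proof of \cref{thm:k-mis}, and the running-time bound follows by multiplying the number of balls ($d^{O_d(k)} n$) by the number of signatures per ball ($2^{O_d(k)}$) and the polynomial-in-$d,k$ cost of a single transition.
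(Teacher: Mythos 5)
Your high-level plan — a forward DP over red-connected windows of size $O_d(k)$, with a domination-status record attached to each window — is the right shape, and your $(d^2{+}1)k$ radius guess matches the paper. But the proposed \emph{signature} is where the argument breaks down, and the gap is not cosmetic.

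You propose to split each $u(G)$ into ``types'' according to the incidences of its vertices to the red neighbors of $u$ in $G_i$, and to record a per-type dominated/undominated bit, claiming this collapses the domination status of $u(G)$ to a bit-vector of length $O_d(1)$. This fails for a basic reason: a red edge $uq$ in $G_i$ conveys no information about \emph{which} vertices of $q(G)$ are adjacent to a given $w \in u(G)$. Two vertices $w, w' \in u(G)$ can have the same type (both ``incident to $q$'' for every red neighbor $q$) yet see disjoint subsets of $q(G)$; then a partial solution $S$ with $S\cap q(G)$ hitting $w$'s neighbors but not $w'$'s dominates $w$ and not $w'$, so domination is \emph{not} constant on types. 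If instead you refine the type to the exact adjacency into $\bigcup_q q(G)$, the number of types is unbounded and the state space blows up. Either way, the claimed $2^{O_d(k)}$ bound on signatures is unsupported, and with it the running time.

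The paper avoids this entirely by tracking a much coarser profile $(T,D,M)$: $D\subseteq T$ records which parts $u$ have $u(G)\cap S \neq \emptyset$, and $M\subseteq T$ records which parts $u$ have $u(G)$ \emph{fully} dominated. No per-vertex or per-type information is stored at all. The reason this coarse bookkeeping suffices is the insight your proposal is missing: when combining partial solutions across red-connected components $T_1,\dots,T_\ell$ of $T'$, any adjacency between distinct components is a \emph{black} edge of $G_{i+1}$, which is all-or-nothing; so the only new domination contributed to $u(G)$ by another component $T_{j'}$ is ``all of $u(G)$'' (if some $v\in D_{j'}$ is a black neighbor of $u$) or ``nothing.'' Thus $u\in M$ is computable from $M_j$ and the black adjacency alone — exactly condition 4 of the paper's consistency definition — without ever resolving which individual vertices of $u(G)$ are dominated. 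The paper's invariant $\bigcup_{x\in D} B_i^2(x)\subseteq T$ is what guarantees that all red-edge partial domination stays inside a single component, so it is already baked into $M$. You also leave the truncation step — what to do when the window grows past $(d^2{+}1)k$ — at the level of intuition; the paper handles it precisely (Algorithm 3, lines 15--18) by removing a vertex $v\in T'\setminus\bigcup_{t\in D'}B^2_{i+1}(t)$ and proving $(T',D',M')$ and $(T'\setminus v, D'\setminus v, M'\setminus v)$ are equi-realizable, using exactly the radius-2 slack.

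In short: right scaffolding, but the profile you propose is both incorrect (types don't determine domination) and more elaborate than needed; the crux of the theorem is that a two-bit-per-part profile suffices, and your proof does not contain that observation.
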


\begin{proof}
As was the case with \kmis, the algorithm sequentially considers each trigraph in the $d$-sequence $G_n, \ldots, G_1$ starting from $G_n$,   
and inductively updates a set of optimal partial solutions of the  trigraph $G_i$ to yield the next set  for $G_{i-1}$. 
We recall that $E(G_i)$ and $R(G_i)$ respectively refer to the black and red edge set of the trigraph $G_i$. 
The ball of radius at most $r$ in the red graph $(V(G_i),R(G_i))$ centered at a vertex $x\in V(G_i)$ is denoted as $B^r_i(x)$. 
 
 \medskip
 
\noindent {\bf Profile of a partial solution.} A \emph{profile (of a partial solution)} of $G_i$ is a triple $(T,D,M)$ of  vertex sets of $V(G_i)$ 
such that (i) $T$ forms a connected set in the red graph $(V(G_i),R(G_i))$, (ii) $D,M\subseteq T$, and (iii) $\bigcup_{x\in D}B^2_i(x)\subseteq T$. 
The first entry $T$ of a profile $P=(T,D,M)$ is called the \emph{ground set} of $P$, and the size of $P$ is defined as the size of its ground set. 
A profile $(T,D,M)$ is said to be a \emph{$k$-profile} if $|D|\leq k$. When the profile under consideration is clear from the context, 
we denote $T\setminus D$ and $T\setminus M$ by $\bar{D}$ and $\bar{M}$ respectively.

We say that \emph{a profile $(T,D,M)$ is realizable with $S\subseteq V(G)$} if the following conditions hold.
\begin{compactenum}
\item $S\subseteq \bigcup_{x \in T} x(G)$, 
\item for every $x \in V(G_i)$, $x \in D$ if and only if $x(G)\cap S\neq \emptyset$, and
\item for every $x \in V(G_i)$, $x \in M$ if and only if $x(G)$ is (fully) dominated by $S$.
\end{compactenum}
A profile is said to be \emph{realizable} if there exists $S$ with which it is realizable.

Suppose that $x,y \in V(G_{i+1})$ are contracted to yield $G_i$ with $z$ being the new vertex. 
For a vertex set $T\subseteq V(G_i)$ connected in the red graph $V(G_{i},R_{i})$ and containing $z$, 
let $T_1,\ldots , T_{\ell}$ be the red connected components of $T'=(T\setminus z)\cup \{x,y\}$ in $G_{i+1}$, i.e. 
the partition of $T'$ into maximal vertex sets each of which is connected in $V(G_{i+1},R_{i+1})$.
The number of these red subgraphs does not exceed $d+2$ because each $T_i$ either contains $x$ or $y$, or one of the newly created red neighbors of $z$. 
Notice also that $\ell$ can be equal to 1, which means that $x$ and $y$ belong to the same connected component of $(V(G_{i+1}),R(G_{i+1}))$.

For a $k$-profile $(T,D,M)$ of $G_i$ such that $z\in T$, we say that a set $\mathcal P=\{(T_1,D_1,M_1), \ldots,$ $(T_{\ell}, D_{\ell}, M_{\ell})\}$ of $k$-profiles of $G_{i+1}$ 
is \emph{consistent with $(T,D,M)$} if the following holds. Let $T':= (T\setminus z)\cup \{x,y\}$, $D':=\bigcup_{j=1}^{\ell} D_j$ and $M':=\bigcup_{j=1}^{\ell} M_j$.
\begin{compactenum}
\item The ground sets of the profiles in $\mathcal P$ are precisely the red components of $T'$ in $G_{i+1}$.
\item $D\setminus z=D'\setminus \{x,y\}$.
\item $z\in D$ if and only if $x\in D'$ or $y\in D'$. 
\item For every $u \in T\setminus z$, $u\in M$ if and only if $u \in M'$ or there exists $v \in D'$ such that $uv$ is a black edge in $G_{i+1}$.
\item $z \in M$ if and only if for each $u \in \{x,y\}$, it holds that: $u \in M'$ or there exists $v \in D'$ such that $uv$ is a black edge in $G_{i+1}$.
\end{compactenum}

\medskip

\noindent {\bf Algorithm, and how to compute $\tau_i$ from $\tau_{i+1}$.} At each iteration along the $d$-sequence, 
we maintain one mapping $\tau_i$ from $k$-profiles $P=(T,D,M)$ of $G_i$ with $|T|<(d^2+1)k$ to a subset of $\bigcup_{t\in T}t(G)$. 
The assignment $\tau_i(P) = nil$ is interpreted as that $P$ is not realizable whereas $\tau_i(P) \neq nil$ is intended to be a minimum-size vertex set of $V(G)$ realizing $P$. 
Again let $G_i$ be obtained by contracting the vertices $x,y\in V(G_{i+1})$ and $z$ be the new vertex. 
Our goal is to compute $\tau_i$ from $\tau_{i+1}$, assuming $\tau_{i+1}$ has been computed correctly. 
Note that a $k$-profile $P=(T,D,M)$ of $G_{i}$ such that $z\notin T$ is also a profile of $G_i$, and trivially one is realizable with $S$ if and only if the other is realizable with $S$. 
Therefore, $\tau_i$ simply inherits the assignment of $\tau_{i+1}$ in this case as depicted in lines 6-7.

If $P=(T,D,M)$ has $z$ in its ground set, the algorithm \algkds inspects all sets $\mathcal P$ of $k$-profiles of $G_{i+1}$ consistent with $(T,D,M)$
 and among the unions $\bigcup_{P\in \mathcal P} \tau_{i+1}(P)$ over all such $\mathcal P$, outputs the best one as $\tau_i(T,D,M)$, that is, the one of minimum cardinality is chosen.
If $\bigcup_{P\in \mathcal P} \tau_{i+1}(P)=nil$ for each consistent $\mathcal P$, the algorithm concludes that $(T,D,M)$ is not realizable and assigns $nil$.
The case when $\mathcal P$ contains a $k$-profile $P$ with ground set of size at least $(d^2+1)k$, a special step is taken as $\tau_{i+1}$ is not defined on such $P$. 
In this situation, a vertex $v\in T'\setminus \bigcup_{t\in D'}B^2_{i+1}(t)$ is chosen, and the query at $(T'\setminus v,D'\setminus v,M'\setminus v)$ is made instead. 
Lines 15-18 handle this case. The uniqueness of $k$-profile in $\mathcal P$ in line 16 and the existence of such $v$ in line 17 will be discussed in the correctness proof. 

\begin{algorithm}
  \DontPrintSemicolon
  \SetKwInOut{Input}{Input}
  \SetKwInOut{Output}{Output}
  \Input{~~A graph $G$, a positive integer $k$, and a $d$-sequence $G=G_n, \ldots, G_1=K_1$.}
  \Output{~~A dominating set of $G$ of size at most $k$, or report $nil$ (\textsc{No}-instance).}
 
 \For{$v\in V(G_n)$}{
  		$\tau_n(\{v\},\{v\},\{v\})=\{v\}$, $\tau_n(\{v\},\emptyset,\emptyset)=\emptyset$, $\tau_n(P) = nil$ for all other $k$-profiles $P$\;
  }
 \For{$i = n-1 \rightarrow 1$}{
   		$x,y \leftarrow $ contracted pair in $G_{i+1} \to G_i$\;
   		$z \leftarrow $ contraction of $x$ and $y$ in $G_i$\;
		\For{every  $k$-profile $(T,D,M)$ of $G_i$ of size less than $(d^2+1)k$ s.t. $z\notin T$}{
					$\tau_i(T,D,M) \leftarrow \tau_{i+1}(T,D,M)$\;
		}
 		 \For{every  $k$-profile $(T,D,M)$ of $G_i$ of size less than $(d^2+1)k$ s.t. $z\in T$}{
		 		$\tau_i(T,D,M) \leftarrow nil$\;
  				$T' \leftarrow (T\setminus z)\cup\{x,y\}$\;
				\For{every set $\mathcal P$ of $k$-profiles of $G_{i+1}$ consistent with $(T,D,M)$}{
						\If{each $k$-profile of $\mathcal P$ has size less than $(d^2+1)k$}{
								\If{$\tau_{i+1}(P)\neq nil$ for all $P\in \mathcal P$}{  
											$\tau_i(T,D,M) \leftarrow~\best\{\tau_i(T,D,M),  \bigcup_{P\in \mathcal P}\tau_{i+1}(P)\}$\;		
								}
  	 					}
						\Else{
								Let $(T',D',M')$ be the unique $k$-profile contained in $\mathcal P$.\; 
								Choose $v\in T'\setminus \bigcup_{t\in D'}B^2_{i+1}(t)$\;
								$\tau_i(T,D,M) \leftarrow~\best\{\tau_i(T,D,M), \tau_{i+1}(T'\setminus v,D'\setminus v,M'\setminus v)\}$ \;
						}
				}

				\If{$\tau_i(T,D,M)\neq nil$ and has size larger than $k$}{
							$\tau_i(T,D,M) \leftarrow nil$\;
				}
  		}
} 
 \Return{$\tau_1(V(G_1),V(G_1),V(G_1))$}\;
 \caption{\algkds}
 \label{alg:kds}
\end{algorithm}

\medskip

\noindent {\bf Correctness.} To show the correctness of Algorithm~\ref{alg:kds}, it suffices to prove the following.
\begin{quote}
$(\star)$ For every $i\in [n]$ and every $k$-profile $P$ of $G_i$, we have $\tau_i(P)\neq nil$ if and only if $P$ is realizable with a 
set of size at most $k$. Furthermore, if $\tau_i(P)\neq nil$, then $\tau_i(P)$ is a set of minimum size with which $P$ is realizable.
\end{quote} 
We prove $(\star)$ by induction. In the base case  when $i=n$, the claim trivially holds. Assume $i<n$ and let $x,y$ be the vertices of $G_{i+1}$ 
which were contracted to yield $G_i$, where $z$ is the newly obtained vertex of $G_i$. 
By induction hypothesis, for any $k$-profile $(T,D,M)$ of $G_i$ with  $z\notin T$ the claim holds as it is a $k$-profile of $G_{i+1}$ as well. 

Therefore, we assume that $z\in T$ and let $T'=(T\setminus z)\cup \{x,y\}$. 
\begin{claim}
Assume that $(\star)$ holds for all $i'>i$ and let $P=(T,D,M)$ be a $k$-profile of $G_i$. If $P$ is realizable with a set of size at most $k$, then $\tau_i(P)\neq nil$.
\end{claim}
\begin{proofofclaim}
Suppose that $P=(T,D,M)$ is realizable with $S\subseteq V(G)$ of size at most $k$. 
Let $T_1,\ldots , T_{\ell}$ be the  red connected components of $T'$ in $G_i$, and let $S_j= S \cap \bigcup_{t\in T_j}t(G)$ for every $j \in [\ell]$. 
The pairs $T_j$ and $S_j$ for $j=1, \ldots , \ell$ define a set of $\ell$ $k$-profiles $(T_j, D_j,M_j)$ of $G_{i+1}$ in a canonical way: 
$D_j$ is precisely the set of vertices $t\in T_j$ such that $t(G) \cap S_j$ 
and $M_j$ is the set of vertices $t\in T_j$ such that $t(G)$ is (fully) dominated by $S_j$. 
By construction, each $k$-profile $(T_j,D_j,M_j)$ is realizable with $S_j$. 

We argue that the set $\mathcal P=\{(T_j,D_j,M_j):j\in [\ell]\}$ is consistent with $P=(T,D,M)$.
The first and the second conditions for consistency are clearly satisfied.
To verify the third condition, consider a vertex $u \in T$ distinct from $z$ and without loss of generality we assume $u\in T_{j^*}$. 
If $u\in M$ and $u\notin M_{j^*}$, this means that $S_{j^*}$ does not dominate $u(G)$ because $S_{j^*}$ realizes $(T_{j^*},D_{j^*},M_{j^*})$. 
From $u\in M$ and the fact that $S$ realizes $(T,D,M)$, we know that $S$ dominates $u(G)$ and thus there is at least one vertex $S\setminus S_{j^*}$ which is adjacent (in $G$) with some vertex of $u(G)$.
Consider an arbitrary vertex $v\in T$ to which some of $S\setminus S_{j^*}$ contracts to, and observe that $v \notin T_{j^*}$.
This means that $uv$ is a black edge.
The converse direction of the third condition is clearly met.
The fourth condition of consistency can be verified similarly as the third condition.

If $\mathcal P$ does not contain any $k$-profile whose ground set has size at least $(d^2+1)k$, now the claim is immediate 
because each $(T_j,D_j,M_j)$ is realizable with $S_j$: by induction hypothesis, we have $\tau_{i+1}(T_j,D_j,M_j)\neq nil$, and thus $\tau_i(T,D,M)$ is set to $\neq nil$ at line 14.

Suppose that $\mathcal P$ contains a $k$-profile whose ground set has size at least $(d^2+1)k$. One can easily see that 
in this case, $\ell=1$ or equivalently $T'$ is a red connected component  in $(V(G_{i+1}),R(G_{i+1}))$ consisting of exactly $(d^2+1)k$ vertices. Since 
the union of at most $k$ balls of radius at most $2$ which is connected in  $(V(G_{i+1}),R(G_{i+1}))$ have less than $(d^2+1)k$ vertices, 
there exists $v\in T'\setminus \bigcup_{t\in D'}B^2_{i+1}(t)$. Moreover, by the choice of $v$, $(T'\setminus v,D'\setminus v,M'\setminus v)$ is now a $k$-profile of $G_{i+1}$. 
To conclude that $\tau_i(T,D,M)\neq nil$, it suffices to prove that $\tau_{i+1}(T'\setminus v,D'\setminus v,M'\setminus v)\neq nil$. 
We do this by showing that $(T,D,M)$, $(T',D',M')$ and $(T'\setminus v,D'\setminus v,M'\setminus v)$ are equivalent in regards to realizability. 

The equivalence of the first two is obvious. 
For the equivalence of the last two, note that if $S$ realizes $(T',D',M')$,  
$S$ does not intersect $v(G)$, and thus $S$ trivially realizes $(T'\setminus v,D'\setminus v,M'\setminus v)$. 
Conversely, suppose that $(T'\setminus v,D'\setminus v,M'\setminus v)$ is realizable with $S'$.
The crucial observation is that $v$ has no red neighbor in $D'$ since otherwise, 
$v$ belongs to the union $\bigcup_{t\in D'}B^2_{i+1}(t)$, contradicting the choice of $v$. 
Therefore, we know that $v\in M'$ if and only if there exists $u\in D'\setminus v$ such that $uv$ is a black edge. In the case when $v\in M'$, there exists 
a black neighbor $u\in D'\setminus v$ of  $v$, and any $S'$ 
realizing $(T'\setminus v,D'\setminus v,M'\setminus v)$ intersects $u(G)$. If follows that $S'$ fully dominates $v(G)$ and 
$S'$ realizes $(T',D',M')$.
Else if $v\notin M'$, this means that not only the red neighbors of $v$ are disjoint from $D'$ but also 
no black neighbor of $v$ is contained in $D'$. As a consequence $v(G)$ is not dominated by $S'$, thus $S'$ realizes $(T',D',M')$. 
This proves the equivalence of $(T',D',M')$ and $(T'\setminus v,D'\setminus v,M'\setminus v)$, and completes the proof of the claim.
\end{proofofclaim}

To establish the other direction, suppose that $\tau_i(T,D,M)\neq nil$ and let $\mathcal{P}^*$ be the set consistent with $P$ such that 
$\tau_i(T,D,M)=\bigcup_{P\in \mathcal{P}^*} \tau_{i+1}(P)$ or $\tau_i(T,D,M)=\tau_{i+1}(T'\setminus v,D'\setminus v,M'\setminus v)$ for some $v$. 
Such $\mathcal{P}^*$ clearly exists since otherwise only $nil$ can be output. In the former case, it is  tedious to verify that 
if each $(T_i,D_i,M_i)$ of $\mathcal{P}^*$ is realizable with $S_i$, then $\bigcup_{i \in [\ell]} S_i$ realizes $(T,D,M)$. 

In the latter case, we simply recall that $(T,D,M)$ and $(T'\setminus v,D'\setminus v,M'\setminus v)$ are equivalent in regards to realizability. 
This completes the proof of the first statement of $(\star)$.
The second statement immediately follows.

\medskip

\noindent {\bf Running time.}
In an actual implementation of Algorithm~\ref{alg:kds}, we maintain a single mapping~$\tau$.
As we proceed from $G_{i+1}$ to $G_i$, we modify the domain of $\tau$ consisting of $k$-profiles so that new $k$-profiles involving $z$ are added and after calculating the assignments for the new $k$-profiles, all the domains and corresponding assignments involving $x$ or $y$ shall be discarded. 
Therefore, it suffices to check the running time for updating $\tau$, which is performed in the inner loop of lines 6-20. 
By~\cref{cor:connected-subgraphs}, there are $O(d^{2(d^2+1)k-2}\cdot 2^{2(d^2+1)k})$ new profiles of $G_i$ to compute. 
For each $k$-profile $(T,D,M)$ with $z\in T$, the ground sets $T_1,\dots, T_{\ell}$ of a potentially consistent set $\mathcal P$ is already determined. 
Hence, we exhaust all possibilities of appending each $T_i$ by $M_i$ and $D_i$ to form a $k$-profile and the inner loop of 8-20 will consider at most 
$2^{(d^2+1)k}\cdot 2^{(d^2+1)k}$ sets $\mathcal P$. The consistency of $\mathcal P$ with $(T,D,M)$ can be routinely verified.
This establishes the claimed running time.
\end{proof}

\section{Bounded twin-width classes are $\chi$-bounded}\label{sec:chibounded}

So far, our algorithms followed the same recipe: Initialize partial solutions on single-vertex sets, stitch together a bounded number of partial solutions when they become connected in the red graph after the current contraction, and conclude with the partial solutions on the last (1-vertex) graph of the sequence.
This is the original scheme of Guillemot and Marx~\cite{Guillemot14}, and of our model checking algorithm~\cite{twin-width1}.

We now present a novel use of the contraction sequence.
It consists of starting at the end, when all the vertices are contracted on a single vertex, and rewinding the sequence.
The single vertex is first ``split'' into two vertices (linked by a black or red edge if $G$ is connected).
Then one of these two vertices is split into two new vertices, and so on.
Typically, at first, edges are mostly red.
As the vertex partition gets finer, black edges start appearing (eventually all edges are black).
In this direction of time, black edges are irreversible: When a black edge first appears between $x$ and $y$ in $V(G_i)$, it stays or rather spreads into the biclique $(x(G),y(G))$.
We use this new viewpoint to color triangle-free graphs of bounded twin-width with a constant number of colors.
We show that the newly split vertices can be greedily colored, while the rest of the colors remains unchanged.
Importantly for coloring, in a triangle-free graph, when a black edge appears between $x$ and $y$ we know that both sides $x(G)$ and $y(G)$ of the biclique are independent sets. 

The following coloring procedure essentially contains the $\chi$-boundedness of bounded twin-width classes.
Despite its simplicity, this for instance generalizes the non-trivial result that bounded rank-width classes are $\chi$-bounded~\cite{Dvorak12}.
The proof that graphs with bounded rank-width have bounded twin-width, presented in \cite{twin-width1}, is also elementary. 

\begin{theorem}\label{thm:triangle-free}
Every triangle-free graph with twin-width at most~$d$ is $d+2$-colorable.  
\end{theorem}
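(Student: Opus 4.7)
Following the rewinding philosophy sketched just above, the plan is to build the coloring by induction along the reversed $d$-sequence $G_1, G_2, \ldots, G_n$, maintaining the invariant that each trigraph $G_i$ admits a proper $(d+2)$-coloring $c_i$ with respect to all of its edges (black \emph{and} red). Since $G_n = G$ and the singleton partition leaves no red edges, $c_n$ is the desired proper $(d+2)$-coloring of $G$. The base case $G_1 = K_1$ is trivial, so the real content lies in the inductive step.

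For that step, suppose $G_i$ is obtained from $G_{i-1}$ by splitting $z$ into $u, v$, and fix $c_{i-1}$. I would set $c_i(u) := c_{i-1}(z)$, leave every other vertex's color unchanged, and then carefully extend to $c_i(v)$. A routine observation, based on the inclusion $u(G) \subseteq z(G)$, is that any neighbor $w \notin \{u,v\}$ of $u$ in $G_i$ is also a neighbor of $z$ in $G_{i-1}$, so $c_{i-1}(w) \neq c_{i-1}(z) = c_i(u)$; in particular no conflict is introduced at $u$. The analogous observation applies to $v$, which disposes of the case where $uv$ is \emph{not} an edge of $G_i$: setting $c_i(v) := c_{i-1}(z)$ is then valid.

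The interesting case, and the one where triangle-freeness of $G$ enters, is when $uv$ is an edge (black or red) of $G_i$. Here one must pick $c_i(v)$ avoiding $c_i(u)$ plus the colors of all other neighbors of $v$ in $G_i$, and the target is to show that at most $d+1$ colors are forbidden. The crucial claim is that every such other neighbor $w$ is a \emph{red} neighbor of $z$ in $G_{i-1}$. Indeed, if $zw$ were black, then $z(G)$ would be fully adjacent to $w(G)$, forcing both $uw$ and $vw$ to be black in $G_i$; combined with any $a \in u(G), b \in v(G)$ with $ab \in E(G)$ (which exists because $uv$ is an edge) and any $c \in w(G)$, this would produce a triangle $\{a,b,c\}$ in $G$, contradicting the triangle-freeness hypothesis. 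Hence these neighbors of $v$ lie among the at most $d$ red neighbors of $z$ in $G_{i-1}$, so $v$ has at most $d+1$ forbidden colors and one of the $d+2$ available colors remains free. The main conceptual obstacle is pinning down the right invariant, namely insisting on properness with respect to red edges as well as black ones; once that is fixed, triangle-freeness supplies precisely the counting needed and the induction goes through.
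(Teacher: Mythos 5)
Your proof is correct and follows essentially the same route as the paper's: rewind the $d$-sequence, maintain a proper $(d+2)$-coloring of the trigraph including its red edges, give $u$ the color of $z$, and use triangle-freeness to show that when $uv$ is an edge all relevant neighbors of $z$ are red, leaving at most $d+1$ forbidden colors for $v$. The only cosmetic difference is that the paper phrases the case analysis as ``$z$ has a black neighbor $\Rightarrow$ $uv$ is a non-edge'' while you use the contrapositive, but the triangle argument is identical.
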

\begin{proof}
  Let $G$ be an $n$-vertex triangle-free graph of twin-width at most~$d$, and let $G=G_n,\ldots,G_1=K_1$ be a $d$-sequence of $G$.
  We show how to color $G$ with $d+2$ colors starting from $G_1$, and iteratively coloring $G_{i+1}$ based on the coloring of $G_i$.
  We give the unique vertex of $G_1=K_1$ color 1.
  This defines coloring $C_1$.
  For every $i$ from 1 to $n-1$, let $z$ be the vertex of $G_i$ split into $u, v \in V(G_{i+1})$.
  In coloring $C_{i+1}$, every vertex of $V(G_{i+1}) \setminus \{u,v\}$ keeps the color it received by $C_i$.
  Vertex $u$ receives color $C_i(z)$.
  Finally, $v$ receives color $C_i(z)$ if $uv$ is a non-edge in $G_{i+1}$, and the smallest positive integer \emph{not} appearing in its neighborhood (black \emph{and} red neighbors) in $G_{i+1}$, otherwise. 
  We will now show that $C_n$ is a proper coloring of $G$ using at most $d+2$ distinct colors.

  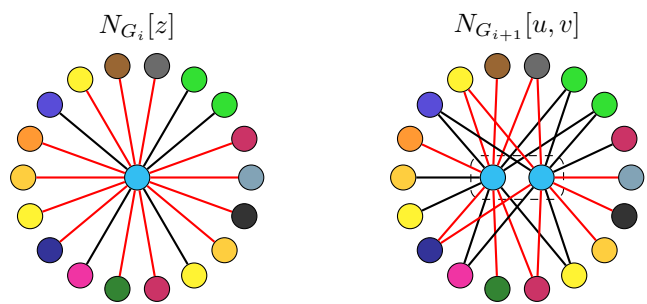
\begin{figure}
    \centering
    \begin{tikzpicture}
      \def\s{1.5} ; 
      \foreach \a/\c [count = \i from 1] in {
        0/cyan!60!purple,20/purple,40/black!15!green,60/black!15!green,80/gray!55!black,100/orange!50!black,120/yellow,140/blue!75!brown,160/orange,180/yellow!50!orange,
        200/yellow,220/blue!50!black,240/magenta,260/black!60!green,280/purple,300/yellow,320/yellow!50!orange,340/black}
               {
        \node[draw,circle,fill opacity=0.8,fill=\c] (v\i) at (\a:\s) {} ;
      }
      \node[draw,circle,fill opacity=0.8,fill=cyan] (z) at (0,0) {} ;

      \foreach \i in {3,4,8,13,16}{
        \draw[thick] (z) -- (v\i) ;
      }
      \foreach \i in {1,2,5,6,7,9,10,11,12,14,15,17,18}{
        \draw[thick,red] (z) -- (v\i) ;
      }
      \node at (0,2) {$N_{G_i}[z]$} ; 

      \begin{scope}[xshift=5cm]
        \foreach \a/\c [count = \i from 1] in {
           0/cyan!60!purple,20/purple,40/black!15!green,60/black!15!green,80/gray!55!black,100/orange!50!black,120/yellow,140/blue!75!brown,160/orange,180/yellow!50!orange,
           200/yellow,220/blue!50!black,240/magenta,260/black!60!green,280/purple,300/yellow,320/yellow!50!orange,340/black}
                 {
        \node[draw,circle,fill opacity=0.8,fill=\c] (v\i) at (\a:\s) {} ;
      }
            \node[draw,circle,fill opacity=0.8,fill=cyan] (u) at (-0.32,0) {} ;
            \node[draw,circle,fill opacity=0.8,fill=cyan] (v) at (0.32,0) {} ;
       \node[draw,dashed,rounded corners,fit=(u) (v)] {} ;
            
      \foreach \i in {3,4,8,10,11,13,16}{
        \draw[thick] (u) -- (v\i) ;
      }
      \foreach \i in {5,6,7,9,12,14,15}{
        \draw[thick,red] (u) -- (v\i) ;
      }
       \foreach \i in {2,3,4,8,13,16}{
        \draw[thick] (v) -- (v\i) ;
      }
      \foreach \i in {1,5,7,12,15,17,18}{
        \draw[thick,red] (v) -- (v\i) ;
      }
      \node at (0,2) {$N_{G_{i+1}}[u,v]$} ; 
      \end{scope}
    \end{tikzpicture}
    \caption{Split, when $z$ is incident to a black edge in $G_i$. As $G$ is triangle-free, there cannot be an edge (red or black) between $u$ and $v$. Thus both $u$ and $v$ can take the color of $z$, which does not appear in their neighborhood.}
    \label{fig:coloring1}
  \end{figure}
   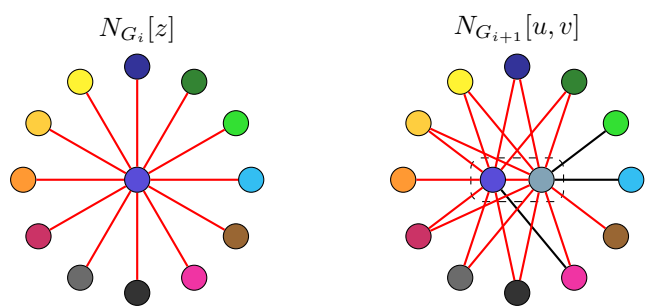
\begin{figure}
    \centering
    \begin{tikzpicture}
      \def\s{1.5} ; 
      \foreach \a/\c [count = \i from 1] in {
        0/cyan,30/black!15!green,60/black!60!green,90/blue!50!black,120/yellow,150/yellow!50!orange,180/orange,
        210/purple,240/gray!55!black,270/black,300/magenta,330/orange!50!black}{
        \node[draw,circle,fill opacity=0.8,fill=\c] (v\i) at (\a:\s) {} ;
      }
      \node[draw,circle,fill opacity=0.8,fill=blue!75!brown] (z) at (0,0) {} ;

      \foreach \i in {1,...,12}{
        \draw[red,thick] (z) -- (v\i) ;
      }
      \node at (0,2) {$N_{G_i}[z]$} ; 

      \begin{scope}[xshift=5cm]
        \foreach \a/\c [count = \i from 1] in {
        0/cyan,30/black!15!green,60/black!60!green,90/blue!50!black,120/yellow,150/yellow!50!orange,180/orange,
        210/purple,240/gray!55!black,270/black,300/magenta,330/orange!50!black}{
        \node[draw,circle,fill opacity=0.8,fill=\c] (v\i) at (\a:\s) {} ;
      }
        \node[draw,circle,fill opacity=0.8,fill=blue!75!brown] (u) at (-0.32,0) {} ;
        \node[draw,circle,fill opacity=0.8,fill=cyan!60!purple] (v) at (0.32,0) {} ;
        \draw[red,thick] (u) -- (v) ;
        \node[draw,dashed,rounded corners,fit=(u) (v)] {} ;

      \foreach \i in {3,...,10}{
        \draw[red,thick] (u) -- (v\i) ;
      }
      \foreach \i in {11}{
        \draw[thick] (u) -- (v\i) ;
      }
      \foreach \i in {3,...,6,8,9,...,12}{
        \draw[red,thick] (v) -- (v\i) ;
      }
       \foreach \i in {1,2}{
        \draw[thick] (v) -- (v\i) ;
      }
      \node at (0,2) {$N_{G_{i+1}}[u,v]$} ;
      \end{scope}
    \end{tikzpicture}
    \caption{Split, when $z$ is only incident to red edges. Even if the red neighbors of $z$ have $d$ distinct colors, vertex $v$ can find a color in $[d+2]$ which avoids these $d$ colors plus the color of $z$ and $u$.}
    \label{fig:coloring2}
  \end{figure}
  
  We show by induction on $i$ that $C_i$ is a proper $d+2$-coloring of the \emph{graph} $G'_i := (V(G_i),E(G_i) \cup R(G_i))$. 
  Coloring $C_1$ is indeed proper in $G'_1$ and uses $1 \leqslant d+2$ color.
  We assume that $C_i$ is a proper $d+2$-coloring of $G'_i$, and distinguish two cases.
  If there is a black edge $yz \in E(G_i)$ (recall that $z$ is the vertex split into $u, v$), then $uv$ has to be a non-edge in $G_{i+1}$.
  Otherwise there is at least one edge between $u(G)$ and $v(G)$, and this edge forms a triangle with any vertex in $y(G)$.
  Thus in that case, $C_{i+1}(u) = C_{i+1}(v) = C_i(z)$.
  So the number of distinct colors given by $C_{i+1}$ is still at most $d+2$ (see~\cref{fig:coloring1}).
  And $C_{i+1}$ is a proper coloring of $G'_{i+1}$ since $N_{G'_{i+1}}(\{u,v\}) = N_{G'_i}(z)$. 
  If instead $z$ has only red neighbors in $G_i$, then $z$ has at most $d$ neighbors in $G'_i$.
  Furthermore let us assume that $uv \in E(G'_{i+1})$, otherwise we conclude as previously.
  In that case, $v$ is properly colored by $C_{i+1}$ in $G'_{i+1}$ by construction, and vertex $u$ as well, since $N_{G'_{i+1}}(u) \setminus \{v\} \subseteq N_{G'_i}(z)$.
  Finally $C_{i+1}(v)$ is the smallest positive integer not appearing in a set of at most $d+1$ positive integers.
  Thus $C_{i+1}(v) \leqslant d+2$, and $C_{i+1}$ is overall a proper $d+2$-coloring of $G'_{i+1}$ (see~\cref{fig:coloring2}).

  In particular, $C_n$ is a proper $d+2$-coloring of $G'_n = G_n = G$.
\end{proof}

As a side note, it is, to our knowledge, possible that every triangle-free $K_t$-minor free graph has twin-width $O(t)$.
If this turns out to be true, it offers a seemingly different approach to getting improved bounds in the triangle-free case of the Hadwiger's conjecture:
Instead of trying to color these graphs, one could try to design contraction sequences for them.

We now show how to color any $K_t$-free graph $G$ given with a $d$-sequence, with at most $(d+2)^{t-2}$ colors.
We use the scheme of \cref{thm:triangle-free} and color some induced subgraphs of $G$ by induction on $t$.

\begin{theorem}\label{thm:chibounded}
 For every integer $t \geqslant 3$, every $K_t$-free graph with twin-width at most~$d$ is $(d+2)^{t-2}$-colorable.
\end{theorem}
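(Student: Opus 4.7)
The plan is to proceed by induction on $t$. The base case $t = 3$ is exactly Theorem~\ref{thm:triangle-free}, giving $(d+2)^{3-2} = d+2$ colors in the triangle-free case. For the inductive step, I assume the statement holds for $t-1$ and prove it for $t$. Let $G$ be $K_t$-free with twin-width at most $d$, and let $G = G_n, \ldots, G_1 = K_1$ be a $d$-sequence of $G$.

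The first step is to design a \emph{modified} version of the greedy coloring procedure of Theorem~\ref{thm:triangle-free} that partitions $V(G)$ into at most $d+2$ classes, each inducing a $K_{t-1}$-free subgraph. Concretely, at each split of $z$ into $u,v$, the vertex $u$ still inherits $C_i(z)$, and $v$ inherits $C_i(z)$ whenever either $uv$ is a non-edge in $G_{i+1}$ \emph{or} $z$ has at least one black neighbor in $G_i$; otherwise ($uv$ is an edge in $G_{i+1}$ and $z$ has only red neighbors) we set $C_{i+1}(v)$ to the smallest positive integer absent from $N_{G_{i+1}}(v)$. The Case~2 analysis of Theorem~\ref{thm:triangle-free} is unchanged, so this produces a labeling $C\colon V(G) \to [d+2]$.

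The central claim is that each color class $V_j := C^{-1}(j)$ induces a $K_{t-1}$-free subgraph of $G$. Suppose for contradiction that $V_j$ contains a clique $K$ of size $t-1$. Tracing the partition refinement from $G_1$ to $G_n$, let $i^\star$ be the first step at which $K$ ceases to be contained in a single part: then in $G_{i^\star-1}$ some vertex $z$ satisfies $K \subseteq z(G)$, and after the split $z \leadsto (u,v)$ both $K \cap u(G)$ and $K \cap v(G)$ are non-empty. Because $K$ is a clique in $G$, there is at least one edge between $u(G)$ and $v(G)$, so $uv$ is an edge (black or red) in $G_{i^\star}$. Since $u$ and $v$ nevertheless share the color $C_{i^\star-1}(z) = j$, the modified rule forces us to be in the black-neighbor branch: some $y$ is a black neighbor of $z$ in $G_{i^\star-1}$. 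But then $y(G)$ is fully adjacent to $z(G) \supseteq K$ in $G$, so any $w \in y(G)$ yields $K \cup \{w\} \cong K_t$, contradicting the $K_t$-freeness of $G$.

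To finish, I use the fact that twin-width is inherited by induced subgraphs: projecting the $d$-sequence of $G$ onto $V_j$ (restricting each partition $\mathcal{P}_i$ to $V_j$ and skipping vacuous contractions) yields a $d$-sequence of $G[V_j]$, since red degrees can only decrease. Each $G[V_j]$ is therefore $K_{t-1}$-free of twin-width at most $d$, and the induction hypothesis colors it properly with $(d+2)^{t-3}$ colors. Combining the primary label $C(v) \in [d+2]$ with the secondary color inside $G[V_{C(v)}]$ yields a proper coloring of $G$ with $(d+2)\cdot(d+2)^{t-3} = (d+2)^{t-2}$ colors. The main obstacle is the claim that color classes are $K_{t-1}$-free; the key insight making it work is that a surviving black edge at the moment of splitting encodes a full biclique in $G$, which is precisely the structural witness that bumps a $(t-1)$-clique up to a forbidden $K_t$.
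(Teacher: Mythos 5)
Your two-phase architecture (first a $(d+2)$-labelling of all of $V(G)$ whose classes are claimed to be $K_{t-1}$-free, then a recursive colouring of each class via the restricted sequence) is genuinely different from the paper's, which instead interleaves the recursion with the sequence by permanently colouring $x(G)$ with a recursive $(t-1)$-colouring the moment $x$ acquires a black edge. The plan is viable, but the proof of your central claim has a gap at the step ``Since $u$ and $v$ nevertheless share the color $C_{i^\star-1}(z)=j$.'' The classes $V_j$ are defined by the \emph{final} labelling $C_n$, and the colour of the part containing a fixed vertex is not constant over time: every later split in which the new part $v'$ is joined by an edge to $u'$ and the split vertex has only red neighbours re-colours $v'$, hence re-colours every vertex of $v'(G)$. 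So from $K \subseteq C_n^{-1}(j)$ you cannot conclude that $C_{i^\star-1}(z)=j$, nor that $u$ and $v$ receive equal colours at step $i^\star$. It is a priori possible that $v$ takes the new-colour branch at step $i^\star$ (so $C_{i^\star}(u)\neq C_{i^\star}(v)$ and no black neighbour is forced) and that later re-colourings bring the descendants of $u$ and $v$ containing $K$ back to a common colour $j$. As written, the sentence assumes what must be proved: given that $uv$ is an edge, ``$u$ and $v$ share a colour'' is \emph{equivalent} to ``the inherit branch, hence the black-neighbour branch, was taken.''

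The claim is in fact true and the argument can be repaired by looking at the \emph{last} re-colouring rather than the first separation. Let $s$ be the last step at which the colour of the part containing some $w^\star\in K$ changes. If $s<i^\star$, then every $w\in K$ keeps its colour from step $i^\star-1$ onwards, so $C_{i^\star-1}(z)=j$; moreover $v$ must take the inherit branch at step $i^\star$ (a new colour would avoid $C_{i^\star}(u)$ since $uv$ is an edge, contradicting $s<i^\star$), which forces the black neighbour and hence the $K_t$ — this is your argument, now justified. If $s\ge i^\star$, the re-coloured part $P\ni w^\star$ receives colour $j$ while avoiding the colours of all its black and red neighbours at step $s$; since $K$ is already separated at step $s$, some $w\in K$ lies in a different part $Q$, which is a neighbour of $P$ because $ww^\star\in E(G)$, so $Q$'s colour at step $s$ is not $j$, forcing a re-colouring of $w$'s part after step $s$ and contradicting the maximality of $s$. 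With that lemma in place, the remainder of your proof (restricting the $d$-sequence to $V_j$ and taking the product of the primary label with the recursive colouring) is correct.
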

\begin{proof}
  Let $G_n, \ldots, G_1$ be a $d$-sequence of a $K_t$-free graph $G$ with $t \geqslant 3$.
  In~\cref{thm:triangle-free}, whenever a vertex $x \in V(G_{i+1})$ was incident to a black edge for the first time (going from $G_1$ to $G_n$), the color of all the vertices in $x(G)$ was eventually set to the same value, namely $C_{i+1}(x)$.
  Now such a set $x(G)$ is not necessarily an independent set, but rather induces a $K_{t-1}$-free graph.
  Indeed, a $K_{t-1}$ in $G[x(G)]$ would form a $K_t$ in $G$ with any vertex of $y(G)$, where $xy \in E(G_{i+1})$.
  By induction on $t$, we may color $G[x(G)]$ with tuples of at most $t-3$ integers of $[d+2]$, and prepends $C_{i+1}(x)$ to these tuples.
  The base case $t=3$ is~\cref{thm:triangle-free}.
  We make the general idea a bit more precise.

  For every $i \in [n]$, we define $G^*_i$ as the \emph{graph} obtained from $G_i$ by blowing every vertex $x \in V(G_i)$ into $G[x(G)]$ whenever $x$ is incident to a black edge, and then turning every red edge into a black edge.
  We define the successive colorings $C'_1, \ldots, C'_n$ of $G^*_1, \ldots, G^*_n$, respectively, following the algorithm of~\cref{thm:triangle-free}.
  While there are no black edge in the current trigraph $G_i$, we set $C'_i := C_i$, where $C_i$ is the coloring in the triangle-free case. 
  Say, at least one black edge appears for the first time in $G_{i+1}$ (this is well-defined since $G_n$ has only black edges).
  Again we adopt the convention that $z \in V(G_i)$ was split into $u, v \in V(G_{i+1})$.
  Let $S$ be the set of (at most $d+2$) vertices with an incident black edge in $G_{i+1}$.
  (One may notice that $S \subseteq \{u,v\} \cup N_{G_i}(z)$ and $S \cap \{u,v\} \neq \emptyset$.)
  Every vertex $w \in V(G_{i+1}) \setminus S$ receives color $C_{i+1}(w)$.  
  As we observed, for every $x \in S$, $G[x(G)]$ is $K_{t-1}$-free.
  By induction there is a coloring $C^x$ of $G[x(G)]$ with tuples of at most $t-3$ integers from $[d+2]$.
  We \emph{permanently} color every vertex $y \in x(G)$ by $(C_{i+1}(x),C^x(y))$.
  This defines the coloring $C'_{i+1}$ of $G^*_{i+1}$. 

  We continue to follow~\cref{thm:triangle-free}, with the ensuing precisions.
  We go through all the splits, including the ones between two permanently colored vertices, since they may make some other vertices incident to a black edge for the first time.
  If the split vertex $z \in V(G_j)$ is \emph{not} such that $z(G)$ was already permanently colored, the colors of the new vertices $u, v \in V(G_{j+1})$ are chosen according to the rules of~\cref{thm:triangle-free} where we consider the trigraphs $G_j$ and $G_{j+1}$ (and not the graphs $G^*_j$ and $G^*_{j+1}$), and the coloring $C_j$ of $V(G_j)$ is defined as: $C_j(y)$ is the first coordinate of $C'_j(y)$ (or $C'_j(y)$ itself if it is not a tuple) if $y \in V(G^*_j)$, and the first coordinate of the color of any vertex in $y(G)$, otherwise. 
  (One may observe that $C_j$ is \emph{not} necessarily a proper coloring of $(V(G_j),E(G_j) \cup R(G_j))$, but all the conflict edges lie within a permanently colored subgraph.)
  Every time a vertex $x$ becomes incident to a black edge, we permanently color $x(G)$.
  This defines the sequence of colorings $C'_1, \ldots, C'_n$.

  We show by induction on $i$ that $C'_i$ properly colors $G^*_i$.
  Coloring $C'_1$ is indeed a proper coloring of $G^*_1=K_1$.
  We assume that $C'_i$ is a proper coloring of $G^*_i$, and let $xy$ be any edge in $E(G^*_{i+1})$.
  By the outermost induction on $t$, if $xy$ lies within a $K_{t-1}$-free graph permanently colored, then $C'_{i+1}(x) \neq C'_{i+1}(y)$.
  If instead $x$ and $y$ belong to two distinct vertices of $G_{i+1}$, by the proof of~\cref{thm:triangle-free} and the fact that $C'_i$ is a proper coloring of $G^*_i$, the first coordinate of $C'_{i+1}(x)$ and of $C'_{i+1}(y)$ differ.
  
  In particular $C'_n$ is a proper coloring of $G^*_n = G_n = G$.
  We pad every tuple $C'_n(x)$ of length $t' < t$ with $t-t'$ entries 1.
  From the previous proof, it can be observed that this new coloring of $G$ is still proper, and uses at most $(d+2)^{t-2}$ colors.
\end{proof}

\cref{thm:chibounded} directly implies that, provided $O(1)$-sequences are given, \textsc{Min Coloring} can be $2^{O(\text{OPT})}$-approximated on bounded twin-width graphs, and \lmis can be $O(1)$-approximated on $K_t$-free graphs of bounded twin-width (trivially because an independent set of size $n/O(1)$ can be found).
In~\cref{subsec:mis-barrier,subsec:conclusion-app} we discuss further the approximability of \mis in bounded twin-width graphs.

It would be interesting to determine if bounded twin-width classes are polynomially $\chi$-bounded, that is, satisfies for some constant $c$, $\chi(G) = O(\omega(G)^c)$ for every graph $G$ in the class.
Bounded clique-width or rank-width classes were shown polynomially $\chi$-bounded only recently~\cite{Bonamy19}.
We show however that bounded twin-width classes satisfy the related \emph{strong Erd\H{o}s-Hajnal property}.
We recall that a class $\mathcal C$ of graphs satisfies the \emph{strong Erd\H{o}s-Hajnal property} if there exists an $\varepsilon > 0$ such that every $G \in \mathcal C$ contains two disjoint subsets of vertices $X, Y$, both of size at least $\varepsilon |V(G)|$, with either all edges or no edges between $X$ and~$Y$.
The strong Erd\H{o}s-Hajnal property of a hereditary class implies the existence of a clique or a stable set of polynomial size, that is, the Erd\H{o}s-Hajnal property~\cite{ALON2005310}.

\begin{theorem}\label{theo:EH}
The class of graphs with twin-width at most $d$ satisfies the strong Erd\H{o}s-Hajnal property with $\varepsilon = 1/(d+4)$.
\end{theorem}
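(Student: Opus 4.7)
The plan is to locate, in the partition sequence associated with a $d$-sequence of $G$, the first moment at which some part reaches size $\varepsilon |V(G)|$, where $\varepsilon = 1/(d+4)$, and then to exploit the bounded red degree at that moment to find two disjoint large subsets of $V(G)$ with a complete or empty bipartite graph between them in $G$.

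Concretely, set $n = |V(G)|$ and fix a $d$-sequence $G = G_n, \ldots, G_1 = K_1$; we may assume $n \geq d+5$, since otherwise any two singletons already furnish $X$ and $Y$. Let $i^*$ be the largest index such that some vertex $z \in V(G_{i^*})$ satisfies $|z(G)| \geq \varepsilon n$. Such an index exists because the unique vertex of $G_1$ has $|z(G)| = n$, and by maximality $i^* < n$ (otherwise some singleton would have size at least $\varepsilon n > 1$). Again by maximality, the vertex $z$ is obtained in $G_{i^*}$ by contracting two vertices of $G_{i^*+1}$, each corresponding to a part of size strictly less than $\varepsilon n$, so
\[ \varepsilon n \;\leq\; |z(G)| \;<\; 2\varepsilon n. \]

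Next I would analyze the neighborhood of $z$ in $G_{i^*}$. The red-degree bound gives that $z$ has at most $d$ red neighbors in $G_{i^*}$, each corresponding to a part of size $< \varepsilon n$ by maximality of $i^*$. Every remaining vertex $w \in V(G_{i^*}) \setminus \{z\}$ is joined to $z$ either by a black edge or by no edge at all; let $A$ and $B$ be the sets of these black-neighbors and non-neighbors of $z$, respectively. Summing part sizes over $V(G_{i^*})$ yields
\[ |A(G)| + |B(G)| \;>\; n - 2\varepsilon n - d\varepsilon n \;=\; \bigl(1 - (d+2)\varepsilon\bigr) n \;=\; 2\varepsilon n, \]
so one of $|A(G)|, |B(G)|$ strictly exceeds $\varepsilon n$.

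Set $X := z(G)$ and let $Y$ be whichever of $A(G), B(G)$ has size $> \varepsilon n$; then $|X|, |Y| \geq \varepsilon n$. Because a black edge between two parts in $G_{i^*}$ unfolds in $G$ into a complete bipartite graph between the underlying vertex sets, and a non-edge unfolds into an empty bipartite graph, either every edge or no edge between $X$ and $Y$ is present in $G$. This gives the desired strong Erd\H{o}s-Hajnal property with $\varepsilon = 1/(d+4)$. The argument is elementary; the only real care needed is the bookkeeping of strict versus non-strict inequalities to ensure that both $X$ and $Y$ cross the $\varepsilon n$ threshold.
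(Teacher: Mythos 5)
Your proof is correct and follows essentially the same approach as the paper's: pick the maximum index in the sequence at which a part reaches size $\varepsilon n$, note that it is then between $\varepsilon n$ and $2\varepsilon n$, discount the at most $d$ red-neighbor parts, and take a majority among the remaining parts (each uniformly adjacent or non-adjacent to the chosen part). Your version is slightly more careful about the base case and strict inequalities, but the argument is identical.
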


\begin{proof}
  Let $G$ be an $n$-vertex graph with twin-width at most~$d$.
  Consider in a fixed $d$-sequence $G_n, \ldots, G_1$ the maximum index $i$ such that there is a vertex $z \in V(G_i)$ satisfying $|z(G)| \geqslant n/(d+4)$.
  Since $X := z(G)$ is the union of $u(G)$ and $v(G)$ for some $u, v \in V(G_{i+1})$, its size is at most $2n/(d+4)$.
  Vertex $z$ has at most $d$ red neighbors in $G_i$.
  These neighbors constitute a set $S \subseteq V(G)$ of at most $d \cdot n/(d+4)$ vertices.
  Thus $|V(G) \setminus (z(G) \cup S)| \geqslant n - 2n/(d+4) - dn/(d+4) = 2n/(d+4)$.
  By construction, every vertex in $V(G) \setminus (z(G) \cup S)$ is fully adjacent to $X$ or fully non-adjacent to $X$.
  Let $Y \subseteq V(G) \setminus (z(G) \cup S)$ be the subset of all vertices in the majority regarding these two outcomes.
  Set $Y$ has size at least $n/(d+4)$ vertices and $X, Y$ is therefore an appropriate pair.
\end{proof}

\section{Interval biclique partitions and computing shortest paths}\label{sec:ibp-sp}

In this section, we show how to build on the viewpoint of the previous section to compute shortest paths efficiently.
We first show that bounded twin-width graphs admit favorable edge partitions into linearly many bicliques.

An \emph{\bip} (or \emph{\sbip}, for short) of a graph $G$ on vertex set $[n]$ is a set $\mathcal B$ of bicliques that edge-partitions $G$ where each biclique $(A_i,B_i) \in \mathcal B$ is such that both \emph{sides} $A_i$ and $B_i$ are two (disjoint) discrete intervals of $[n]$ (see~\cref{fig:ibp}).
Observe that the latter condition makes \bips a more restricted form of the mere biclique (edge-)partitions.
However every graph admits an \sbip, since a biclique of $\mathcal B$ can be a single edge of $G$.
Such an edge-partition becomes interesting when the number of bicliques in $\mathcal B$ is small, say, at most linear in the number of vertices.
We will show that bounded twin-width graphs admit linear-sized \sbips.
To give an example, the clique $K_n$ admits $\{([1],[2,n]),([2],[3,n]),([3],[4,n]),\ldots,([n-1],[n])\}$ as an \sbip.
The \sbip $\mathcal B$ gives a $4 \lceil \log n \rceil |\mathcal B|$-bits representation of the graph.

The \emph{ordered union tree} of a $d$-sequence $\mathcal S: G=G_n, \ldots, G_1=K_1$, is a pair $(\mathcal T, \mathcal A)$ where $\mathcal T$ is a rooted binary tree whose leaves are in one-to-one correspondence with $V(G)$, and $\mathcal A$ is an array of length $n-1$ whose $i$-th entry is a pointer to the (distinct) internal node of $\mathcal T$ \emph{representing the $i$-th contraction of $\mathcal S$}, i.e., whose rooted subtree has for leaves all the vertices of $G$ ``contained'' in the contracted vertex.
Our algorithms in~\cite{twin-width1,twin-width2} can output an ordered union tree in the same running time as for computing the $d$-sequence.
The ordered union tree can thus be seen as an alternative way of presenting the $d$-sequence. 

\begin{lemma}\label{lem:seq-to-bir}
  Every $n$-vertex graph of twin-width~$d$ has an \bip $\mathcal B$ of size at most~$(d+1)(n-1)$.
  Furthermore $\mathcal B$ can be computed in time~$O(d n)=O_d(n)$ given the ordered union tree of a~$d$-sequence for~$G$. 
\end{lemma}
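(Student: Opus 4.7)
The plan is to read the $d$-sequence in the direction opposite to the contractions, starting from $G_1 = K_1$ and gradually splitting vertices to recover $G = G_n$. I would order the vertices of $G$ by an in-order traversal of the ordered union tree, so that for every node $x$ of the tree, $x(G)$ is a contiguous interval of $[n]$. Every biclique produced will be of the form $(x(G), y(G))$ for a pair of nodes $x, y$ of the tree with disjoint leaf sets; hence both sides are intervals and the \bip property is automatic.

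At each splitting step in which a vertex $z \in V(G_i)$ is split into $u, v \in V(G_{i+1})$, I would add to $\mathcal B$ one biclique per \emph{newly black} edge appearing in $G_{i+1}$ and incident to $\{u,v\}$. The candidates are: (i) the edge $uv$ itself, if it is black in $G_{i+1}$, in which case we add $(u(G), v(G))$; and (ii) the edges $uw$ and $vw$ for each $w \in V(G_{i+1}) \setminus \{u,v\}$ such that $wz$ is red in $G_i$, in which case we add $(u(G), w(G))$ if $wu$ is black, and $(v(G), w(G))$ if $wv$ is black. An edge $wu$ inherited from an already-black $wz$ is \emph{not} counted, because the corresponding biclique $(z(G), w(G))$ was already added earlier when $wz$ first turned black. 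Moreover, when $wz$ is red in $G_i$, the edges $wu$ and $wv$ cannot both be black in $G_{i+1}$, for otherwise $wz$ would itself be black; so at most one new biclique stems from each red neighbor of $z$. Combined with the possible contribution of $uv$ itself, this gives at most $d + 1$ bicliques per splitting step, and therefore $|\mathcal B| \leq (d+1)(n-1)$ overall.

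To check that $\mathcal B$ partitions $E(G)$, I would fix an edge $ab \in E(G)$ and consider the lowest common ancestor $z$ of $a$ and $b$ in the tree, with children $z_1, z_1'$ satisfying $a \in z_1(G)$ and $b \in z_1'(G)$. Following the root-to-leaf paths from $z_1$ to $a$ and from $z_1'$ to $b$, the pair of tree nodes currently representing $a$ and $b$ in the trigraph evolves from $(z_1, z_1')$ to $(a, b)$ by replacing one node with its appropriate child at each step. Since $(a,b)$ is a black edge in $G_n$ while $(z_1,z_1')$ is the first pair along the evolution, there is a unique first split at which the current pair first becomes black, and exactly one biclique is then added covering $ab$. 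Conversely, any biclique $(x(G), y(G))$ added by the algorithm can only cover edges $ab$ whose associated $(z_1, z_1')$ lies below $(x, y)$ in this evolution, so the partition is disjoint.

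Finally, maintaining the trigraph incrementally along the backward sequence requires, at each split, only inspecting the at most $d$ red edges incident to $z$ and the new edge $uv$, plus $O(d)$ adjacency updates; this yields the claimed $O(dn)$ running time. I do not foresee a deep obstacle, only the bookkeeping needed to show that emitting a biclique precisely when a black edge first appears produces an actual edge-partition—the fine point being that an edge $wu$ or $wv$ inherited from a black $wz$ must be explicitly excluded to avoid double counting.
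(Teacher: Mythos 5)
Your proposal is correct and follows essentially the same route as the paper: order the leaves of the union tree left to right so that every node's leaf set is an interval, replay the sequence backwards, and emit one biclique per black edge that appears for the first time (i.e., not inherited from an already-black edge $zy$), giving at most $d+1$ new bicliques per split. Your LCA/evolution argument for why each edge is covered exactly once is just a more explicit phrasing of the paper's observation that the ``first appearance'' convention prevents double counting.
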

\begin{proof}
  We relabel the nodes of the tree starting from the leaves.
  From left to right, their label now describes the integers from 1 to $n$ (see~\cref{fig:union-tree}).
  An internal node gets label $[i,j]$ if the leaves of its subtree precisely form the interval $[i,j]$. 
  This step can be done in $O(n)$-time.

  Now we read the $d$-sequence backwards, starting from the end $G_1=K_1$, and tracking black edges appearing for the first time.
  Let $u, v \in V(G_{i+1})$ be obtained by splitting $z \in V(G_i)$.
  Formally we say that a \emph{black edge $xy \in E(G_{i+1})$ appears for the first time} in $G_{i+1}$, if $xy$ is not a black edge of $G_i$ (this implies that $\{x,y\} \cap \{u,v\} \neq \emptyset$) and $xy$ is not of the form $uy$ or $vy$ with $zy \in E(G_i)$.
  Intuitively, not only the black edge is new, but it did not originate from a black edge $zy \in E(G_i)$.
  Note that the latter automatically creates two black edges $uy, vy \in E(G_{i+1})$, but the information carried by these edges is contained in the biclique $(y(G),z(G))$ already detected. 
  
  At each of the $n-1$ steps, at most $d+1$ black edges can appear for the first time: possibly one between the two vertices $u, v$, and at most one between $\{u,v\}$ and every red neighbor of $z$ in $G_i$.
  We append the corresponding bicliques to $\mathcal B$.
  This takes overall time $O(d n)$, and shows that $|\mathcal B| \leqslant (d+1)(n-1)$.
  By the previous relabeling, the two sides of the bicliques are discrete intervals.
  By the final observation in the previous paragraph, the bicliques of $\mathcal B$ cover all the edges of $G$.
  By the definition of a ``black edge appearing for the first time'', no edge is covered twice, so $\mathcal B$ is indeed a biclique partition of $E(G)$. 
\end{proof}

\begin{figure}
  \centering
  \begin{tikzpicture}
    \def\s{1}
    \def\r{0.16}
     \foreach \i/\j/\k [count = \c from 1] in {4/1/23,4/2/21,4/3/15,4/4/14,5/1/24,5/2/22,5/3/16,5/4/19, 0/1/8,0/2.5/6,0/4/2,-1/1/9,-1/2/5,-1/3/4,-1/4/1, 1/0/26,2/0/27,3/0/30,0/0/25,1/-1/29,2/-1/28,3/-1/32,4/0/31, 1/2/7,2/1/13,2/2/12,2/3/11,3/2/20,3/3/18,1/4/3,2/4/10,3/4/17}{
      \node[draw,circle,minimum width=4.3mm,inner sep=0pt] (\c) at (\s * \i,\s * \j) {\k} ;
     }
     \foreach \i/\j in {{19,23,22,20}/0.4,{32,8,5,1,28}/0.4,{15,30,24,9,12}/0.4,{31,27}/0.4,{26,25}/0.4,
       {2,6,5,1}/0.32,{32,8,7,29}/0.32,{14,24,9,13}/0.32,{15,30}/0.32,{16,17,21,20}/0.32,{18,23}/0.32,
       {3,4,7}/0.24,{1,6}/0.24,{10,24,9}/0.24,{17,21}/0.24}{
        \draw \convexpath{\i}{\j cm} ;
     }
     \foreach \i/\j in {{2,2.4}/{2,2.6},{2,0.4}/{3.18,1.28},{0.82,1.5}/{1.6,1.5},{2.4,1.5}/{3.68,1.5},{0.78,0.22}/{0.22,0.78},4/7,18/23,18/22,11/30,32/29,16/20,25/26,9/24,{-0.78,2}/{-0.24,2},{-0.85,1.15}/{-0.73,1.27}, {0,2.85}/{0,3.68},{3,2.22}/{3,2.68}, {4.5,2.32}/{4.5,2.76}, {4.15,1.85}/{4.33,1.67}, {4.85,3.85}/{4.67,3.67}, {2.23,-0.5}/{2.75,-0.2}, {1.405,3.5}/{1.595,3.5}}{
       \draw[very thick,blue] (\i) -- (\j) ;
     }
  \end{tikzpicture}
  \caption{Example of an \bip following a contraction sequence.
    The bicliques are represented in bold blue.
    See~\cref{fig:union-tree} for a part of the corresponding union tree.}
  \label{fig:ibp}
\end{figure}
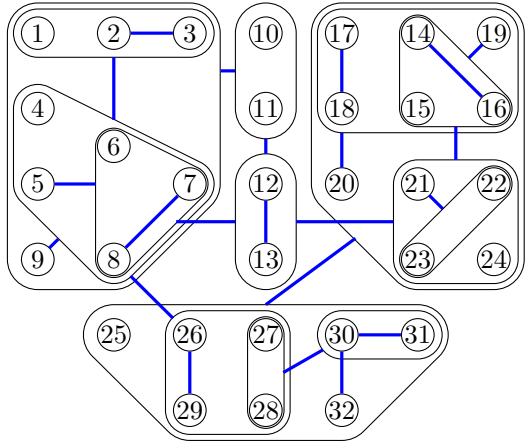
\begin{figure}
  \centering
  \begin{tikzpicture}
    \def\s{1}
    \foreach \i/\j/\k [count = \c] in {0/4/{[1,9]},-2/3/{[1,3]},2/3/{[4,9]},-3/2/{[1]},-1/2/{[2,3]}, -1.5/1/{[2]}, -0.5/1/{[3]}, 1/2/{[4,8]},3/2/{[9]}, 0.5/1/{[4,5]}, 2.5/1/{[6,8]}, 0/0/{[4]}, 1/0/{[5]}, 2/0/{[6]}, 3/0/{[7,8]}, 2.5/-1/{[7]}, 3.5/-1/{[8]}}{
      \node (\c) at (\i * \s, \j * \s) {$\k$} ;
    }
    \foreach \i/\j in {1/2,1/3,2/4,2/5,5/6,5/7,3/8,3/9,8/10,8/11,10/12,10/13,11/14,11/15,15/16,15/17}{
      \draw[thin] (\i) -- (\j) ;
    }
    \foreach \i/\j in {2/8,8/9,6/7,16/17,11/13}{
      \draw[very thick,blue] (\i) -- (\j) ;
    }
  \end{tikzpicture}
  \caption{The subtree $[1,9]$ of the union tree corresponding to the graph and sequence of \cref{fig:ibp}.
    The bicliques of the \sbip are represented in bold blue.
    The order of the splits does not appear.}
  \label{fig:union-tree}
\end{figure}
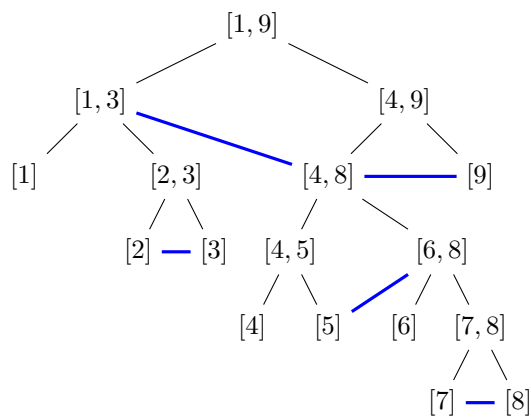

An interesting additional property of the computed \sbip $\mathcal B$, in the case of bounded twin-width graphs, is that the whole set of biclique sides (partite sets) defines a laminar family.
Indeed, by definition of a contraction sequence, there cannot be two overlapping sides.
Our algorithm will not use this additional property.

For the next algorithm, the \bip $\mathcal B$ is stored in a look-up table $\mathcal Z$.
One accesses in constant time, with $\mathcal Z[A]$, the head of the list of sides $B$ such that $(A,B)$ or $(B,A)$ is in $\mathcal B$.
The table $\mathcal Z$ can be initialized in time $O(|\mathcal B|)$, given the list of bicliques $\mathcal B$.

\begin{theorem}\label{thm:bir-to-sp}
  Given an \sbip $\mathcal B$ of an $n$-vertex graph $G$ and a vertex $s \in V(G)$, \sssp can be solved in time $O((n+|\mathcal B|)\log n)$.
\end{theorem}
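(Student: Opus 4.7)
The plan is to simulate breadth-first search from $s$, replacing ``scan all edges out of the popped vertex'' by ``traverse each not-yet-traversed biclique side containing that vertex''. The guiding observation is that each biclique $(A,B) \in \mathcal B$ need only be processed twice over the course of BFS: once when the first vertex of $A$ is popped (assigning tentative distance $t_A + 1$ to all still-unvisited vertices of $B$), and symmetrically once when the first vertex of $B$ is popped. If both events can be detected and serviced on demand, the total work is bounded by the number of such traversals (at most $2|\mathcal B|$) plus the per-vertex marking cost.

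Alongside the standard distance array and FIFO queue $Q$, I would maintain two data structures: a balanced BST $D$ holding the unvisited vertices in index order, supporting $O(\log n)$ insertion/deletion and $O(\log n + m)$ enumeration of the elements in an interval; and an interval-stabbing structure $I$ (e.g., a segment tree of intervals) initialized with all $2|\mathcal B|$ biclique sides, each entry carrying a pointer to its mate side. $I$ supports $O(\log n)$ insertion/deletion plus $O(\log n + k)$ stabbing queries that return every stored interval containing a query point. Upon popping $v$ at distance $d$, I stab $I$ with $v$; each returned interval $A$ is by construction a still-active side containing $v$, so $v$ is the first popped vertex of $A$. I then delete $A$ from $I$, use $D$ to enumerate the unvisited vertices in the mate side $B$, and for each such $u$ I set $\mathrm{dist}[u] = d+1$, push $u$ onto $Q$, and delete $u$ from $D$.

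Correctness follows the standard BFS invariant: every edge $vu \in E(G)$ lies in a unique biclique $(A,B) \in \mathcal B$ with, say, $v \in A$ and $u \in B$, so by the time $v$ is popped at its final distance $d$, side $A$ has been traversed (either earlier by another vertex of $A$, or by $v$ itself now), guaranteeing $\mathrm{dist}[u] \leq d+1$; taking the minimum over bicliques matches the BFS recurrence $\mathrm{dist}[u] = \min_{v \in N(u)} \mathrm{dist}[v] + 1$. The main obstacle is the running time, since a naive implementation scanning all bicliques containing $v$ at each pop would incur $\sum_{(A,B)}(|A|+|B|)$ work, potentially $\Theta(n|\mathcal B|)$, far above the target. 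The stabbing structure $I$ sidesteps this: the total output across all $n$ stabbing queries telescopes to the number of activations, at most $2|\mathcal B|$. Accounting then yields $O(n \log n + |\mathcal B|)$ for stabs, $O(|\mathcal B|\log n)$ for deletions from $I$, $O(n \log n)$ for marking vertices in $D$, and $O(|\mathcal B|\log n + n)$ for enumerations in $D$, summing to $O((n+|\mathcal B|)\log n)$ as claimed.
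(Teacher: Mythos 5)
Your proof follows essentially the same approach as the paper's: replace edge-scans in BFS by biclique-side traversals, maintain a sorted structure of unvisited vertices together with a dynamic interval-stabbing structure over the active biclique sides (the paper uses a pair of red-black trees, you propose a balanced BST plus a segment-tree-based interval structure, with the same $O(\log n)$ update and $O(\log n + k)$ query bounds), delete a side as soon as one of its vertices is popped, and justify this by noting that any later vertex of that side has distance at least as large, so the mate side's vertices already received a distance that is at least as good. The correctness and accounting arguments match the paper's, and the running time is the same.
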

\begin{proof}
  Essentially we will perform a breadth-first search (BFS) from vertex $s$, following the bicliques instead of single edges.
  
  To start with, we need a quick access to all the bicliques of $\mathcal B$ containing a given vertex $u \in V(G)$.
  As the sides of the bicliques are intervals, we in fact want to solve the interval stabbing problem: Preprocess a set $\mathcal I$ of intervals to answer queries of the form ``list all intervals of $\mathcal I$ containing $p$''.
  For instance, if we query vertex 21 of~\cref{fig:ibp}, we want the fast output of the list of intervals $[14,24],[21,24],[21]$.
  That way we can then get the neighborhood of 21 in the compact form $[12,13],[14,16],[22,23],[25,32]$. 
  Since our intervals range over~$[n]$, there are optimal \emph{static} data structures for that problem, with preprocessing time $O(n)$ and query time $O(q)$ where $q$ is the number of output intervals and $n$ is the total number of intervals (see for instance~\cite{Schmidt09} and \cite{Chazelle86}).
  To our knowledge, there is no dynamic version of these data structures that would further support deletions in time $o(\log n)$, let alone in constant amortized time.
  However it will be crucial in our algorithm to remove intervals.
  We thus accept to pay an extra logarithmic factor, and resolve to the simpler use of self-balancing binary search trees such as red-black trees~\cite{Cormen}.
  Red-black trees take $O(n \log n)$ to build (by $n$ successive insertions in time $O(\log n)$), and support search queries in $O(\log n + q)$ and deletions in $O(\log n)$.
  Here the search queries are of the form: ``list all nodes (intervals) containing a query element or intersecting a query interval''.

  We maintain two red-black trees.
  The first, $\mathcal T_{\mathcal B}$, is initialized to the $2 |\mathcal B|$ nodes of $\{A, B$  $|$ $(A,B) \in \mathcal B\}$, that is the sides of the bicliques of the \sbip.
  These intervals are sorted by lexicographic order on their pairs of endpoints.
  This tree will maintain which bicliques are still untraversed in a given direction (we will distinguish the two orientations).
  The second, $\mathcal T_U$, initially comprises the $n$ vertices in $V(G) = [n]$, sorted in the usual order.
  (The integers can be seen as singleton intervals to unify $\mathcal T_{\mathcal B}$ and $\mathcal T_U$ into the same kind of objects.)
  It will maintain which vertex of $G$ are still unexplored.

  The primitive $\texttt{Bel}(u,\mathcal T_{\mathcal B})$ (as in \emph{bel}ongs) reports all the biclique sides $S \in \mathcal T_{\mathcal B}$ such that $u \in S$, while $\texttt{Adj}(u,\mathcal T_{\mathcal B})$ (as in \emph{adj}acency) reports the set of biclique sides $B \in \mathcal T_{\mathcal B}$ such that there is a biclique $(A,B) \in \mathcal B$ with $u \in A$.
  Finally $\texttt{Int}(\mathcal T_U,[i,j])$ (as in \emph{int}ersection) lists all the elements of $\mathcal T_U$ that are in $[i,j]$, and we denote by $\text{delete}(u,\mathcal T)$ the deletion of $u$ from the red-black tree $\mathcal T$.

  We can now write our algorithm \sssp from a classic BFS, by replacing the access to edges of the current vertex $u$ by $\texttt{Adj}(u,\mathcal T_{\mathcal B})$, and the vertices to enqueue (and explore later) by $\bigcup_{[i,j] \in \texttt{Adj}(u,\mathcal T_{\mathcal B})} \texttt{Int}(U,[i,j])$.
  More precisely, we initialize a queue $Q$ to $\{s\}$, a set of unexplored vertices $U$ to $V(G) \setminus \{s\}$ as a red-black tree $\mathcal T_u$, a set of unaccessed biclique sides of $\mathcal B$ as another red-black tree $\mathcal T_{\mathcal B}$, a shortest-path tree parent relation $p$ by $p(s) := s$, and a distance table $d$ to the source $s$ by $d(s) := 0$.
  We remove $s$ from $\mathcal T_U$.
  As long as $Q$ is non-empty, we dequeue $u$ from it, and set $\mathcal S_u := \texttt{Bel}(u,\mathcal T_{\mathcal B})$, and $\mathcal N_u := \texttt{Adj}(u,\mathcal T_{\mathcal B})$.
  We remove all the biclique sides of $\mathcal S_u$ from $\mathcal T_{\mathcal B}$.
  We set $N_u := \bigcup_{[i,j] \in \mathcal N_u} \texttt{Int}(\mathcal T_U,[i,j])$.
  For every $v \in N_u$, we set $p(v)$ to $u$, $d(v)$ to $d(u)+1$, we enqueue $v$ in $Q$, and remove it from $\mathcal T_U$.
  We finally return $p$ and $d$ (see Algorithm~\ref{alg:sssp} for the pseudo-code).

  \begin{algorithm}
  \DontPrintSemicolon
  \SetKwInOut{Input}{Input}
  \SetKwInOut{Output}{Output}
  \Input{~~A graph $G$, a source $s \in V(G)$, and an \bip $\mathcal B$ of $G$.}
  \Output{~~A shortest-path tree $p$ rooted at $s$, with a distance table $d$ to $s$.}
  $\mathcal T_U \leftarrow V(G)$\;
  $\mathcal T_{\mathcal B} \leftarrow \{A, B$ $|$ $(A,B) \in \mathcal B\}$\;
  $Q \leftarrow \{s\}$\;
  $p(s) \leftarrow s$\;
  $d(s) \leftarrow 0$\;
  $\text{delete}(s,\mathcal T_U)$\;
 \While{$Q \neq \emptyset$}{
   $u \leftarrow \text{dequeue}(Q)$\;
   $\mathcal S_u \leftarrow \texttt{Bel}(u,\mathcal T_{\mathcal B})$\;
   $\mathcal N_u \leftarrow \texttt{Adj}(u,\mathcal T_{\mathcal B})$\;
   \For{$S \in \mathcal S_u$}{
     $\text{delete}(S,\mathcal T_{\mathcal B})$\;
   }
   $N_u \leftarrow \bigcup_{[i,j] \in \mathcal N_u} \texttt{Int}(\mathcal T_U,[i,j])$\;
   \For{$v \in N_u$}{
     $p(v) \leftarrow u$\;
     $d(v) \leftarrow d(u)+1$\;
     $\text{enqueue}(Q,v)$\;
     $\text{delete}(v,\mathcal T_U)$\;
   }
 }
 \Return{$p, d$}\;
 \caption{\algsssp}
 \label{alg:sssp}
  \end{algorithm}

  \medskip

  \textbf{Correctness.}
  Our algorithm is a BFS in which some edges that are not traversed may still disappear in one direction (line 12).
  We only need to argue that these arcs cannot be part of a shortest-path tree rooted at $s$.
  Say the current vertex is $u$, and the set of unexplored vertices is $U$ (i.e., the nodes of $\mathcal T_U$).
  We consider the set $\mathcal S_u := \texttt{Bel}(u,T_{\mathcal B})$ of biclique sides still in $\mathcal T_{\mathcal B}$ and containing $u$.
  All these intervals are then removed from $\mathcal T_{\mathcal B}$.
  Let $u' \neq u$ be a vertex in a side $S \in \mathcal S_u$, and let $S'$ be another side such that $(S,S') \in \mathcal B$.
  The deletion of $\mathcal S_u$ implies that an arc from $S$ to $S'$ can no longer be taken.
  We claim that it is safe to remove the arcs from $u'$ (or more generally from $S$) to $S'$.
  Indeed if $u'$ is visited after $u$, then $d(u) \leqslant d(u')$.
  Thus all the vertices in $N_u \supseteq S' \cap U$ have already had their distance set to $d(u)+1$ ($\leqslant d(u')+1$) and their parent set to $u$. 

  Note however that the biclique $(S',S)$ may still be traversed (in the other direction, from $S'$ to $S$).
  These arcs can very well be on a shortest-path tree.
  That is why we are removing biclique sides and not bicliques.
  
  \medskip

  \textbf{Running time.}
  The initialization of $\mathcal T_U$ and $\mathcal T_{\mathcal B}$ takes time $O(n \log n)$ and $O(|\mathcal B| \log |\mathcal B|) = O(|\mathcal B| \log n)$, respectively (observe that $|B| \leqslant n^2$, thus $O(\log |\mathcal B|) = O(\log n)$).
  Each call $\texttt{Bel}(u,T_{\mathcal B})$ reporting $q$ sides takes time $O(\log n + q)$.
  It is immediately followed by the deletion of these sides from $\mathcal T_{\mathcal B}$, in time $O(q \log n)$.
  Therefore in the entire while loop, these operations take overall time $O(|\mathcal B| \log n)$.
  Observe that $\texttt{Adj}(u,T_{\mathcal B})$ is built from $\texttt{Bel}(u,T_{\mathcal B})$ by simple access to the look-up table $\mathcal Z$ encoding $\mathcal B$.
  This takes time $O(|\texttt{Adj}(u,T_{\mathcal B})|)$.
  Since every biclique can be traversed at most twice (once in each direction), overall the calls $\texttt{Adj}(u,T_{\mathcal B})$ take time $O(|\mathcal B|)$.
  Each call $\texttt{Int}(\mathcal T_U,[i,j])$ reporting $q$ vertices takes time $O(\log n + q)$.
  This is followed by removing these vertices from $\mathcal T_U$ in time $O(q \log n)$.
  Hence this part takes overall time $O(n \log n)$.
  The rest of the instructions take constant time.
  Therefore the running time of \algsssp is $O((n+|\mathcal B|)\log n)$.
\end{proof}

As a direct corollary of \cref{lem:seq-to-bir,thm:bir-to-sp}, we get the following two theorems.
\begin{theorem}\label{thm:sssp}
  Let $\mathcal C$ be a class of bounded twin-width on which there is an $O_d(n \log n)$-time algorithm computing $d$-sequences for $n$-vertex graphs.
  Then \sssp can be solved in $\mathcal C$ in time $O_d(n \log n)$.
\end{theorem}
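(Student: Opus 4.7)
The plan is to chain together the two preceding results as a three-stage pipeline. First, I would invoke the hypothesis of the theorem to compute, in time $O_d(n \log n)$, a $d$-sequence of the input graph $G \in \mathcal{C}$. As discussed at the start of \cref{sec:ibp-sp} and in the preliminaries, the constructive algorithms producing $d$-sequences for the classes listed in \cref{thm:bd-tww} in fact deliver the ordered union tree $(\mathcal{T}, \mathcal{A})$ as a byproduct, without any asymptotic overhead. Thus the same $O_d(n \log n)$ running time suffices to produce the ordered union tree, which is the required input to \cref{lem:seq-to-bir}.

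Second, I would feed the ordered union tree into \cref{lem:seq-to-bir} to obtain an \bip $\mathcal{B}$ of $G$ with $|\mathcal{B}| \leqslant (d+1)(n-1) = O_d(n)$, in time $O(dn) = O_d(n)$. This step is purely an application of the already-proven lemma, with no additional work.

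Third, I would run the algorithm of \cref{thm:bir-to-sp} on the input $(G, s, \mathcal{B})$, which solves \sssp in time $O((n + |\mathcal{B}|) \log n)$. Substituting $|\mathcal{B}| = O_d(n)$ gives running time $O_d(n \log n)$.

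Summing the three stages yields an overall running time of $O_d(n \log n) + O_d(n) + O_d(n \log n) = O_d(n \log n)$, as claimed. There is no real obstacle: the only subtlety worth making explicit is the implicit assumption that the hypothesized algorithm for computing the $d$-sequence produces (or allows reconstruction of) the ordered union tree within the same time budget, which is indeed the case for every construction in \cref{thm:bd-tww} since the union tree is a natural way of encoding the contraction sequence.
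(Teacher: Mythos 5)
Your proposal is correct and matches the paper exactly: the paper presents \cref{thm:sssp} as a direct corollary of \cref{lem:seq-to-bir,thm:bir-to-sp}, chained precisely as you describe. The subtlety you flag about the ordered union tree is also acknowledged in the paper, which notes that its sequence-computing algorithms output the union tree within the same running time.
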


\begin{theorem}\label{thm:apsp}
  Let $\mathcal C$ be a class of bounded twin-width on which there is an $O_d(n^2 \log n)$-time algorithm computing $d$-sequences for $n$-vertex graphs.
  Then \apsp can be solved in $\mathcal C$ in time~$O_d(n^2 \log n)$.
\end{theorem}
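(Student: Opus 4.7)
\textbf{Proof plan for \cref{thm:apsp}.} The plan is to reduce \apsp to $n$ independent calls of \sssp, amortizing the preprocessing cost across these calls. First, given an $n$-vertex graph $G \in \mathcal{C}$, we invoke the assumed $O_d(n^2 \log n)$-time algorithm to produce a $d$-sequence for $G$ (and, as discussed in~\cref{sec:prelim}, its associated ordered union tree). Then we apply~\cref{lem:seq-to-bir} to convert this ordered union tree into an \bip $\mathcal{B}$ of size at most $(d+1)(n-1) = O_d(n)$, in time $O_d(n)$. This preprocessing is performed once and costs $O_d(n^2 \log n) + O_d(n) = O_d(n^2 \log n)$ in total.

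Next, for each vertex $s \in V(G)$, we run the algorithm of \cref{thm:bir-to-sp} with source $s$ and the \bip $\mathcal{B}$ as input. Each such call returns the distances from $s$ to all other vertices (together with a shortest-path tree) in time $O((n + |\mathcal{B}|) \log n) = O_d(n \log n)$, since $|\mathcal{B}| = O_d(n)$. Summing over the $n$ choices of source, the total cost of the SSSP phase is $n \cdot O_d(n \log n) = O_d(n^2 \log n)$.

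One subtlety is that the algorithm of \cref{thm:bir-to-sp} destructively modifies the red-black trees $\mathcal{T}_U$ and $\mathcal{T}_{\mathcal{B}}$ during its execution, so we cannot simply share these data structures across sources. Instead, we rebuild $\mathcal{T}_U$ and $\mathcal{T}_{\mathcal{B}}$ from $V(G)$ and $\mathcal{B}$ at the start of each call; this fresh initialization takes $O((n+|\mathcal{B}|) \log n) = O_d(n \log n)$ time per source, which is absorbed into the per-call bound. The \bip $\mathcal{B}$ itself is computed only once. Combining the preprocessing ($O_d(n^2 \log n)$) with the $n$ SSSP calls ($O_d(n^2 \log n)$), the overall running time is $O_d(n^2 \log n)$, as claimed.

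The argument is essentially routine given the earlier machinery; there is no genuine obstacle beyond observing that $\mathcal{B}$ depends only on $G$ (not on the source) and so its cost does not multiply by~$n$. The only minor care required is the per-source rebuilding of the auxiliary red-black trees, which matches the SSSP bound and therefore does not worsen the asymptotics.
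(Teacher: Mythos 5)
Your proposal is correct and matches the paper's argument: the paper states \cref{thm:apsp} as a direct corollary of \cref{lem:seq-to-bir} and \cref{thm:bir-to-sp}, obtained exactly as you describe by computing the \sbip once and running the SSSP routine from each of the $n$ sources. Your extra remark about rebuilding the auxiliary red-black trees per source is a sensible (and harmless) point of care that the paper leaves implicit.
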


Note that for all the classes shown to have bounded twin-width in the first two papers of the series~\cite{twin-width1,twin-width2}, an $O_d(n^2)$-time algorithm computes a $d$-sequence (where $d$ does not depend on $n$).
For some sparse classes ($K_t$-minor free graphs), or some dense classes sparsely presented (unit interval graphs, posets of bounded antichain), it is even possible to obtain the contraction sequence in time $O_d(n \log n)$.
For the latter kind, it yields $O(n \log n)$-time algorithms (that is, sublinear in the number of edges) computing shortest-path trees from a given source.
However in these individual classes, much simpler arguments would give $O(n)$-time algorithms.
Thus the strength of \cref{thm:sssp,thm:apsp} lies more in unifying and generalizing graph classes where $\Tilde{O}(n)$ and $\Tilde{O}(n^2)$ are achievable for \ssssp and \sapsp, and in the simplicity of the algorithm (a slightly modified BFS).

One could wonder if the diameter of a graph given with an $O(1)$-sequence can be computed significantly faster than in $O(n^2 \log n)$, by simply calling \sapsp and reporting the longest distance.
We observe that no truly subquadratic algorithm is possible, unless the Strong Exponential Time Hypothesis\footnote{The assumption that, for every $\varepsilon > 0$, \textsc{SAT} cannot be solved in time $(2-\varepsilon)^n$ by a classical algorithm.} (SETH) fails. 
\begin{theorem}\label{thm:diameter}
For every $\varepsilon, \varepsilon' > 0$, \textsc{Diameter} on bounded twin-width graphs cannot be computed, or $3/2 - \varepsilon'$-approximated, in time $n^{2-\varepsilon}$, unless the SETH fails, even if an $O(1)$-sequence of the input graph is given. 
\end{theorem}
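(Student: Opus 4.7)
The plan is a fine-grained reduction from the Orthogonal Vectors (OV) problem, known under SETH to require $n^{2-o(1)}$ time on inputs of $n$ Boolean vectors of dimension $d = c \log n$. I would first invoke the Roditty--Vassilevska Williams reduction, producing from an OV instance a graph $H$ on $O(n)$ vertices such that $\operatorname{diam}(H)=2$ if there is no orthogonal pair and $\operatorname{diam}(H)\geq 3$ otherwise, giving the desired $3/2$ ratio.

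To transport this hardness into bounded twin-width, I would replace each edge of $H$ by a path of length $\ell := 2\lceil \log |V(H)| \rceil$, producing a subdivided graph $G$. By the result from \cite{twin-width2} cited as one item of \cref{thm:bd-tww}, this $\Omega(\log n)$-subdivision has twin-width $O(1)$ and admits a polynomial-time-computable $O(1)$-sequence. Since $|V(G)| = \widetilde{O}(n)$, any Diameter algorithm on $G$ running in time $|V(G)|^{2-\varepsilon}$ would translate to an OV algorithm in time $n^{2-\varepsilon'}$, settling the exact-computation lower bound.

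For the $(3/2-\varepsilon')$-approximation part, the crux is that the diameter gap must survive subdivision, which rescales distances by $\ell+1$ but also creates up to an additive $\ell$ slack via subdivision vertices. I would address this by designing a variant of the R--VW construction in which $\operatorname{diam}(H)$ is achieved by a distinguished apex pair $(s_A,s_B)$ of eccentricity $\operatorname{diam}(H)$, while every other vertex is at distance one from $s_A$ or $s_B$; then every subdivision vertex in $G$ sits on an edge with one endpoint near an apex, so its eccentricity is dominated by the apex-to-apex subdivided distance. Consequently $\operatorname{diam}(G)=2(\ell+1)+O(1)$ in the no-OV case and $\operatorname{diam}(G)\geq 3(\ell+1)-O(1)$ in the OV case, preserving a ratio arbitrarily close to $3/2$.

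The main obstacle is precisely this gap-preservation step: the naive bound $\operatorname{diam}(G)\in [D(\ell+1),(D+1)(\ell+1)]$ with $D=\operatorname{diam}(H)$ would render the cases $D=2$ and $D=3$ indistinguishable, so one must carefully control subdivision-to-subdivision distances. This demands a detailed case analysis of distances in the apex-anchored variant of R--VW, and one must verify that the resulting graph (after subdivision and possibly attaching a constant-size universal hub) still has bounded twin-width. The latter is clear, as adding a constant number of universal vertices to a bounded twin-width graph preserves bounded twin-width (contract those universal vertices first in the sequence; they form twins among themselves and introduce no red edges).
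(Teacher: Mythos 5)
Your proposal identifies the right central obstacle — that the naive additive slack introduced by subdivision merges the $D=2$ and $D=3$ cases — but the workaround you sketch does not resolve it, and the source of hardness you chose is precisely what makes it irresolvable.

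When you start from the Roditty–Vassilevska Williams OV reduction, $H$ has diameter $2$ or $3$, which is \emph{constant}. Subdividing each edge into a path of length $\ell$ scales distances between original vertices by $\ell$, but two subdivision vertices sitting near the midpoints of two edges $u_1v_1$ and $u_2v_2$ whose four cross-distances in $H$ all equal $D$ are at distance $\ell D + \Theta(\ell)$ in $G$. No apex anchoring removes this: with $d_H(s_A,s_B)=2$ and all vertices at distance $1$ from $s_A$ or $s_B$, two midpoint subdivision vertices can still be $\Theta(\ell)$ beyond $\ell D$. Since $\ell$ is a constant fraction of $\ell D$ when $D\in\{2,3\}$, the no-OV range $[2\ell,3\ell]$ and the yes-OV range $[3\ell,\cdot]$ overlap, which already breaks the \emph{exact}-computation reduction, and a fortiori the $3/2-\varepsilon'$ gap (even the paper's pending-path fix, which pins the diameter to exactly $\ell(D+2)$, would only give a $5/4$ gap on $D\in\{2,3\}$). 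Adding a ``universal hub'' would in fact collapse the diameter to $2$ and destroy the reduction entirely.

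The paper sidesteps all of this by taking a different hard source: the Evald–Dahlgaard SETH lower bound for \textsc{Diameter} on \emph{bounded-degree} graphs, whose hard instances necessarily have diameter $D=\Theta(\log n')$. There the $\Theta(\ell)$ additive slack is a vanishing fraction of $\ell D$, so the $\frac{3}{2}$ gap survives asymptotically (the ratio after the reduction is $\frac{D'+2}{D+2}\to 3/2$). Concretely, the paper subdivides each edge $\ell-1$ times and attaches a pending path of length $\ell$ to each original vertex, which makes $\operatorname{diam}(G)=\ell(D+2)$ exactly, avoiding any case analysis over subdivision vertices; the bounded degree of $H$ also guarantees the subdivided graph has size $O(n'\log n')$ and an $O(1)$-sequence computable in linear time. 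To repair your proof you would need to replace R–VW by a hard instance with $\omega(1)$ diameter (or prove an OV-based gap amplification to large diameters on bounded degree), and replace the universal hub by the pending-path gadget.
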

\begin{proof}
  Such an SETH lower bound exists on graphs of bounded degree~(see~\cite{Evald16}).
  We subdivide $\ell - 1$ times each edge of a hard instance $H$, with degree bounded by $\Delta$ and $n' > 1$ vertices, where $\ell := \lceil \log n' \rceil$.
  We attach a pending path on $\ell$ edges to the $n'$ original vertices of $H$.
  This defines a graph $G$ with $n \leqslant \Delta/2 \cdot (\ell-1) n' + \ell n' = O(n' \log n')$ vertices.
  Thus $n = O(n'^{1+\frac{\varepsilon}{2}})$.
  We observe that $\diam(G) = \ell + \ell~\cdot~\diam(H) + \ell = (\ell+2) \diam(H)$.
  Besides we show in~\cite{twin-width2} that the $\log n'$-subdivision of $n'$-vertex graphs have bounded twin-width.
  Furthermore an $O(1)$-sequence can be computed in $O(n)$-time if the initial graph has bounded degree.
  An $n^{2-\varepsilon}$-time algorithm computing the diameter of such a graph $G$, would give an $O((n'^{1+\frac{\varepsilon}{2}})^{2-\varepsilon}) = O(n'^{2-\frac{\varepsilon^2}{2}})$.
  Such a subquadratic algorithm is ruled out, even to obtain a $3/2-\varepsilon'$-approximation of the diameter, unless the SETH fails.
  Finally one may observe that the reduction preserves the inapproximability gap.
\end{proof}

A related SETH lower bound is obtained by Coudert et al.~\cite{Coudert19}, who show that \textsc{Diameter} cannot be solved in time $2^{o(\text{cw})}n^{2-\varepsilon}$ on $n$-vertex graphs with clique-width $\text{cw}$.
The lower bound of \cref{thm:diameter} is quantitatively stronger (albeit in an admittedly larger graph class) since it rules out any algorithm solving \textsc{Diameter} in time $f(d)n^{2-\varepsilon}$ for any function $f$, on graphs of twin-width at most~$d$.
Let us recall that when the diameter is guaranteed constant, \textsc{Diameter} can be expressed as a first-order formula.
Thus we can compute the exact diameter in $O(n)$-time provided the contraction sequence of the input graph~\cite{twin-width1}.

\section{Approximation Algorithms}\label{sec:approx-alg}

Provided $O(1)$-sequences of the inputs, we give constant-approximation algorithms for \ds and the \dmis{2} problem, where one seeks a maximum-cardinality subset of vertices not containing a pair at distance at most 2.
Next we show that such an algorithm for \dmis{1}, that is \mis, would have the unexpected consequence of leading to a polynomial-time approximation scheme.

\subsection{Constant approximation for \ds}\label{subsec:approx-ds}

In this section, we prove that \ds and its dual \dmis{2} have bounded integrality gaps in classes of bounded twin-width.
Constant factor approximation algorithms follow for these two problems.
We will use the following technical lemma from the second paper of the series.

\begin{theorem}[Section 3, Lemma 20 in \cite{twin-width2}]\label{versatile}
  For every integer $t$, there are integers $s$ and $t'$ such that every graph~$G$ with a $t$-sequence admits a rooted tree $\mathcal T$ with the following properties.
  \begin{itemize}
  \item Every node of $\mathcal T$ is labeled by a $t'$-trigraph.
  \item The root of $\mathcal T$ is labeled by $G$.
  \item All the leaves of $\mathcal T$ are labeled by the 1-vertex graph $K_1$.
  \item If a node $x$ of $\mathcal T$ is labeled by $H$, and a child node of $x$ is labeled by $H'$, there is a $t'$-contraction in $H$ that yields $H'$.
    In particular $|V(H)|=|V(H')|+1$.
  \item Every internal node of $\mathcal T$ labeled by $H$ has at least $|V(H)|/s$ children coming from $t'$-contractions on pairwise disjoint pairs of vertices of $H$.
  \end{itemize}
\end{theorem}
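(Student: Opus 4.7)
The plan is to build the tree $\mathcal T$ top-down by extracting, at each trigraph $H$ encountered, many pairwise disjoint pairs each of which can be performed as a single $t'$-contraction in $H$. At a node labeled by a $t'$-trigraph $H$ on $m$ vertices, fix a $t$-sequence starting at $H$ and consider its first $k = \lceil m/s' \rceil$ contractions, where $s'$ is a constant depending only on $t$ to be chosen at the end. These contractions correspond to a refinement process on the partition $\{\{v\} : v \in V(H)\}$, each step merging two blocks of the current partition. To the $i$-th contraction we associate a pair $(a_i, b_i) \in V(H) \times V(H)$ of \emph{original} vertices of $H$, picking canonical representatives (say, the lexicographically smallest vertex of each merged block). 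The children of the current node are then indexed by a maximal matching $\mathcal M$ obtained greedily: scan $(a_1,b_1), \dots, (a_k,b_k)$ in order and keep $(a_i,b_i)$ only if neither endpoint has been retained earlier. For each $(a,b) \in \mathcal M$, the corresponding child is labeled by the trigraph obtained from contracting $a$ and $b$ directly in $H$, and we recurse using a reordered $t'$-sequence in which this contraction is performed first.

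Two quantitative facts must be established to make this work. First, each retained pair $(a,b)$ must produce a $t'$-trigraph when contracted in $H$, for some $t' = t'(t)$ independent of $H$. The subtle point is that contracting $a$ and $b$ directly in $H$ can introduce more red edges than the corresponding merge in the later trigraph $H_{m-i+1}$, because intermediate contractions within the window may have homogenized neighborhoods of blocks containing $a$ or $b$. Each such discrepancy, however, is witnessed by a red edge appearing somewhere along the window, and every trigraph of the $t$-sequence has red degree at most $t$, so the number of witnesses attached to $a$ or $b$ is bounded by a function of $t$ and the window length relative to $m$; absorbing this overhead into $t'$ keeps every child contraction valid.

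Second, the matching $\mathcal M$ must have size $\Omega(m)$. A vertex $v \in V(H)$ can be selected as a representative only when its current block is one of the two being merged. Between two consecutive such selections, the block of $v$ strictly enlarges by absorbing a red-neighboring block, and this neighboring block must itself be red-adjacent to $v$'s block at that moment. Since red degrees are bounded by $t$ throughout the sequence, the number of steps at which $v$ can be picked in the window is bounded by a constant $c(t)$. A standard greedy argument then shows that $|\mathcal M| \geq k/(2c(t)) \geq m/(2c(t)\, s')$, giving the required $s = 2c(t)\, s'$.

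The main obstacle will be the first fact: controlling the red degree under pull-back. The delicate point is that a single contraction in $H$ can interact cumulatively with every earlier merge in the window whose blocks touched the red-neighborhood of $a$ or $b$. The argument I would develop bounds the size of the ``region of influence'' of $(a,b)$ by exploring the red-graph neighborhood of $\{a,b\}$ through the trigraphs of the window and exploiting the fact that each of these red graphs has maximum degree $t$, so the region stays of constant size. Once the constants $c(t)$ and $t'(t)$ are pinned down, iterating the branching construction from $G$ down to $K_1$ yields the desired tree $\mathcal T$.
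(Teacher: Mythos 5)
This statement is imported as Lemma~20 of \cite{twin-width2} and is not re-proved in the present paper, so there is no in-paper proof to compare against; the remarks below address your sketch on its own terms.

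Your central step --- bounding the red degree of the pulled-back contraction --- is where the sketch fails. You contract the representatives $a_i,b_i\in V(H)$ of the blocks $A_i,B_i$ merged at step $i$ of the window directly in $H$. The red degree of $\{a_i,b_i\}$ in $H/\{a_i,b_i\}$ equals the number of vertices of $H$ that distinguish $a_i$ from $b_i$. At step $i$ of the reference sequence every such vertex lies in $A_i\cup B_i$ or in one of the at most $t$ red-neighbor \emph{blocks} of $A_i\cup B_i$, but those blocks are not singletons: over a window of length $k\approx m/s'$ a single block can grow to size $\Theta(m/s')$, so the pulled-back red degree can be $\Theta(m)$, not a constant. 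Your own phrasing (``bounded by a function of $t$ and the window length relative to $m$'') already betrays that the bound depends on $m$, and ``absorbing the overhead into $t'$'' is then not available. The matching bound is also unsupported: a $t$-sequence may contract fully homogeneous pairs, so the block absorbed by $v$'s block need not be red-adjacent; and even if it were, the red degree is re-evaluated after each merge, so a block may absorb red neighbors indefinitely --- no constant $c(t)$ on the number of selections follows. Concretely, with lexicographically smallest representatives and the caterpillar window $(\{1\},\{2\}),(\{1,2\},\{3\}),\dots,(\{1,\dots,k\},\{k+1\})$, vertex~$1$ is chosen on the left every time and your greedy matching has size $1$, not $\Omega(k)$; a more careful choice of endpoints is required and needs its own argument.

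Finally, the recursion is not uniform. Each child is handed a ``reordered'' sequence; moving a merge to the front does, in general, increase the red degree of the intermediate trigraphs (a red neighbor of $\{a,b\}$ at step $i$ may fragment into many blocks at earlier steps, and a block homogeneous to $a$ and to $b$ separately may become red to $\{a,b\}$). Nothing in the sketch prevents this overhead from compounding across the $n-1$ levels of the tree, yet the theorem demands a single $t'$ valid at every node. To close this gap you would need to show that the invariant your recursion relies on is actually preserved --- that each child again admits a width-$t$ (or width-$t'$ for a once-and-for-all $t'$) sequence --- rather than asserting it; this is the heart of the lemma and it is missing.
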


Such a tree is called an \emph{$s$-versatile tree of $t'$-contractions}.
Informally \cref{versatile} says that, by degrading the twin-width bound, one can move away from the ``linear nature'' of the contraction sequence to a profusely branching contraction witness.

\cref{versatile} is effective: The $s$-versatile tree of $t'$-contractions can be computed in polynomial time, if a $t$-sequence for $G$ is provided.

\begin{theorem}\label{thm:dominating-gap}
  In classes of bounded twin-width, \ds has bounded integrality gap.
\end{theorem}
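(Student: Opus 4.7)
The plan is to combine the $s$-versatile tree of $t'$-contractions from \cref{versatile} with a weighted greedy argument on the LP solution. Let $x^*$ be an optimum solution to the LP relaxation of \ds on $G$, of value $\tau^* := \tau^*(G)$, and let $\mathcal{T}$ be the $s$-versatile tree of $t'$-contractions for $G$, where $s, t'$ depend only on the twin-width of $G$. For any node labeled by a trigraph $H$ with associated partition $\mathcal{P}_H$ of $V(G)$, I define the pushed-forward weights $x^*_C := \sum_{u \in C} x^*_u$ for each class $C \in \mathcal{P}_H$, which still satisfy $\sum_{C} x^*_C = \tau^*$.

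I will descend from the root of $\mathcal{T}$ to find a ``correct'' node. Fix a threshold $\alpha > 0$ (to be tuned as a small constant depending on $s$), and call a class $C$ \emph{heavy} if $x^*_C \geq \alpha$, and \emph{light} otherwise. At any node, there are at most $\tau^*/\alpha$ heavy classes. By the versatility property, each internal node labeled by $H$ comes with at least $|V(H)|/s$ pairwise disjoint candidate contraction pairs. As soon as $|V(H)| > 2s \tau^*/\alpha$, pigeonhole guarantees at least one candidate pair consisting of two light classes, and I descend to the corresponding child. I stop descending at a node labeled by a trigraph $H^*$ with $|V(H^*)| \leq 2s\tau^*/\alpha = O(\tau^*)$.

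At $H^*$, I construct an integer dominating set $S \subseteq V(G)$ by processing each class $C \in \mathcal{P}_{H^*}$ independently. For each such $C$, I greedily select a bounded number of vertices from $C$ and from its red and black neighbors (in the sense of $\mathcal{P}_{H^*}$ viewed as subsets of $V(G)$) so that all vertices of $C$ are dominated in $G$. Since each class has at most $t'$ red neighbors in $H^*$, and since the contractions along the chosen root-to-$H^*$ path merged only light pairs (so that the internal LP mass of each class at $H^*$ is small), I expect that a constant number of vertices per class, depending only on $t'$ and $\alpha$, suffice. Summing over the $O(\tau^*)$ classes yields $|S| = O(\tau^*)$, giving the desired constant integrality gap.

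The main obstacle is the last step: certifying that every class $C$ at $H^*$ admits a dominating set of $C$ in $G$ of size $O(1)$. Heavy classes are easily handled by charging $O(1/\alpha)$ vertices to their LP weight. For light classes, the argument is subtler: one must exploit that the descent contracted only light classes to control the internal structure of $C$, and use the bounded-red-degree invariant of $t'$-trigraphs to keep track of which original vertices of $G$ could be responsible for dominating a given vertex of $C$. I expect this to reduce to a local covering claim on trigraphs of red-degree at most $t'$, which is the genuine combinatorial content of the theorem; everything else is bookkeeping around the versatile tree.
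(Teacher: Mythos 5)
Your high-level plan — descend the versatile tree, stop once weights forbid further contractions, and argue the resulting partition has $O(\tau^*)$ classes — matches the paper's approach, and your pigeonhole bound on the number of classes at $H^*$ is essentially the paper's Claim~\ref{clm:partition-coarse}. But you defer precisely the step that carries the theorem, and the deferred claim as you state it (``every class $C$ at $H^*$ admits a dominating set of $C$ in $G$ of size $O(1)$ chosen from $C$ and its red/black neighbors'') is not obviously true and is not what the paper proves. Without further constraints, a class $C$ with $t'$ red neighbors can have the property that distinct vertices $y\in C$ are dominated only by distinct, essentially arbitrary, vertices of the red neighbors; nothing in ``bounded red degree'' alone caps the number of vertices you would need to pick. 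Your ``greedily select a bounded number of vertices per class'' step has no termination guarantee as written.

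The missing ingredient is the interaction between the stopping threshold and the LP constraint, and it dictates how $\alpha$ must be chosen: not as a function of $s$, but $\alpha = \frac{1}{2(t'+1)}$. With this choice the paper observes two things. First, because all performed contractions merged two parts each of weight $<\frac{1}{2(t'+1)}$, every non-singleton part $P\in\mathcal P$ has $w^*(P) < \frac{1}{t'+1}$; contrapositively, every part of weight $\geq\frac{1}{t'+1}$ is a singleton. Second, for any vertex $y$ in a part $P$ without black neighbors, $N[y]$ is contained in $P$ together with the $\le t'$ red-neighbor parts $P_1,\dots,P_k$, and the LP constraint $\sum_{z\in N[y]} w^*(z)\ge 1$ forces $w^*(P)+\sum_i w^*(P_i)\ge 1$, so by pigeonhole one of these $\le t'+1$ parts has weight $\ge\frac{1}{t'+1}$ and is therefore a singleton $\{z\}$ with $z\in N[y]$. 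The transversal $D$ (one arbitrary vertex per part) automatically contains $z$, so $y$ is dominated; when $P$ does have a black neighbor $P'$, the arbitrary vertex picked in $P'$ already dominates all of $P$. Thus the paper does not locally dominate each class by $O(1)$ carefully chosen vertices; it picks a single \emph{arbitrary} vertex per part and shows the LP constraint plus the singleton property make this a dominating set. That is exactly the ``genuine combinatorial content'' you flagged as missing, and supplying it closes your gap.
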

\begin{proof}
  Let $G$ be a graph of twin-width at most $t$.
  By~\cref{versatile}, there exist $t',s$ function of $t$ only such that $G$ admits an $s$-versatile tree of $t'$-contraction.
  Let $w^* : V(G) \to [0,1]$ be the weight function of a minimum fractional dominating set, with total weight $\gamma^*$.
  Thus $w^*$ is an optimum solution of the linear program
  \begin{align*}
  & \text{minimize } \sum_{x \in V(G)} w(x) \\
  & \text{with } \forall x \in V(G),\ \sum_{y \in N[x]} w(y) \ge 1, \text{ and } 0 \le w(x) \le 1,
  \end{align*}
  and $\gamma^* = \sum_{x \in V(G)} w^*(x)$.
  The weight function $w^*$ is extended to subsets of vertices by sum.
  We assume that $G$ has at least one vertex, so $\gamma^* \ge 1$.
  
  We now greedily perform contractions in $G$ following the versatile tree of contractions with a restriction:
  contractions involving a part of total weight at least $\frac{1}{2(t'+1)}$ are forbidden.
  Let us explain what this means in more detail.  
  We start at the root, labeled $G$, of the versatile tree.
  We move to a(ny) child node along an edge corresponding to a non-forbidden $t'$-contraction.
  A \emph{non-forbidden} contraction is one of $u, v$ with $w^*(u(G)) < \frac{1}{2(t'+1)}$ and $w^*(v(G)) < \frac{1}{2(t'+1)}$. 
  We iterate that until we get stuck (every child of the current node entails a forbidden contraction). 

  We adopt the partition viewpoint of the $t'$-sequence.
  Let $\mathcal P$ be the partition of $V(G)$ obtained when this process finishes, and let $G_{\mathcal P}$ be the corresponding trigraph (that is, the label of the node where we stop).
  We observe that we cannot end at a leaf of the versatile tree.
  Indeed that would mean that the last contraction merged a bipartition $\{X,Y\}$ of $V(G)$ into $\{V(G)\}$.
  As $\gamma^* \ge 1$, this would imply that $w^*(X) \geqslant 1/2$ or $w^*(Y) \geqslant 1/2$, contradicting $\max(w^*(X),w^*(Y)) < \frac{1}{2(t'+1)}$. 
  
  \begin{claim}\label{clm:partition-coarse}
    The partition $\mathcal P$ has at most $2s(t'+1)\gamma^*$ classes.
  \end{claim}
  \begin{proof}
    As we explained, we cannot end up with a partition $\mathcal P$ at a leaf of the versatile tree.
    Thus at least $\card{\mathcal P}/s$ disjoint pairs of vertices are $t'$-contractions in $G_{\mathcal P}$.
    Therefore all these contractions must be forbidden by our restriction imposed on the weights.
    It follows that at least $\card{\mathcal P}/s$ parts of $\mathcal P$ have weight at least $\frac{1}{2(t'+1)}$.
    Since the sum of all weights in $\mathcal P$ is $\gamma^*$, it follows that $\card{\mathcal P} \le 2s(t'+1)\gamma^*$.
  \end{proof}

  \begin{claim}\label{clm:singleton-weight}
    Let $P \in \mathcal P$ be any part.
    Either $w^*(P) < \frac{1}{t'+1}$ or $P$ is a singleton.
  \end{claim}
  \begin{proof}
    Let $P \in \mathcal P$, and assume that $P$ is not a singleton.
    Then $P$ has been obtained by contracting two parts $P_1, P_2$ during the contraction sequence leading to $\mathcal P$.
    The restriction on the contraction sequence ensures that $w^*(P_1) < \frac{1}{2(t'+1)}$ and $w^*(P_2) < \frac{1}{2(t'+1)}$.
    Therefore $w^*(P) = w^*(P_1) + w^*(P_2) < \frac{1}{t'+1}$.
  \end{proof}

  Let $D \subseteq V(G)$ be obtained by picking arbitrarily one vertex $x_P$ in each part $P \in \mathcal P$.
  By \cref{clm:partition-coarse}, $\card{D} \le 2s(t'+1)\gamma^*$, which is linear in $\gamma^*$ when $t$ is fixed.
  Let us prove that $D$ is a dominating set.
  We let $P \in \mathcal P$, and prove that all vertices of $P$ are dominated by $D$.

  Suppose first that there exists $P' \in \mathcal P$ such that $P, P'$ is a black edge in $G_{\mathcal P}$.
  Then $x_{P'} \in P'$ is adjacent to all vertices of $P$, which are thus dominated by $D$.

  Hence we may instead assume that $P$ does not have any black neighbor in $G_{\mathcal P}$.
  Consider any vertex $y \in P$, and let $P_1, \ldots, P_k$ the parts of $\mathcal P \setminus \{P\}$ such that there exists an edge between $y$ and some vertex of $P_i$.
  Then $P_1, \ldots, P_k$ are neighbors of $P$ in $G_{\mathcal P}$, and must be red neighbors since $P$ has no black neighbor.
  Since $G_{\mathcal P}$ is a $t'$-trigraph, it follows that $k \le t'$.
  
  We now claim that one of the parts $P, P_1, \ldots, P_k$ must be a singleton.
  Indeed, since $w^*$ is a fractional dominating set, and since $P \cup \bigcup_{i=1}^k P_i$ contains $y$ and its neighborhood, it must be that $w^*(P) + \sum_{i=1}^k w^*(P_i) \ge 1$.
  Because $k \le t'$, it follows that one part among $P, P_1, \ldots, P_k$ has weight at least $\frac{1}{t'+1}$.
  By~\cref{clm:singleton-weight}, that same part $P_h$ must be a singleton.
  Let $z$ be the single vertex in $P_h$.
  Necessarily $z \in D$.
  If this singleton part is $P$, then $z = y$.
  Otherwise $z$ is a neighbor of $y$ by definition of $P_1, \ldots, P_k$.
  In either case $y$ is dominated in $D$ by $z$.
\end{proof}

We now consider the following linear programming formulation of \dmis{2}, which is dual to \ds:
\begin{align*}
  & \text{maximize } \sum_{x \in V(G)} w(x) \\
  & \text{with } \forall x \in V(G),\ \sum_{y \in N[x]} w(y) \le 1, \text{ and } 0 \le w(x) \le 1.
\end{align*}
Similar arguments prove the same result for this dual problem.

\begin{theorem}\label{thm:2-independent-gap}
  In classes of bounded twin-width, \dmis{2} has bounded integrality gap.
\end{theorem}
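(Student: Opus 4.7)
The plan is to mirror the argument of~\cref{thm:dominating-gap} while exploiting the dual LP constraint $w^*(N[v]) \leq 1$ in place of $\geq 1$. Let $w^* : V(G) \to [0,1]$ be an optimum fractional 2-independent set of value $\alpha^* \geq 1$, and pick $t', s$ as in~\cref{versatile} so that $G$ admits an $s$-versatile tree of $t'$-contractions. Descend this tree greedily while forbidding any contraction of a pair of parts $u, v$ with $w^*(u(G)) \geq \tfrac{1}{2(t'+1)}$ or $w^*(v(G)) \geq \tfrac{1}{2(t'+1)}$, and let $\mathcal P$ be the partition reached when stuck. Exactly as in~\cref{thm:dominating-gap}, $|\mathcal P| \leq 2s(t'+1)\alpha^*$, every non-singleton part has weight $< \tfrac{1}{t'+1}$, and at least $|\mathcal P|/s$ parts (the \emph{heavy} set $H$) have weight $\geq \tfrac{1}{2(t'+1)}$. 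Since every part has weight at most $1$, also $|\mathcal P| \geq \alpha^*$, giving $|H| = \Theta(\alpha^*)$.

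The analogue of the key observation of~\cref{thm:dominating-gap} is: for any part $P$ and any $v \in P$, the union of the black neighbors of $P$ in $G_{\mathcal P}$ is contained in $N[v]$, whence $\sum_{Q \text{ black nbr of } P} w^*(Q) \leq 1$; in particular at most $2(t'+1)$ of them can be heavy. Form the \emph{conflict graph} $\Gamma$ on $H$ with $PP' \in E(\Gamma)$ iff $d_{G_{\mathcal P}}(P, P') \leq 2$ (treating both edge colors as adjacencies). A pair of parts of $H$ at trigraph-distance $\geq 3$ has any two representative vertices at $G$-distance $\geq 3$, since a length-$\leq 2$ $G$-path between them would induce a trigraph path of the same length between the parts. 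Hence an independent set in $\Gamma$ yields, via one vertex per selected part, a 2-independent set of the same size in $G$.

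It thus remains to bound $|E(\Gamma)|$ linearly in $|H|$ and apply Tur\'an. The red--red case contributes at most $t'^2$ such $P$ per fixed $P' \in H$ (by the red-degree bound $t'$ at both endpoints), hence $O(t'^2)|H|$ edges. The mixed red--black, black--red, and all-black cases are counted globally through the intermediate part $Q$: every $Q$ has $\leq t'$ red neighbors and $\leq 2(t'+1)$ heavy black neighbors, so it contributes at most $O(t'^2)$ pairs of heavy parts connected via a length-2 trigraph path through $Q$, giving $O(t'^2) \cdot |\mathcal P|$ edges across all such $Q$. Using $|\mathcal P| \leq s|H|$, we conclude $|E(\Gamma)| = O_{t',s}(|H|)$; Tur\'an's bound then yields an independent set of size $\Omega(|H|) = \Omega(\alpha^*)$ in $\Gamma$, producing a 2-independent set of the required size in $G$ and establishing the bounded integrality gap.

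The main obstacle is that the maximum degree of $\Gamma$ is \emph{not} bounded---a heavy $P$ can have unboundedly many light black neighbors $Q$, each leading to many further heavy parts along length-2 paths---so a greedy/degree-based independent-set argument must be replaced by global edge counting through the intermediate part $Q$, exploiting that \emph{every} part has $\leq 2(t'+1)$ heavy black neighbors regardless of its own weight, and that $|\mathcal P| = O(|H|)$.
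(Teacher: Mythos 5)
Your proof is correct, but it takes a genuinely different and noticeably heavier route than the paper's. The crucial difference is the \emph{contraction-forbidding threshold}. You carry over the threshold $\tfrac{1}{2(t'+1)}$ from the \ds argument, which gives a heavy set $H$ whose members \emph{can} still have black neighbors; you then have to fight hard against this via a global edge-count in the conflict graph plus Tur\'an. The paper instead forbids contracting any part of weight $>1$, so every part has weight at most $2$ and the heavy set is $A=\{P: w^*(P)>1\}$. The payoff is a one-line observation you did not make: if $P\in A$ had a black neighbor $P'$, then any $x\in P'$ would satisfy $P\subseteq N(x)$, forcing $w^*(N[x])\geq w^*(P)>1$ and violating the LP constraint. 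Hence heavy parts have \emph{no} black neighbors at all, any length-$\leq 2$ trigraph path between them is all red, and each heavy part has at most $t'^2$ other heavy parts within distance $2$; a trivial greedy in a bounded-degree graph then produces the $2$-independent set, with no double counting and no Tur\'an. In short: your argument is valid (the ``$\leq 2(t'+1)$ heavy black neighbors per part'' bound is correct, as is the edge count through intermediate parts and $|\mathcal P|=O_{s,t'}(|H|)$), but the larger threshold is the trick that collapses the whole black-neighbor bookkeeping; if you look for a threshold tailored to the dual constraint $w^*(N[x])\leq 1$ rather than reusing the primal one, the argument becomes essentially a paragraph.
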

\begin{proof}
  Consider $G$ of twin-width $t$, and $t',s$ function of $t$ such that $G$ admits an $s$-versatile tree of $t'$-contraction.
  Let $w^* : V(G) \to \mathbb{R}$ be the weight function of a maximum fractional 2-independent set, with total weight $\alpha_2^*$.

  We greedily perform contractions in $G$ following the versatile tree of contractions with the restriction:
  contractions involving a part with total weight more than 1 are forbidden.
  Let $\mathcal P$ be the partition of $V(G)$ obtained when this process finishes, and $G_{\mathcal P}$ be the corresponding trigraph.
  Again the weight function $w^*$ is extended to $\mathcal P$ by sum.
  With our restriction on allowed contractions, it is immediate that all classes of $\mathcal P$ have weight at most 2.
  Therefore $\card{\mathcal P} \ge \frac{\alpha_2^*}{2}$.
  We can safely assume that $\alpha_2^* > 2$, thus $\card{\mathcal P} > 1$.
  In particular, the node of the versatile tree labeled $G_{\mathcal P}$ in which we stopped is an internal node.

  Let $A = \{P \in \mathcal{P} \, : \, w^*(P) > 1\}$.
  \begin{claim}\label{clm:size-A}
    $\card{A} \ge \frac{\alpha_2^*}{2s}$.
  \end{claim}
  \begin{proof}
    The elements of $A$ are exactly the ones which cannot be used for contractions in $G_{\mathcal P}$.
    The versatile tree of contractions ensures at least $\card{\mathcal P}/s \ge \frac{\alpha_2^*}{2s}$ pairwise disjoint $t'$-contractions in $G_{\mathcal P}$.
    All these contractions must be forbidden, meaning that they all involve a vertex of~$A$.
    Since they are contractions of disjoint pairs of vertices, it follows that $\card{A} \ge \frac{\alpha_2^*}{2s}$.
  \end{proof}

  \begin{claim}\label{clm:no-black-edge}
    No element of $A$ has a black neighbor in $G_{\mathcal P}$.
  \end{claim}
  \begin{proof}
    Suppose that there exist $P \in A$, $P' \in \mathcal{P}$ such that $PP'$ is a black edge in $G_{\mathcal P}$.
    Then for any $x \in P'$ we have $P \subseteq N_G(x)$ and $w^*(P) > 1$, which violates the LP constraint.
  \end{proof}

  \begin{claim}\label{clm:A-2independent}
    There exists $S \subseteq A$ a 2-independent set in $G_{\mathcal P}$ such that $\card{S} \ge \frac{\alpha_2^*}{2s(t'^2+1)}$.
  \end{claim}
  \begin{proof}
    By \cref{clm:no-black-edge}, a path of length at most 2 in $G_{\mathcal{P}}$ between elements of $A$ can only consist of red edges.
    Since the red graph in $G_{\mathcal P}$ has maximum degree at most $t'$, given $P \in A$, there are at most $t'^2$ other elements of $A$ at distance 2 or less of $P$.
    Thus one can choose a 2-independent set in $A$ of size at least $\frac{\card{A}}{t'^2+1}$, which is at least $\frac{\alpha_2^*}{2s(t'^2+1)}$ by \cref{clm:size-A}.
  \end{proof}

  To conclude, we pick one vertex of $G$ within each part of $S$.
  This gives a 2-independent set in $G$ of size at least $\frac{\alpha_2^*}{2s(t'^2+1)}$.
\end{proof}

Reporting approximated solutions for \ds and \dmis{2} requires that a $t$-sequence of the input is provided (or that it can be computed in polynomial time, as it is the case on many bounded twin-width classes).
Interestingly, deciding the associated constant-gap problem can be done without $t$-sequences, with the mere knowledge of the twin-width bound. 

The constant approximations more generally work for \textsc{Min $r$-Dominating Set} and \dmis{$2r$}, for every positive integer $r$.
Indeed solving these problems in $G$ is equivalent to solving \ds and \dmis{2} in $G^{\le r}$ (where $G^{\le r}$ is the graph obtained by putting an edge between every pair of vertices at distance at most $r$ in $G$).
Besides the twin-width of $G^{\le r}$ is bounded by a function of the twin-width of $G$ and $r$, and an $O_r(1)$-sequence for $G^{\le r}$ can be computed in polynomial time, given an $O(1)$-sequence for~$G$~\mbox{\cite[Section 8, Theorem 41]{twin-width1}}.

\subsection{A constant approximation for \mis would imply a PTAS}\label{subsec:mis-barrier}

A pessimistic stance on the result of this section is that, perhaps surprisingly, the constant approximations of \ds and \dmis{2} are unlikely to be generalizable to the closely related \mis (that can be seen as \dmis{1}). 
We indeed observe that the self-improving reduction of Feige et al.~\cite{Feige91} preserves the twin-width.
As a consequence a constant approximation for \mis would provide a polynomial-time approximation scheme (PTAS).

\begin{theorem}
  \label{thm:MIS-APX-PTAS}
  If \lmis on graphs of twin-width at most $d$ has a constant-approximation algorithm, then it admits a PTAS.
\end{theorem}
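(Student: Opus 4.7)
The plan is to invoke the classical self-improvement of Feige, Goldwasser, Lov\'asz, Safra, and Szegedy~\cite{Feige91}, and verify that their construction, based on iterated lexicographic products, stays inside the class of graphs of twin-width at most~$d$. Recall that the lexicographic product $G[H]$ has vertex set $V(G) \times V(H)$, with $(u,v)(u',v')$ an edge iff $uu' \in E(G)$, or $u = u'$ and $vv' \in E(H)$. I set $G^{(1)} := G$ and $G^{(k)} := G[G^{(k-1)}]$. A standard computation yields $\alpha(G^{(k)}) = \alpha(G)^k$, and a greedy peeling extracts from any independent set $I$ of $G^{(k)}$ an independent set of $G$ of size at least $|I|^{1/k}$: the set $U_1$ of first coordinates appearing in $I$ is independent in $G$, each fibre $I_{u_1} := \{(u_2,\ldots,u_k) : (u_1,\ldots,u_k) \in I\}$ is independent in $G^{(k-1)}$, and $\sum_{u_1 \in U_1} |I_{u_1}| = |I|$, so either $|U_1| \ge |I|^{1/k}$ (and we return $U_1$), or some $|I_{u_1}| \ge |I|^{(k-1)/k}$ (and we recurse on $G^{(k-1)}$).

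The new step, which is the main content of the proof, will be to verify that $\tww(G^{(k)}) \le \tww(G) = d$ for every $k \ge 1$. I proceed by induction on $k$; the base case $k=1$ is immediate. Call a \emph{block} of $G^{(k)} = G[G^{(k-1)}]$ a set $\{u\} \times V(G^{(k-1)})$ with $u \in V(G)$. Any two vertices in the same block have exactly the same neighbours outside their block, because every other block $\{u'\} \times V(G^{(k-1)})$ is either fully joined to them (when $uu' \in E(G)$) or fully non-adjacent to them. Given the $d$-sequence of $G^{(k-1)}$ provided by the inductive hypothesis, I copy it inside each block in turn: a within-block contraction only affects edges within the current block, so the block's internal red graph is isomorphic to the red graph of the $G^{(k-1)}$ sequence at the same step, while all other blocks remain untouched and red-free. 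Hence the red degree of any vertex stays bounded by $d$. Once every block has been crushed to a single vertex, the surviving trigraph is exactly $G$ with only black edges, which is then finished off using its own $d$-sequence.

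Putting the pieces together, suppose $\mathcal A$ is a $c$-approximation algorithm for \lmis on graphs of twin-width at most $d$, and fix $\varepsilon > 0$. I set $k := \lceil \log c / \log(1+\varepsilon) \rceil$, construct $G^{(k)}$ together with the $d$-sequence just described, run $\mathcal A$ to obtain an independent set $I$ with $|I| \ge \alpha(G^{(k)})/c = \alpha(G)^k / c$, and peel $I$ down to an independent set of $G$ of size at least $|I|^{1/k} \ge \alpha(G)/c^{1/k} \ge \alpha(G)/(1+\varepsilon)$. The auxiliary graph has $n^k = n^{O(1/\varepsilon)}$ vertices, so the whole procedure runs in time polynomial in $n$ for every fixed $\varepsilon$, which is the promised PTAS.

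I expect the main obstacle to be the twin-width bookkeeping for the lexicographic product: one has to argue that processing blocks sequentially keeps every red edge confined to the current block, that the ``outside twin'' property of same-block vertices survives every intermediate contraction, and that once a block collapses no red adjacency to other blocks is created. All three facts are local consequences of the definition of a contraction, but they need to be written out carefully so that the induction on $k$ actually yields a genuine $d$-sequence and not merely a sequence of width growing with $k$.
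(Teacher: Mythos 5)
Your proposal is correct and follows essentially the same route as the paper: iterate the lexicographic product to get $G^t$, show the operation preserves twin-width by crushing blocks one at a time (the paper packages this as a separate lemma on single-vertex substitution, $\tww(G_1(u \leftarrow G_2)) = \max(\tww(G_1),\tww(G_2))$, and then applies it repeatedly), and peel an independent set of $G^t$ down to one of $G$ using $\alpha(G^t) = \alpha(G)^t$. The block-by-block crushing argument you sketch is exactly the content of that substitution lemma, so no genuine gap remains.
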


For $G_1$ and $G_2$ two non-empty graphs, and $u \in V(G_1)$, we denote by $G_1(u \leftarrow G_2)$ the substitution in $G_1$ of $u$ by $G_2$.
That is, $u$ is replaced by $G_2$, and every vertex of $V(G_1) \setminus \{u\}$ initially adjacent to $u$ is made adjacent to the whole $V(G_2)$.

\begin{lemma}\label{lem:substitution}
$\tww(G_1(u \leftarrow G_2))=\max(\tww(G_1),\tww(G_2))$.
\end{lemma}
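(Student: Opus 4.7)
\medskip

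\noindent \textbf{Proof plan.} The plan is to show the two inequalities separately.

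For the lower bound $\tww(G_1(u \leftarrow G_2)) \geqslant \max(\tww(G_1), \tww(G_2))$, I would simply observe that both $G_1$ and $G_2$ occur as induced subgraphs of $G_1(u \leftarrow G_2)$: the graph $G_2$ is the subgraph induced by $V(G_2)$, and $G_1$ is obtained by restricting to $V(G_1) \setminus \{u\}$ together with any single vertex $v \in V(G_2)$ playing the role of $u$ (its adjacencies to $V(G_1) \setminus \{u\}$ are exactly those of $u$ in $G_1$ by definition of the substitution). Since twin-width is monotone under taking induced subgraphs (a standard fact established in the first paper of the series, by simply restricting every trigraph in a $d$-sequence to the vertices projecting into the subgraph), both inequalities follow.

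For the upper bound, let $d_1 = \tww(G_1)$ and $d_2 = \tww(G_2)$, and let $d = \max(d_1,d_2)$. I would build a $d$-sequence of $H := G_1(u \leftarrow G_2)$ in two phases. In the first phase, I follow a $d_2$-sequence of $G_2$, applied verbatim to the copy of $V(G_2)$ sitting inside $H$. The crucial observation is that every vertex of $V(G_2)$ has the same neighborhood outside $V(G_2)$, namely $N_{G_1}(u)$; consequently, at any stage of this first phase, each part $P$ of the partition so far is either fully adjacent or fully non-adjacent (in $H$) to every vertex of $V(H) \setminus V(G_2)$. Hence no red edge ever crosses the $V(G_2)$/$V(H)\setminus V(G_2)$ boundary, and the red degree of each part is bounded by its red degree in the corresponding trigraph of the $G_2$-sequence, which is at most $d_2 \leqslant d$. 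At the end of this phase, $V(G_2)$ has been contracted into a single vertex $u^\star$ whose adjacency pattern to $V(H) \setminus V(G_2) = V(G_1) \setminus \{u\}$ is exactly that of $u$ in $G_1$.

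In the second phase, the trigraph we reached is isomorphic to $G_1$ (identifying $u^\star$ with $u$), so I can simply extend the sequence by a $d_1$-sequence of $G_1$, staying within red degree $d_1 \leqslant d$. Concatenating the two phases yields a $d$-sequence of $H$, proving $\tww(H) \leqslant d$. The main (and only) delicate point is the boundary argument in phase one, which is immediate once one notes that $V(G_2)$ is a module in $H$; the rest is a routine concatenation of sequences.
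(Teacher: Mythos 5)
Your proposal is correct and follows exactly the paper's argument: the lower bound via the fact that $G_1$ and $G_2$ are induced subgraphs of the substitution, and the upper bound by first running a $d_2$-sequence on the copy of $G_2$ (which, being a module, generates no red edges to the outside) and then a $d_1$-sequence on the resulting copy of $G_1$. Your write-up just spells out the module argument that the paper leaves implicit.
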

\begin{proof}
  We set $G := G_1(u \leftarrow G_2)$.
  $G_1$ and $G_2$ are both induced subgraphs of $G$, so $\tww(G) \geqslant \max(\tww(G_1),\tww(G_2))$.
  For the reverse inequality, one just applies the sequence of $d_2$-contractions on the copy of $G_2$ in $G$, with $d_2 := \tww(G_2)$.
  This results in the graph $G_1$ without red edges.
  Then, one applies the sequence of $d_1$-contractions to $G_1$, with $d_1 := \tww(G_1)$.
  This shows that $\tww(G) \leqslant \max(d_1,d_2)$.
\end{proof}

  For $G$ a graph, let $G^t$ be the graph on the vertex set $V(G)^t$, such that for $\bar{x} = (x_1,\ldots,x_t)$, $\bar{y} = (y_1,\ldots,y_t)$ distinct vertices, $\bar{x}\bar{y} \in E(G^t)$ if and only if $x_iy_i \in E(G)$ where $i$ is the smallest index such that $x_i \neq y_i$.
This definition can be restated inductively: $G^0$ is the 1-vertex graph, and $G^t$ is obtained from $G$ by substituting each vertex by a copy of $G^{t-1}$.
With the notations of the initial definition, for $x \in V(G)$, the set of vertices of $G^t$ of the form $(x,x_2,\ldots,x_t)$ is a copy isomorphic to $G^{t-1}$.

The following holds as a direct consequence of~\cref{lem:substitution}.
\begin{lemma}
  \label{lem:recursive-substitution}
  For any graph $G$ and integer $t>0$, $\tww(G^t) =~\tww(G)$.
\end{lemma}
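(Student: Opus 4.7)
The plan is to proceed by induction on $t$, with Lemma \ref{lem:substitution} doing all of the real work. The base case $t=1$ is essentially a definitional triviality: since $G^0 = K_1$, the inductive definition of $G^t$ tells us that $G^1$ is obtained from $G$ by substituting each vertex by the single-vertex graph $K_1$, which just returns $G$. Hence $\tww(G^1) = \tww(G)$.

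For the inductive step, I would unfold the construction of $G^t$ as a sequence of $|V(G)|$ single-vertex substitutions so as to apply Lemma \ref{lem:substitution} one vertex at a time. Concretely, enumerate $V(G) = \{v_1, \ldots, v_n\}$, set $H_0 := G$, and for $1 \le i \le n$ set $H_i := H_{i-1}(v_i \leftarrow G^{t-1})$. By the inductive definition of $G^t$, we have $H_n = G^t$. A secondary induction on $i$ combined with Lemma \ref{lem:substitution} at each step yields
\[
\tww(H_i) \;=\; \max\bigl(\tww(H_{i-1}),\, \tww(G^{t-1})\bigr) \;=\; \max\bigl(\tww(G),\, \tww(G)\bigr) \;=\; \tww(G),
\]
where the middle equality uses the outer induction hypothesis $\tww(G^{t-1}) = \tww(G)$ together with the secondary-induction base $\tww(H_0) = \tww(G)$. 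Taking $i = n$ gives $\tww(G^t) = \tww(G)$.

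There is no genuine obstacle; the statement is advertised as a direct consequence of Lemma \ref{lem:substitution}, and the only slightly delicate point is making sure that the iterated substitution of the vertices of $G$ (in an arbitrary order) really does reconstruct $G^t$, which is immediate from the inductive description of $G^t$ given just before the lemma.
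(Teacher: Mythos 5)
Your proposal is correct and matches the paper's intent: the paper offers no proof beyond stating that the lemma is a direct consequence of Lemma~\ref{lem:substitution}, and the double induction (outer on $t$, inner on the $n$ one-vertex substitutions that realize the simultaneous substitution defining $G^t$) is exactly the natural way to fill in the omitted details.
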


We now show that the independence number of $G^t$ is tightly related to the one of $G$. 
\begin{lemma}
  \label{lem:recursive-substitution-independent}
  For any graph $G$, both following conditions hold.
  \begin{enumerate}
  \item Given any independent set of size $k$ in $G$, one can compute an independent of size~$k^t$ in~$G^t$, in time~$O(k^t)$.
  \item Given any independent set of size $k'$ in $G^t$, one can compute an independent of size at~least~$\sqrt[t]{k'}$ in~$G$, in time~$O(k')$.
  \end{enumerate}
\end{lemma}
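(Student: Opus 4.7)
The plan is to prove both parts by induction on $t$, exploiting the recursive structure $G^t = G(v \leftarrow G^{t-1}$ for every $v \in V(G))$, where each copy of $G^{t-1}$ sits on the vertex set $\{v\} \times V(G)^{t-1}$.

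For part (1), I would simply show that if $I$ is independent in $G$, then $I^t \subseteq V(G)^t$ is independent in $G^t$. Take two distinct tuples $\bar{x}, \bar{y} \in I^t$ and let $i$ be the smallest coordinate where they differ; since $x_i, y_i \in I$ are distinct non-adjacent vertices of $G$, the definition of $G^t$ gives $\bar{x} \bar{y} \notin E(G^t)$. The construction of $I^t$ takes $O(k^t)$ time, matching the claimed bound.

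For part (2), let $S$ be an independent set of size $k'$ in $G^t$. For each $v \in V(G)$ set $S_v = S \cap (\{v\} \times V(G)^{t-1})$, which we view as an independent set inside the copy $G^{t-1}_v$. Let $J = \{v \in V(G) : S_v \neq \emptyset\}$. The key observation is that $J$ is independent in $G$: if $uv \in E(G)$, then by the substitution rule \emph{every} vertex of $G^{t-1}_u$ is adjacent in $G^t$ to every vertex of $G^{t-1}_v$, contradicting $S_u \neq \emptyset \neq S_v$ and the independence of $S$. Now split on the size of $J$. If $|J| \geq \sqrt[t]{k'}$, return $J$. Otherwise pick $v \in J$ maximizing $|S_v|$; by pigeonhole $|S_v| \geq k'/|J| > k'^{(t-1)/t}$, and by induction applied to $G^{t-1}_v \cong G^{t-1}$, we obtain an independent set in $G$ of size at least $(k'^{(t-1)/t})^{1/(t-1)} = \sqrt[t]{k'}$.

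For the running time of part (2), each recursive call partitions its input $S$ according to the first coordinate in $O(|S|)$ time, then either returns $J$ or recurses on a strictly smaller set. Treating $t$ as a constant, the total work is linear in $k'$. The only slightly subtle point is ensuring the right power of $k'$ at each recursive level; the computation $(k'^{(t-1)/t})^{1/(t-1)} = k'^{1/t}$ is what makes the exponents telescope correctly, so this is where I would be most careful. The base case $t = 1$ is trivial since $G^1 = G$.
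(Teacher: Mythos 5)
Your proof is correct and follows essentially the same route as the paper's: project $S$ onto the first coordinate to get a set ($J$, called $I'$ in the paper) that is independent in $G$; if it is large enough, return it, otherwise pigeonhole to find a single first-coordinate whose fiber is large and recurse on the corresponding copy of $G^{t-1}$. The only cosmetic difference is that the paper substitutes $r := \sqrt[t]{k'}$ up front, which makes the exponent bookkeeping ($r^t$, $r^{t-1}$, $r$) slightly cleaner than your direct manipulation of fractional powers of $k'$, but the two are equivalent.
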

\begin{proof}
  Let $I$ be an independent set in $G$.
  Then $I^t$ seen as a subset of $V(G)^t$ is an independent of $G^t$, which proves the first item.

  For the second item, let $I$ be an independent set in $G^t$ of size at least $r^t$.
  We define
  \[ I' := \{x \in V(G)~:~\exists x_2,\ldots,x_t,~(x,x_2,\ldots,x_t) \in I\}. \]
  Then $I'$ is an independent set in $G$.
  If $|I'| \geqslant r$, we are done.
  Otherwise, for each $x \in I'$, let
  \[ I_{x} := \{(x_2,\ldots,x_t) \in V(G)^{t-1}~:~(x,x_2,\ldots,x_t) \in I\}. \]
  For any $x$, $I_x$ is an independent set in $G^{t-1}$.
  Furthermore we have $\sum_{x \in I'} |I_x| = |I|$, $|I|=r^t$, and $|I'|<r$, hence there exists some $x \in I'$ such that $|I_x| \geqslant r^{t-1}$.
  By induction on $t$ we obtain an independent of size at least $r$ in $G$.
\end{proof}

As an immediate corollary, $\alpha(G^t) = \alpha(G)^t$ where, we recall, $\alpha(H)$ denotes the size of a maximum independent set in $H$.

\begin{proof}[Proof of~\cref{thm:MIS-APX-PTAS}]
  Assume there is a polynomial-time $\beta$-approximation for \mis on graphs of twin-width at most $d$.
  Let $G$ be a graph with twin-width at most $d$.
  By~\cref{lem:recursive-substitution} the algorithm can be ran on $G^t$ to obtain an independent set of size at least $\frac{\alpha(G^t)}{\beta} = \frac{\alpha(G)^t}{\beta}$.
  By~\cref{lem:recursive-substitution-independent}, this independent set in $G^t$ can be turned into an independent set in $G$ of size at least ${\alpha(G)}/{\sqrt[t]{\beta}}$.
  This gives a polynomial-time $\sqrt[t]{\beta}$-approximation for arbitrary $t$.
  Thus the approximation ratio can be made arbitrarily close to 1.
\end{proof}

\subsection{Linear Erd\H{o}s-P\'osa property}\label{subsec:approx-ds}

Given a $0,1$-matrix $M$, two natural integer programs naturally arise: One can ask for a minimum-weight $0,1$-vector $X_h$ such that $M \cdot X_h \geq 1$ or for a maximum-weight $0,1$-vector $Y_p$ such that $M^t \cdot Y_p\leq 1$.
In the usual representation of $M$ as a hypergraph $H$ where columns are vertices and rows are hyperedges (each row seen as an indicator vector of a subset of vertices), $X_h$ is a minimum \emph{hitting set} and $Y_p$ is a maximum \emph{packing}.
We usually denote by $\mu(H)$ the size of a maximum packing and by $\tau(H)$ the size of a minimum hitting set. 

One can then consider the fractional relaxation of these parameters, $\mu^*(H)$ and $\tau^*(H)$. 
Since the corresponding linear programs are dual, we obtain the following chain of (in)equalities $\mu(H) \leq \mu^*(H) = \tau^*(H) \leq \tau(H)$.
A class $\mathcal H$ of hypergraphs for which there exists a function $f$ such that every hypergraph $H \in \mathcal H$ satisfies $\tau(H) \leq f(\mu(H))$ has the \emph{Erd\H{o}s-P\'osa property}.
If furthermore $\tau(H) \leq c \cdot \mu(H)$ for some constant $c$, $\mathcal H$ has the \emph{linear Erd\H{o}s-P\'osa property}.

By a result of Haussler and Welzl~\cite{hw-esrq-87}, the class of hypergraphs with bounded VC-dimension satisfies that $\tau(H) \leq f(\tau^*(H))$, but is not by itself sufficient to imply the Erd\H{o}s-P\'osa property (the integrality gap for $\mu$ is unbounded).
A result of Ding et al.~\cite{DBLP:journals/combinatorica/DingSW94} asserts that the Erd\H{o}s-P\'osa property holds for matrices which do not contain the transpose of incidence matrices of cliques as submatrices; the function $f$ is polynomial but not linear.
Dvoř\'ak~\cite{DVORAK2013833} proved that, for every fixed $r$, $r$-neighborhood hypergraphs of bounded expansion classes have the linear Erd\H{o}s-P\'osa property.
Recently, Bousquet et al~\cite{bousquet2020packing} showed that ball hypergraphs (of any radius) of proper minor-closed classes have the linear Erd\H{o}s-P\'osa property.

The \emph{incidence bipartite graph} $B(H)$ of a hypergraph $H$ is the bipartite graph on vertex set $V(H) \cup E(H)$ where $ve$ is an edge if $v \in V(H)$, $e \in E(H)$ and $v \in e$.
The \emph{twin-width} of hypergraph $H$ is defined here as the twin-width of $B(H)$.
A straightforward adaptation of the proofs of~\cref{thm:dominating-gap,thm:2-independent-gap} gives:

\begin{theorem}\label{thm:linearEP}
For every integer $t$, there is a constant $c_t$ such that every hypergraph $H$ with twin-width at most $t$ satisfies $\tau(H) \leq c_t \cdot \mu(H)$.
\end{theorem}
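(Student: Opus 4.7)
The plan is to adapt separately the proofs of \cref{thm:dominating-gap} and \cref{thm:2-independent-gap} to the bipartite incidence graph $B(H)$ (whose twin-width is at most $t$ by definition), producing linear gaps $\tau(H) \leq c_1 \tau^*(H)$ on the hitting side and $\mu^*(H) \leq c_2 \mu(H)$ on the packing side, and then chaining them through LP duality $\tau^* = \mu^*$ to obtain $\tau(H) \leq c_1 c_2 \mu(H)$. Throughout, for a part $P$ of the partition $\mathcal P$ of $V(H) \cup E(H)$ arising during the contraction process of $B(H)$, I write $P^V := P \cap V(H)$ and $P^E := P \cap E(H)$, and I consistently measure weights only on the relevant side.

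For the hitting gap, I take a minimum fractional hitting set $w^* : V(H) \to [0,1]$ of total weight $\tau^*$, apply the $s$-versatile tree of $t'$-contractions of $B(H)$ given by \cref{versatile}, and greedily contract while forbidding any contraction that would produce a part $P$ with $w^*(P^V) \geq \frac{1}{2(t'+1)}$. The counting from the proof of \cref{thm:dominating-gap} then yields $|\mathcal P| \leq 2s(t'+1)\tau^*$, and the singleton argument adapts to show that any part $P$ with $w^*(P^V) \geq \frac{1}{t'+1}$ must be a singleton $\{v\}$ of $B(H)$ with $v \in V(H)$ (a singleton in $E(H)$ would have $w^*$-weight $0$). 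Selecting $x_P \in P^V$ for every part with $P^V \neq \emptyset$ produces a candidate hitting set $X$ of size at most $|\mathcal P|$. To check that every $e \in E(H)$ is hit, let $P^{(e)}$ be the part containing $e$ and $P_1, \dots, P_k$ the other parts intersecting $N_{B(H)}(e) = \{v \in V(H) : v \in e\}$: if some $P_i$ is a black neighbor of $P^{(e)}$ in $G_{\mathcal P}$, then $P_i^V \subseteq e$ and $x_{P_i}$ hits $e$; otherwise all $P_i$ are red so $k \leq t'$, and the constraint $\sum_{v \in e} w^*(v) \geq 1$ forces some part in $\{P^{(e)}\} \cup \{P_i\}_{i \in [k]}$ to have $V$-weight at least $\frac{1}{t'+1}$, which cannot be $P^{(e)}$ since $e \in P^{(e)} \cap E(H)$, hence is some $P_i = \{v\}$ with $v \in e$.

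For the packing gap, I take a maximum fractional packing $y^* : E(H) \to [0,1]$ of total weight $\mu^*$ and forbid any contraction involving a part $P$ with $y^*(P^E) > 1$. The bipartiteness of $B(H)$ implies that a black edge of $G_{\mathcal P}$ can only exist between a pure $V(H)$-part and a pure $E(H)$-part, as otherwise it would demand a $B(H)$-edge between same-side vertices. Consequently, no element of $A := \{P \in \mathcal P : y^*(P^E) > 1\}$ has a black neighbor: mixed $P \in A$ are excluded by bipartiteness, and a pure $P \subseteq E(H)$ with a pure $V(H)$ black neighbor $Q$ would yield some $v \in Q$ with $\sum_{e \ni v} y^*(e) \geq y^*(P^E) > 1$, violating feasibility. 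The counting of \cref{thm:2-independent-gap} combined with the red-degree bound produces a 2-independent set $S \subseteq A$ in the red graph of size $\geq \mu^*/(2s(t'^2+1))$. Picking one hyperedge $e_P \in P^E$ per $P \in S$ yields an integer packing: if $e_P$ and $e_Q$ shared a vertex $v$, the part $R$ containing $v$ would equal or be a (necessarily red) neighbor of both $P$ and $Q$, producing a red path of length at most $2$ and contradicting $2$-independence.

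The main obstacle, and what elevates this from a verbatim copy to a "straightforward adaptation", is the bookkeeping forced by the bipartite nature of $B(H)$: weights must be tracked on the correct side of $B(H)$, the singleton produced by the hitting argument must be verified to land in $V(H)$, and the heavy parts in the packing argument must contain an actual hyperedge of $H$. The bipartite constraint is otherwise quite friendly: it rules out black edges inside $A$ almost for free in the packing step.
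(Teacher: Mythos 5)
Your proof is correct and is precisely the ``straightforward adaptation'' the paper invokes in the sentence preceding the theorem: you run the versatile-tree contraction argument on $B(H)$ twice (once with a fractional hitting set to bound $\tau(H)$ by $O_t(\tau^*(H))$, once with a fractional packing to bound $\mu^*(H)$ by $O_t(\mu(H))$) and chain through LP duality $\tau^* = \mu^*$. The bipartiteness bookkeeping you highlight---tracking weights only on the relevant side of $B(H)$, verifying that heavy singletons land in $V(H)$, and noting that black edges in $G_{\mathcal P}$ must cross between a pure-$V(H)$ part and a pure-$E(H)$ part---is exactly what makes the adaptation go through, and you handle it correctly.
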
 

In other words, the class of bounded twin-width hypergraphs have the linear Erd\H{o}s-P\'osa property.
A particularly interesting line of research would be to generalize this integrality-gap result to integer matrices rather than just $0,1$-matrices.
This requires a suitable definition for bounded twin-width in the general integer case.

\section{Future work and open questions}\label{sec:conclusion}

We have now a rather fine-grained understanding of the classic parameterized graph problems (\kmis, \kds, and their relatives) when a contraction sequence is given in addition to the bounded twin-width graph.
For \kmis for example there is a $2^{O(k)}n$-time algorithm, while a $2^{o(k / \log k)}n^{O(1)}$-time (even $2^{o(n / \log n)}$-time) algorithm would refute the ETH.
It is natural to wonder if better approximation algorithms of NP-hard problems are possible when a contraction sequence is given. 
Before we detail that a bit, as well as the possibility of getting improved exact exponential algorithms on general graphs, we note that bounded twin-width does not seem to help to get polynomial kernels.

\subsection{No polynomial kernels on bounded twin-width classes}\label{subsec:nopolyker}

We already observed that \kmis is unlikely to have $k^{O(1)}$ kernels on graphs of twin-width at most a fixed constant $d$~\cite{twin-width1}.
We sketch here that the same applies to the vertex-weighted \kds (that is, the problem of the existence of a weight-$k$ dominating set).
The following is an OR-composition producing from, say, $t$ instances of the NP-hard \textsc{Dominating Set} on planar graphs, one instance of \textsc{Weighted Dominating Set} whose underlying unweighted graph has constant twin-width.

We make the disjoint union of the $t$ planar \textsc{Dominating Set}-instances $(G_1,k), \ldots, (G_t,k)$.
We add $t$ vertices $u_1, \ldots, u_t$ each of weight $k+1$, and link $u_i$ to all the vertices of every $G_j$ but $G_i$.
It is easy to see that the existence of a weight-$2k+1$ dominating set in this new graph is equivalent to one of the instances $(G_1,k) \ldots, (G_t,k)$ being positive.
As planar graphs have bounded twin-width~\cite{twin-width1}, the built graph (forgetting its weights) also has bounded twin-width.
One can first contract every $G_i$ into single vertices, thus obtaining the (black) anti-matching on $t$ edges (i.e., the bipartite complement of $t$ independent edges), which itself has twin-width~2.
Thus a polynomial kernel would imply the unlikely containment NP $\subseteq$ co-NP/poly~\cite{Bodlaender09}.
It is not so satisfactory that the lower bound is for \textsc{Weighted Dominating Set}, while the twin-width is computed on the unweighted graph.
It turns out that the same negative result is attainable for \textsc{Dominating Set} but the reduction is far more involved.
Thus we will not sketch it here.

\subsection{Better approximation algorithms}\label{subsec:conclusion-app}

We ask for the approximability status of \lmis, \ds, and \textsc{Min Coloring} on bounded twin-width graphs (given with $d$-sequences).

One can observe that the arguments of~\cref{subsec:mis-barrier} show that a $\log^c n$-approximation algorithm for \mis (for some constant $c$) implies a $\log^{\varepsilon} n$-approximation for any $\varepsilon > 0$.
We let the reader decide if this is a sign that $\log^c n$-approximation algorithms are unlikely.
Approximation algorithms of \mis on bounded twin-width graphs with worst ratios (for instance $n^\varepsilon$ for every $\varepsilon > 0$) would also be interesting, as they are far from existing in general graphs.
For \ds on bounded twin-width graphs, we ask for a constant-approximation algorithm with ratio independent on the twin-width bound, or even for a PTAS.
For~\textsc{Min Coloring}, we ask for any improvement over our $2^{O(\text{OPT})}$-approximation algorithm.
A first step is to reach approximation factor $\text{OPT}^{O(1)}$.
While we do not see any obvious obstruction to an $O_d(1)$-approximation, a PTAS is ruled out by the 3 vs 4 hardness of \textsc{Coloring} in planar graphs (class for which $d$-sequences can be computed in polynomial time~\cite{twin-width1}). 

\subsection{Exact exponential algorithms}\label{subsec:exact-exp}

A possible algorithmic success for a novel graph invariant, like twin-width, is to eventually lead to (faster) algorithms on general graphs, and not merely on graphs where the invariant is bounded.
A natural way this happens (for instance for treewidth) is by a win-win argument.
Either the parameter is small and we exploit it, or it is large, and some complex structure appears, which actually helps our decision.

But win-win arguments are not the only way.
Algorithms initially designed for bounded twin-width graphs may turn out also interesting on general graphs.
We see \cref{thm:ctk-ct} as a promising starting point to get exact exponential algorithms for \lmis on general graphs.
This asks for a new game related to, but also fundamentally different from twin-width.
Can we find a contraction sequence for any $n$-vertex graph such that the total number of connected sets in the red graphs is at most $O^*(c^n)$ for some constant $c$?
(Showing this result with $c=1.19$ would improve the current best exact algorithm for \mis.)
Note that creating vertices with large red degree is no longer forbidden.

\end{document}